\newif\iftwocolumn
\newif\ifonecolumn
\newif\iflncs
\newif\ifccs
\newif\ifanonymous
\newif\iftoday
\todaytrue
\lncstrue

\ifccs
  \twocolumntrue
\fi
\iflncs
  \onecolumntrue
\fi

\iflncs
\documentclass[runningheads]{llncs}
\fi
\ifccs
\documentclass[sigconf, anonymous]{acmart}
\input{ccs-preamble}
\fi

\usepackage{preamble}
\usepackage{geometry}
\geometry{a4paper,total={170mm,237mm},left=32mm,right=32mm}

\begin{document}
\title{
  \podtm: An Optimal-Latency, Censorship-Free, \\and Accountable Generalized Consensus Layer
}
\ifanonymous{\iflncs
\fi}
\else
\author{
  Orestis Alpos\inst{1}\and
  Bernardo David\inst{1,2}\and
  Jakov Mitrovski\inst{1,3}\and\\
  Odysseas Sofikitis\inst{1,4}\and
  Dionysis Zindros\inst{1,4}
}
\iflncs
\institute{
  Common Prefix\and
  IT University of Copenhagen (ITU)\and
  Technical University of Munich\and
  \podtm network
}
\else
\affiliation{
\institution{
}
}
\fi
\fi

\iflncs
\maketitle
\fi

\iftoday
\noindent
\fi

\begin{abstract}
This work addresses the inherent issues of high latency in blockchains and low scalability in traditional consensus protocols.
We present \podtm, a novel notion of consensus whose first priority is to achieve the physically-optimal latency of $2\delta$, or one round-trip, \textit{i.e.}, requiring only one network trip (duration $\delta$) for writing a transaction and one for reading it.

To accomplish this, we first eliminate inter-replica communication. Instead, clients send transactions directly to all replicas, which independently process transactions and append them to local logs.
Replicas assigns a timestamp and a sequence number to each transaction in their logs, allowing clients to extract valuable metadata about the transactions and the system state. Later on, clients retrieve these logs and extract transactions (and associated metadata) from them.

Necessarily, this construction achieves weaker properties than a total-order broadcast protocol, due to existing lower bounds.
Our work models the primitive of \podtm and defines its security properties.
We then show \podtm-core, a protocol that satisfies properties such as transaction confirmation within $2\delta$, censorship resistance against Byzantine replicas, and accountability for safety violations. We show that single-shot auctions can be realized using the \podtm notion and observe that it is also sufficient for other popular applications.

\end{abstract}

\ifccs
\input{ccs-keywords}
\maketitle
\fi

\section{Introduction}
Despite the widespread adoption of blockchains, a significant challenge remains unresolved: they are inherently slow. The latency from the moment a client submits a transaction to when it is confirmed in another client's view of the blockchain can be prohibitively long for certain applications. Notice that we define latency in terms of the blockchain \emph{liveness} property, referring to finalized, non-reversible outputs: once a transaction is received by a reader, it remains in the protocol's output permanently.
Moreover, we do not assume ``optimistic'' or ``happy path'' scenarios, where transactions might finalize faster under favorable conditions (such as having honest leaders or optimal network conditions).

Indeed, Nakamoto-style blockchain protocols require a large number of rounds in order to achieve consensus on a new block, even when considering the best known bounds~\cite{DBLP:conf/crypto/GaziRR23}. On the other hand, it is known that permissioned protocols for $n$ parties (out of which $t$ are corrupted) realizing traditional notions of broadcast and Byzantine agreement require at least $t+1$ rounds in the synchronous case~\cite{DBLP:journals/ipl/AguileraT99} and at least $2n/(n-t)$ rounds in the asynchronous case~\cite{DBLP:conf/focs/GarayKKO07}, even when allowing for digital signatures and probabilistic termination.

In a model where \emph{replicas} maintain the network, \emph{writers} submit transactions, and \emph{readers} read the network, the minimum latency is one network round trip, or $2\delta$, letting $\delta$ denote the actual network delay, as the information must travel from the writers to the replicas and then to the readers. More importantly, we want that any transaction from an honest writer appears in the output of honest readers within $2\delta$ time, regardless of the current value of $\delta$ and corrupted parties' actions.
In this context, this work is motivated by the following question.


\begin{center}
	\emph{Can we realize tasks that blockchains are commonly used for with optimal latency?}
\end{center}

We give a positive answer to this question with a protocol realizing \podtm, a new notion of consensus that trades off traditional agreement properties for optimal latency, while retaining sufficient security guarantees to realize important tasks (\textit{e.g.}, decentralized auctions).

\subsection{Our Contributions}
In order to motivate the notion of \podtm, we first introduce the architecture of our protocol, \podtm-core, which realizes this notion. To achieve the single-round-trip latency, our first key design decision is to eliminate inter-replica communication entirely. Instead, writers send their transactions directly to all replicas. Each replica maintains its own \emph{replica log}, processes incoming transactions independently, and transmits its log to readers on request. Readers then process these replica logs to extract transactions and relevant associated information. See Figure~\ref{fig:architecture} for a summary of the \podtm-core architecture.

\begin{figure}[h]
    \centering
    \includegraphics[width=0.27\textwidth]{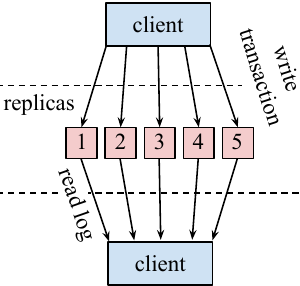}
    \caption{\podtm-core's simple architecture. A writing client (top) sends a transaction to all replicas (middle). Each replica appends it to its own log and transmits it to the reading client (bottom).}
    \label{fig:architecture}
\end{figure}

This design raises two important questions. First, what meaningful information can readers derive from replica logs when replicas operate in isolation? Second, given that in two rounds even randomized authenticated broadcast is proven impossible~\cite{DBLP:conf/focs/GarayKKO07}, what capabilities can this -- necessarily weaker -- primitive offer? We demonstrate that, by incorporating simple mechanisms, such as assigning timestamps and sequence numbers to transactions, replicas can enable readers to extract valuable information beyond mere low-latency guarantees. Furthermore, we show how the properties of \podtm can enable various applications, including auctions (as shown in \Cref{sec:auctions}).

Specifically, a secure \podtm delivers the following guarantees (formally defined in \Cref{sec:pod-model}):
\begin{itemize}
    \item Transaction confirmation within $2\delta$, with each transaction assigned a \emph{\confirmed round}: we say that the transaction becomes \emph{confirmed} at the time indicated by the \emph{confirmed round}.
    \item Censorship resistance when facing up to $\beta$ Byzantine and $\gamma$ omission-faulty replicas, ensuring all confirmed transactions appear in every honest reader's output.
    \item A \emph{past-perfect round} can be computed by readers, such that the reader is guaranteed to have received all transactions that are or will be confirmed prior to this round, even though not all transactions are strictly ordered.
    \item Accountability for all safety violations, \textit{i.e.}, if any safety property is violated, at least $\beta + 1$ replicas can be identified as misbehaving.
\end{itemize}

In particular, our Protocol \podtm-core, presented in \Cref{sec:pod-core-const}, realizes the notion of \podtm with the properties above, supporting a continuum of two adversarial models: up to \beta Byzantine replicas and up to \gamma omission-faulty replicas, out of a total of $n > 5 \beta + 3 \gamma$ replicas.
Protocol \podtm-core requires no expensive cryptographic primitives or setup beyond digital signatures and a PKI registering replicas' public keys. We showcase \podtm-core's efficiency by means of experiments with a prototype implementation presented in \Cref{sec:evaluation}. Our experiments show that even with 1000 replicas distributed around the world, the latency achieved by our protocol is just under double (resp. about 5 times) the round-trip time between writer and reader clients with security against omission-faulty (resp. Byzantine) replicas.

\subsection{Technical Overview}
We consider that time proceeds in \emph{rounds}, and that parties (replicas and clients) know the current round, so we can express \emph{timestamps} in terms of rounds. The output of \podtm associates each transaction $\tx$ with timestamp values $\rmin \geq 0$ (minimum round), $\rmax \leq \infty$ (maximum round) and $\rconf$ (\emph{confirmed round}). We call these values the \emph{trace} of \tx, and they evolve over time.
Initially we have $\rconf=\perp$ but later we get $\rconf \neq \perp$, when a transaction is \emph{confirmed}. The protocol guarantees \emph{confirmation within $u$ rounds}, meaning that, at most $u$ rounds after $\tx$ was written, every party who reads the \podtm will see $\tx$ as confirmed with some $\rconf \neq \perp$.
The protocol also guarantees that $\rmin \leq \rconf \leq \rmax$, a property we call \emph{confirmation bounds}:
while each party reads different values $\rmin,\rmax,\rconf$ for the same $\tx$, \podtm guarantees that values read by different parties stay within these limits.

When clients \emph{read} the \podtm, they obtain a $\podtm$ data structure $\Pod = (\txSet, \rperf)$, where $\txSet$ is the set of transactions and their traces and $\rperf$ is a \emph{past-perfect} round. The \emph{past-perfection} safety property guarantees that
\txSet contains \emph{all} transactions that every other honest party will ever read with a confirmed round smaller than $\rperf$.  A \podtm also guarantees \emph{past-perfection within $w$}, meaning that $\rperf$ is at most $w$ rounds in the past.


In summary, \podtm provides \emph{past-perfection} and \emph{confirmation bounds} as safety properties, ensuring parties cannot be blindsided by transactions suddenly appearing as confirmed too far in the past, and that the different (and continuously changing) transaction timestamps stay in a certain range. The liveness properties of \emph{confirmation within $u$} and \emph{past-perfection within $w$} ensure that new transactions get confirmed within a bounded delay, and that each party's past-perfect round must be constantly progressing.

Besides introducing the notion of \podtm, we present protocol \podtm-core, which realizes this notion while requiring minimal interaction among parties and achieving optimal latency, \textit{i.e.}, optimal parameters $u=2\delta$ and $w=\delta$, where $\delta$ is the current network delay (not a delay upper bound, which we assume to be unknown). Our construction relies on a set of $n$ \emph{replicas} to maintain a \podtm data structure, which can be read by an unknown number of clients. The only communication is between each client and the replicas, not among clients nor among replicas.

Writing a transaction $\tx$ to \podtm-core only requires clients to send $\tx$ to the replicas, who each assign a timestamp $\tsp$ (their current time) and a \emph{sequence number} \sn to $\tx$ and return a signature on $(\tx,\tsp, \sn)$. When reading the \podtm, the client simply requests each replica's log of transactions, validates the responses, and determines $\rmin$ and $\rmax$ from the received timestamps. If the client receives responses from enough replicas, $\rconf$ is determined by taking the median of the timestamps received from these replicas.

Protocol \podtm-core supports a continuum of mixed adversarial models, tolerating up to \beta Byzantine \emph{and} at the same time up to \gamma additional omission-faulty replicas.

\subsubsection{Applications.}
The efficiency of \podtm has the potential to allow for a plethora of distributed applications to be implemented with low latency. In \Cref{sec:auctions} we show how auctions can be run on top of \podtm, achieved through \bidsettm, a new primitive for collecting a set of bids in a censorship resistant manner. It is straightforward to realize single-shot open bid auctions using our \bidsettm primitive based on \podtm. We also conjecture that protocols for distributed sealed bid auctions based on public bulletin board can also be recast over this primitive. Moreover, we conjecture that consensusless payment systems, such as Fastpay~\cite{DBLP:conf/aft/BaudetDS20}, can also be easily realized over \podtm.


\subsection{Related work}\label{app:related}

\dotparagraph{Reducing latency}
Many previous works have lowered the latency of ordering transactions.
HotStuff~\cite{DBLP:conf/podc/YinMRGA19} uses three rounds of all-to-leader and leader-to-all communication pattern, which results in a latency (measuring from the moment a client submits a transaction until in appears in the output of honest replicas) of $8\delta$ in the happy path.
Jolteon~\cite{DBLP:conf/fc/GelashviliKSSX22}, Ditto~\cite{DBLP:conf/fc/GelashviliKSSX22}, and HotStuff-2~\cite{DBLP:journals/iacr/MalkhiN23} are two-round versions of HotStuff with end-to-end latency of $5\delta$.
MoonShot~\cite{DBLP:journals/corr/abs-2401-01791} allows leaders to send a new proposal every $\delta$ time, before receiving enough votes for the previous one, but still achieves an end-to-end latency of $5\delta$.
In the ``DAG-based'' line of word, Tusk~\cite{DBLP:conf/eurosys/DanezisKSS22} achieves and end-to-end latency of $7\delta$,
the partially-synchronous version of BullShark~\cite{DBLP:journals/corr/abs-2209-05633} an end-to-end latency of $5\delta$,
and Mysticeti~\cite{DBLP:journals/corr/abs-2310-14821} an end-to-end latency of $4\delta$.
All these protocols aim at total-order properties and have their lower latency is inherently restricted by lower bounds, whereas \podtm starts from the single-round-trip latency requirement and explores the properties that can be achieved.


\dotparagraph{Auctions}
The \podtm notion offers the \emph{past-perfection} property:
a \opread{} operation outputs a timestamp \rperf, and it is \emph{guaranteed}
that the output of \opread{} contains all transactions that can ever be confirmed with a timestamp smaller than \rperf in the view of any reading client, regardless of the network conditions.
This implies that reading clients (such as an auctioneer) cannot claim not having received a transaction when reading the \podtm, as this is detectable by any other client who reads the \podtm.
To the best of our knowledge, previous work in the consensusless literature has not considered or achieved this property,
hence it cannot readily support auctions.

\dotparagraph{Consensusless payments}
The redundancy of consensus for implementing payment systems has been recognized by previous works~\cite{DBLP:journals/dc/GuerraouiKMPS22,DBLP:journals/corr/abs-1909-10926,DBLP:conf/dsn/CollinsGKKMPPST20,DBLP:conf/aft/BaudetDS20}.
The insight is that total transaction order is not required in the case that each account is controlled by one client.
Instead, a partial order is sufficient, ensuring that, if transactions $\tx_1$ and $\tx_2$ are created by the same client, then every party outputs them in the same order.
This requirement was first formalized by Guerraoui~\etal~\cite{DBLP:journals/dc/GuerraouiKMPS22} as the \emph{source-order property}.
The constructions of Guerraoui~\etal~\cite{DBLP:journals/dc/GuerraouiKMPS22} and FastPay~\cite{DBLP:conf/aft/BaudetDS20} require clients to maintain sequence numbers.
ABC~\cite{DBLP:journals/corr/abs-1909-10926} requires clients to reference all previous transaction in a DAG (including its own last transaction).
Cheating clients might lose liveness~\cite{DBLP:conf/aft/BaudetDS20,DBLP:journals/dc/GuerraouiKMPS22,DBLP:journals/corr/abs-1909-10926}, but equivocating is not possible.

\section{Preliminaries}

\dotparagraph{Notation}
We denote by \BN the set of natural numbers including $0$.
Let $L$ be a sequence, we denote by $L[i]$ the $i^\text{th}$ element (starting from $0$),
and by $|L|$ its length.
Negative indices address elements from the end, so $L[-i]$ is the $i^\text{th}$ element from
the end, and $L[-1]$ in particular is the last.
The notation $L[i{:}]$ means the
subarray of $L$ from $i$ onwards, while $L[{:}j]$ means the subsequence of $L$ up to (but not including) $j$.
We denote an empty sequence by $[\,]$.
We denote the concatenation of sequences $L_1$ and $L_2$ by $L_1 \concat L_2$.

\subsection{Execution Model}\label{sec:system}

\dotparagraph{\textbf{Parties}}
We consider $n$ \emph{replicas} $\replicas=\{\replicas_1,\ldots,\replicas_n\}$ and an unknown number of \emph{clients}. Parties are \emph{stateful}, \textit{i.e.}, store \emph{state} between executions of different algorithms.
We assume that replicas are known to all parties and register their public keys (for which they have corresponding secret keys) in a Public Key Infrastructure (PKI). Clients do not register keys in the PKI.

\dotparagraph{\textbf{Adversarial Model}}
We call a party (replica or client) \emph{honest}, if it follows the protocol, and \emph{malicious} otherwise.
We assume \emph{static corruptions},  \textit{i.e.}, the set of malicious replicas is decided before the execution starts and remains constant.
This work uses a combination of two adversarial models, the \emph{Byzantine} and the \emph{omission} models.
In the \emph{Byzantine} model, corrupted replicas are malicious and may deviate arbitrarily from the protocol. The adversary has access to the internal state and secret keys of all corrupted parties. We denote by $\beta \in \BN$ the  number of Byzantine replicas in an execution.
The Byzantine adversary is modelled as a probabilistic polynomial time overarching entity that is invoked in the stead of every corrupted party.
That is, whenever the turn of a corrupted party comes to be invoked by the environment, the adversary is invoked instead.
In the \emph{omission} model, corrupted replicas may only deviate from the protocol by dropping messages that they were supposed to send, but follow the protocol otherwise. Observe that this includes crash faults, where replicas crash (\textit{i.e.} stop execution) and remain crashed until the end of the execution of an algorithm. We denote by $\gamma \in \BN$ the number of omission-faulty replicas in an execution.



\dotparagraph{\textbf{Modeling time}}
We assume that time proceeds in discrete \emph{rounds}, and parties have clocks allowing them to determine the current round.
For any two honest parties, their clocks can be at most $\phi$ rounds apart.
For simplicity, our analysis will assume \emph{synchronized clocks}, that is, $\phi = 0$.
Notice that although we assume synchronized clocks as a setup, clock synchronization can be achieved in partially synchronous networks~\cite{psync} using existing techniques~\cite{clockoverview}, also in the case where replicas gradually join the network~\cite{bootclocksync}.
By \emph{timestamp} we refer to a round number assigned to some event.

\dotparagraph{\textbf{Modeling network}}
We denote by $\delta \in \CN$ the actual delay (measured in number of rounds) it takes to deliver a message between two honest parties,
a number which is \emph{finite} but \emph{unknown} to all parties.
We denote by $\Delta \in \CN$ an upper bound on this delay, \textit{i.e.}, $\delta \leq \Delta$, which is also \emph{finite}.
In the \emph{synchronous} model, $\Delta$ is \emph{known} to all parties.
In the \emph{partially synchronous} model~\cite{psync}, $\Delta$ is \emph{unknown} but still finite, \textit{i.e.}, all messages are eventually delivered.
A protocol is called \emph{responsive} if it does not rely on knowledge of $\Delta$ and its liveness guarantees depend only on the actual network delay $\delta$.

\subsection{Cryptographic primitives}
\dotparagraph{\textbf{Digital Signatures}} We assume that replicas (and auctioneers in \bidsettm-core) authenticate their messages with digital signatures. A digital signature scheme is a triple of algorithms satisfying the EUF-CMA security~\cite{GMR88} as defined below:
		\begin{itemize}
	    	\item $\textsl{KeyGen}(1^{\kappa})$: The key generation algorithm takes as input a security parameter $\kappa$ and outputs a secret key $\sk$ and a public key $\pk$.
	        \item
	            $\opSign{\sk, m} \rightarrow \sig$:
	            The signing algorithm takes as input a private key
	            \sk and a message $m \in \zo^*$ and returns a signature $\sig$.
	        \item
	            $\opVerify{\pk, m, \sig} \rightarrow b \in \zo$:
	            The verification algorithm takes as input a public
	            key $\pk$, a message $m$, and a signature $\sig$, and outputs a bit $b\in \zo$.
	    \end{itemize}
We say \sig is a \emph{valid} signature on $m$ with respect to \pk if $\opVerify{pk, m, \sig} = 1$.

%
%


\subsection{Accountable safety}\label{sec:accountable-safety}
Taking a similar approach as Neu, Tas, and Tse~\cite[Def. 4]{FC:NeuTasTse22}, we define \emph{accountable safety} through an \emph{identification function}.

\begin{definition}[Transcript and partial transcript]    
    We define as \emph{transcript} the set of all network messages sent by all parties in an execution of a protocol.
    A \emph{partial transcript} is a subset of a transcript.
\end{definition}

\begin{definition}[\beta-Accountable safety]\label{def:accountable-safety}
    A protocol satisfies \emph{accountable safety} with \emph{resilience} \beta if its interface contains a function $\identify{\tran} \rightarrow \cheaters$,
    which takes as input a partial transcript $\tran$ and outputs a set of replicas $\cheaters \subset \replicas$, such that the 
    following conditions hold except with negligible probability.
    \begin{description}
        \item[Correctness:] If safety is violated, then there exists a partial transcript \tran, such that $\identify{\tran} \rightarrow \cheaters$ and $|\cheaters| > \beta$.
        \item[No-framing:] For any partial transcript \tran produced during an execution of the protocol, the output of $\identify{\tran}$ does not contain honest replicas.
    \end{description}
\end{definition}

\begin{remark}
    For the sake of simplicity, we have defined the transcript based on messages sent by \emph{all} replicas.
    We can also define a \emph{local transcript} as the set of messages observed by a single party.
    As will become evident from the implementation of \identify{}, in practice, adversarial behavior can be identified from the local transcripts of a single party or  of a pair of parties.
\end{remark}
\section{Modeling \podtm} \label{sec:pod-model}
In this section, we introduce the notion of a \podtm, a distributed protocol where clients can \emph{read} and \emph{write} transactions.
We first define basic data structures and the interface of a \podtm protocol.

\begin{definition}[Transaction trace and trace set]
    The \emph{transaction trace} of a transaction $\tx \in \zo^*$ is a tuple containing the values $(\tx, \rmin, \rmax, \rconf)$, which change during the execution of a \podtm protocol.
    We call $\rmin \in \BN$ the \emph{minimum round}, $\rmax \in \BN \cup \{\infty\}$ the \emph{maximum round},
    $\rconf \in \BN \cup \{\bot\}$ the \emph{\confirmed round}.
    We denote by $\rmax=\infty$ an unbounded maximum round and by $\rconf=\bot$ an undefined \confirmed round.
    We also denote these values as $\tx.\rmin, \tx.\rmax$, and $\tx.\rconf$.
    A \emph{trace set} $\txSet$ is a set of transaction traces $\{(\tx, \rmin, \rmax, \rconf) \mid \tx \in \zo^*\}$.
\end{definition}

\begin{definition}[Confirmed transaction]
    A transaction with confirmed round \rconf is called \emph{\confirmed} if $\rconf \neq \bot$, and \emph{unconfirmed} otherwise.
\end{definition}


\begin{definition}[Pod data structure]
    A \emph{pod data structure} $\Pod$ is a tuple $(\txSet, \rperf)$, where \txSet is a trace set and \rperf is a round number called the \emph{past-perfect round}.
\end{definition}
We denote the components of a pod data structure as $\Pod.\txSet$ and $\Pod.\rperf$.
We write $\tx \in \Pod.\txSet$ if an entry $(\tx, \cdot, \cdot, \cdot)$ exists in $\Pod.\txSet$.
We remark that transactions in \txSet may be \confirmed on unconfirmed. Moreover, \rperf will be used to define a completeness property on \txSet (the \emph{past-perfection} property of \podtm).

\begin{definition}[Auxiliary data]
    We associate with a pod data structure $\Pod$ some auxiliary data \cert, which will be used to validate \Pod.
    The exact implementation of \cert is irrelevant for the definition of \podtm;
    however, it may be helpful to mention that in \podtm-core it will be a tuple $\cert = (\ppcert, \alltxcert)$.
    $\ppcert$ will be called the \emph{past-perfection certificate} and \alltxcert will be a map from each transaction \tx in $\Pod.\txSet$ to a \emph{transaction certificate} \txcert for \tx.
    Both will contain digital signatures.
\end{definition}

\begin{definition}[Interface of a \podtm]\label{def:podinterface}
A \podtm protocol has the following interface.
\begin{itemize}
    \item $\opwrite{\tx}$:
    It writes a transaction \tx to the \podtm.

    \item $\opreadall{} \returns (\Pod, \cert)$:
    It outputs a pod data structure $\Pod = (\txSet, \rperf)$ and auxiliary data \cert.
\end{itemize}
\end{definition}
We say that a client \emph{reads the} \podtm when it calls \opread{}.
If $\tx$ appears in $\txSet$,
 we say that the client \emph{observes} \tx and, if $\tx.\rconf \neq \bot$, we say that the client \emph{observes \tx as \confirmed}.

\begin{definition}[Validity function]
    Apart from its interface functions,
    a \podtm protocol also specifies a computable, deterministic, and non-interactive function \opValid{\Pod, \cert} that takes as input a pod data structure $\Pod$ and auxiliary data $\cert$ and outputs a boolean value. We say that a pod data structure $\Pod$ is \emph{valid} if $\opValid{\Pod, \cert} = \true$.
\end{definition}


\begin{definition}[View of the \podtm]
    We call \emph{view of the} \podtm and denote by \podView{\client}{\round} the data structure returned by \opreadall{}, where \opreadall{} is invoked by client \client and the output is produced at round \round.
    We remark that \round denotes the round when \opreadall{} outputs, as the client may have invoked it at an earlier round.
\end{definition}

We now introduce the basic definition of a \emph{secure} \podtm protocol, as well as some additional properties (\emph{timeliness} and \emph{monotonicity}) that it may satisfy, which we later use for some applications.

\begin{definition}[Secure \podtm]\label{def:podsec}
A protocol is a \emph{secure} \podtm if it implements the \podtm interface of Definition~\ref{def:podinterface} and specifies a validity function \opValid{}, such that the following properties hold.
\begin{description}[leftmargin=4mm, labelindent=4mm]
    \item[(Liveness) Completeness:]
    Honest clients always output a valid pod data structure.
    That is, if \opread{} returns $(\Pod, \cert)$ to an honest client, then $\opValid{\Pod, \cert} = \true$.

    \item[(Liveness) Confirmation within \parConf:]
	Transactions of honest clients become \confirmed after at most \parConf rounds.
	Formally, if an honest client $\client$ writes a transaction \tx at round \round,
    then for any honest client $\client'$ (including $\client=\client'$) it holds that $\tx \in \podView{\client'}{\round + \parConf}$ and $\tx.\rconf \neq \bot$.

	\item[(Liveness) Past-perfection within \parPerf:]
	Rounds become past-perfect after at most $\parPerf$ rounds.
	Formally,
    for any honest client \client and round $\round \geq \parPerf$, it holds that $\podView{c}{\round}.\rperf \geq \round - \parPerf$.

	\item[(Safety) Past-perfection:]
    A valid pod \Pod contains all transactions that may ever obtain a \confirmed round smaller than $\Pod.\rperf$. Formally,
	the adversary cannot output $(\Pod_1, \cert_1)$ and $(\Pod_2, \cert_2)$ to the network, such that $\opValid{\Pod_1, \cert_1} \land \opValid{\Pod_2, \cert_2}$ and there exists a transaction $\tx$ such that $(\tx, \rmin^1, \rmax^1, \rconf^1) \not\in \Pod_1.\txSet$ and $(\tx, \rmin^2, \rmax^2, \rconf^2) \in \Pod_2.\txSet$ and $\rconf^2 \neq \bot$ and $\rconf^2 < \Pod_1.\rperf$.

	\item[(Safety) Confirmation bounds:]
    \sloppy{
    The values $\rmin$ and $\rmax$ bound the \confirmed round that a transaction may ever obtain. Formally,
    the adversary cannot output $(\Pod_1, \cert_1)$ and $(\Pod_2, \cert_2)$ to the network, such that $\opValid{\Pod_1, \cert_1} \land \opValid{\Pod_2, \cert_2}$ and there exists a transaction $\tx$ such that $(\tx, \rmin^1, \rmax^1, \rconf^1) \in \Pod_1.\txSet$ and $(\tx, \rmin^2, \rmax^2, \rconf^2) \in \Pod_2.\txSet$ and $\rmin^1 > \rconf^2$ or $\rmax^1 < \rconf^2$.
    }

\end{description}
\end{definition}

The \emph{confirmation bounds} property gives $\rmin^1 \leq \rconf^2 \leq \rmax^1$, for $\rmin^1, \rmax^1, \rconf^2$ computed by honest clients, but it does not guarantee anything about the values of $\rmin^1$ and $\rmax^1$ (for example, it could trivially be $\rmin^1=0$ and $\rmax^1=\infty$).
To this purpose we define an additional property of \podtm, called \emph{timeliness}.
Previous work has observed a similar property as orthogonal to safety and liveness~\cite{EPRINT:TziSriZin23}.

\begin{definition}[\podtm \parTemp-timeliness for honest transactions]\label{def:pod-timely}
    A \podtm protocol is \emph{\parTemp-timely} if it is a secure \podtm, as per Definition~\ref{def:podsec}, and for any honest clients $\client_1,\client_2$, if $\client_1$ writes transaction \tx in round $\round$ and $\client_2$ has view $\podView{\client_2}{\round'}$ in round $\round'$, such that $(\tx, \rmin, \rmax, \rconf) \in \podView{\client_2}{\round'}.\txSet$, then:
    \begin{enumerate}
        \item $\rconf \in (\round, \round + \parTemp]$
        \item $\rmax \in (\round, \round + \parTemp]$
        \item $\rmax - \rmin < \parTemp$, implying that $\rmin \neq 0$ and $\rmax \neq \infty$.
    \end{enumerate}
\end{definition}

Moreover, a \podtm protocol allows the values \rmin, \rmax, \rconf to change during an execution -- for example, clients in construction \podtm-core will update them when they receive votes from replicas.
The properties we have defined so far do not impose any restriction on how they evolve.
For this reason, in Appendix~\ref{app:monotonicity} we define the additional property of \podtm \emph{monotonicity}.

We conclude this section with some visual examples in Figures \ref{fig:temperature} and \ref{fig:example-output}.

\begin{figure}[ht]
    \begin{minipage}[b]{.48\textwidth}
    \centering
    \includegraphics[width=1\textwidth]{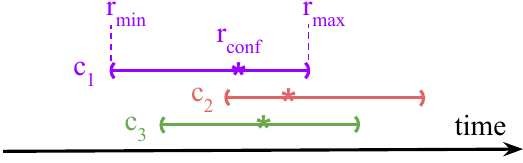}
    \caption{The same transaction in the view of three different \podtm clients. Each client assigns it a minimum round \rmin and a maximum round \rmax. If it gets confirmed, the confirmation round \rconf will be
     between these two values. The \rconf that each client locally computes respects the bounds of each other client.}
     \label{fig:temperature}
    \end{minipage}
    \hfill
    \begin{minipage}[b]{.49\textwidth}
    \centering
    \includegraphics[width=1\textwidth]{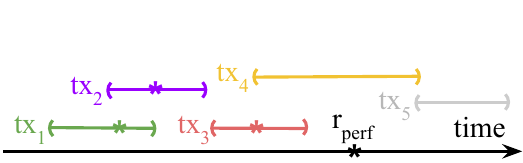}
    \caption{A possible view of a single \podtm client.
    Transactions $\tx_1, \tx_2, \tx_3$ are confirmed, $\tx_4$ is not yet confirmed.
    A client also derives a past-perfect round $\rperf$. No transaction other than $\tx_1, \tx_2, \tx_3, \tx_4$ may  obtain $\rconf \leq \rperf$. There may exist $\tx_5$ for which the client has not received votes, but $\tx_5$ cannot obtain $\rconf \leq \rperf$.
    }
    \label{fig:example-output}
    \end{minipage}
\end{figure}


\section{Protocol \podtm-core}\label{sec:pod-core-const}

Before we present protocol \podtm-core, we define basic concepts and structures.

\begin{definition}[Vote]
    A \emph{vote} is a tuple $\vote = (\tx, \tsp, \sn, \sig, \rep)$, where \tx is a transaction, \tsp is a timestamp, \sn is a sequence number, \sig is a signature, and  \rep is a replica.
    A vote is \emph{valid} if \sig is a valid signature on message $m = (\tx, \tsp, \sn)$ with respect to the public key $\pk_\rep$ of replica \rep.
\end{definition}




\begin{remark}[Processing votes in order]
    We require that clients process votes from each replica in the same order, namely in order of increasing timestamps. For this we employ \emph{sequence numbers}. Each replica maintains a sequence number, which it increments and includes every time it assigns a timestamp to a transaction.
\end{remark}
\begin{remark}[Implicit session identifiers]
We assume that all messages between clients and replicas are concatenated with a session identifier (sid), which is unique for each concurrent execution of the protocol. Moreover, the sid is implicitly included in all messages signed by the replicas.
\end{remark}
\begin{remark}[Streaming construction]
    The client protocol we show in \Cref{const:pod-core} is \emph{streaming}, that is, clients maintain a connection to the replicas, and \emph{stateful}, that is, they persist their state (received transactions and votes) across all invocations of \opwrite{} and \opreadall{}.
\end{remark}

\dotparagraph{Past-perfection and transaction certificates}
In \podtm-core, clients store certain votes which they output upon \opreadall{} as part of the \emph{certificate} \cert, which will be used to prove the validity of the returned \Pod and for accountability in case of safety violations.
Specifically, \cert consists of two parts, $\cert = (\ppcert, \alltxcert)$: the \emph{past-perfection certificate} \ppcert contains, for each replica, the vote on the most recent timestamp received from that replica. It is implemented as a map from replicas to votes, i.e., $\ppcert: \rep \rightarrow \vote$.
The \emph{transaction certificate} \alltxcert contains, for each transaction, all valid votes received for it.
It is implemented as a map from transactions to a map from replicas to votes, i.e., $\alltxcert: \tx \rightarrow \txcert$ and $\txcert: \rep \rightarrow \vote$.
We remark that $\ppcert$ can be derived by taking the union of certificates $\txcert$ for all transactions and keeping the most recent vote for each replica, but we define $\ppcert$ explicitly for clarity and readability.

\dotparagraph{Pseudocode notation}
The notation `\textbf{require} $P$' causes a function to terminate immediately and return $\false$ if $P$ evaluates to $\false$.
Notation `\textbf{upon} $e$' causes a block of code to be executed when event $e$ occurs.
Notations `$\msg{MSG}{} \leftarrow p$' and `$\msg{MSG}{} \rightarrow p$' denote receiving and sending a message $MSG$ from and to party $p$, respectively.
Finally, $x: a \in A \rightarrow b \in B$ denotes that variable $x$ is a map from elements of type $A$ to elements of type $B$. When obvious from the context, we do not explicitly write the types $A$ or $B$.
For a map $x$, the operations $x.\opKeys$ and $x.\opValues$ return all keys and all values in $x$, respectively.
With $\emptyset$ we denote an empty map.

\begin{protocol}[\podtm-core]\label{const:pod-core}
     Protocol \podtm-core is executed by $n$ replicas that follow the steps of \Cref{alg:pod-core-replica} and an unknown number of clients that follow the steps of Algorithms \ref{alg:client-1} and \ref{alg:client-statistics} with parameters \beta, \gamma and \alpha, where \beta denotes the number of Byzantine replicas and \gamma the number of omission-faulty replicas (in addition to the Byzantine) and $\alpha = n - \beta - \gamma$ is the number of honest replicas.
\end{protocol}

\subsection{Replica code}

\begin{algorithm}[ht!]
    \caption{Protocol \podtm-core: Code for a replica $\rep_i$, where \sk denotes its secret signing key.}
    \label{alg:pod-core-replica}
    \begin{algorithmic}[1]
        \State{$\allClients$ }      \Comment{\text{The set of all connected clients}}\label{line:rep-state-begin}
        \State{\nextsn}             \Comment{\text{The next sequence number to assign to votes}}
        \State{$\repLog$}           \label{line:rep-state-end} \Comment{\text{The transaction log or the replica}}
        \Statex
        \vspace{-1mm}

        \On{\opInit{}} \Comment{Called once when the replica is initialized}
            \State{$\allClients \gets \emptyset ;\; \nextsn \gets 0 ;\; \repLog \gets [ \, ]$}\label{line:rep-init-clients}
        \EndOn
        \Statex
        \vspace{-1mm}

        \On{$\msgconnect{} \leftarrow \client$} \label{line:rep-receive-connect-begin} \Comment{Called when a new client \client connects to the replica}
            \Let{\allClients}{\allClients \cup \{\client\}}
            \For{$(\tx, \tsp, \sn, \sigma) \in \repLog$}
                \State{$\msgrecord{(\tx, \tsp, \sn, \sigma, \rep_i)} \rightarrow \client$}
            \EndFor
        \EndOn\label{line:rep-receive-connect-end}
        \Statex
        \vspace{-1mm}

        \On{$\msgwrite{\tx} \leftarrow \client$} \label{line:rep-receive-write-begin} \Comment{Called when a client \client writes a transaction \tx}
            \State{\textbf{if} $\repLog[\tx] \neq \bot $ \textbf{then return}} \Comment{Ignore duplicate transactions}
            \State{\opSendVote{\tx}}
        \EndOn\label{line:rep-receive-write-end}
        \Statex

        \vspace{-1mm}
        \Function{\opSendVote{\tx}}{}
            \State{$\tsp \gets \opTime ;\; \sn \gets \nextsn ;\; \sig \gets \opSign{\sk, (\tx, \tsp, \sn)}$} \label{line:rep-assign-tsp}  \Comment{\text{\opTime{} returns the current round}}
            \Let{\repLog}{\repLog \concat (\tx, \tsp, \sn, \sig)}
            \For{$\client \in \allClients$}
                \State{$\msgrecord{(\tx, \tsp, \sn, \sig, \rep_i)} \rightarrow \client$} \label{line:send-vote}
            \EndFor
            \Let{\nextsn}{\nextsn+1} \label{line:rep-increment-sn}
        \EndFunction
        \Statex

        \vspace{-1mm}
        \On{end round} \label{line:rep-send-heartbeat-begin}                      \Comment{Executed at the end of each round}
            \Let{\tx}{\heartBeat \| \opTime}
            \State{\opSendVote{\tx}}\label{line:rep-send-vote-heartbeat}
        \EndOn\label{line:rep-send-heartbeat-end}
    \end{algorithmic}
    \end{algorithm}

The state of a replica (lines \ref{line:rep-state-begin}--\ref{line:rep-state-end} of \Cref{alg:pod-core-replica}) contains \repLog, a log implemented as a sequence of votes $(\tx, \tsp, \sn, \sigma, \rep_i)$ created by the replica, where \tsp is the timestamp assigned by the replica to \tx, \sn is a sequence number, and \sig is its signature.
When the replica receives a \msgconnect{} message from a client \client, it appends \client to its set of connected clients and sends to \client all entries in \repLog (lines \ref{line:rep-receive-connect-begin}--\ref{line:rep-receive-connect-end}).

When it receives  \msgwrite{\tx}, a replica first checks whether it has already seen \tx, in which case the message is ignored. Otherwise, it assigns \tx a timestamp \tsp equal its local round number and the next available sequence number \sn, and signs the message $(\tx, \tsp, \sn)$ (\cref{line:rep-assign-tsp}).
Honest replicas use incremental sequence numbers for each transaction, implying that a vote with a larger sequence number than a second vote will have a larger or equal timestamp than the second.
The replica appends $(\tx, \tsp, \sn, \sig)$ to \repLog, and sends it via a \msgrecord{(\tx, \tsp, \sn, \sig, \rep_i)} message to all connected clients (\cref{line:send-vote}).

\dotparagraph{Heartbeat messages}
As we will see, clients maintain a \emph{most-recent timestamp} variable $\mrt[\rep_j]$ for each replica. This is updated every time they receive a vote and is crucial for computing the past-perfect round \rperf.
To make sure that clients update $\mrt[\rep_j]$ even when $\rep_j$ does not have any new transactions in a round,
we have replicas send a vote on a dummy \heartBeat transaction the end of each round (lines~\ref{line:rep-send-heartbeat-begin}--\ref{line:rep-send-heartbeat-end}).
An obvious practical optimization is to send \heartBeat only for rounds when no other transactions were sent.
When received by a client, a \heartBeat is handled as a vote (i.e., it triggers \cref{line:receive-vote-begin} in \Cref{alg:client-1}). To avoid being considered a duplicate vote by clients (see \cref{line:check-dup-tsp} in \Cref{alg:client-1}), replicas append the round number to the \heartBeat transaction.

\subsection{Client code}

\begin{algorithm}[th!]
    \caption{Protocol \podtm-core: Code for a client, part 1}
    \label{alg:client-1}
    \begin{algorithmic}[1]
        \State{\textbf{State:}}
            \State{$\allreps = \{ \rep_1, \ldots, \rep_n \} ;\; \{\pk_1, \ldots, \pk_n \}$ } \Comment{\text{All replicas and their public keys}}\label{line:client-state-begin}
            \State{$\mrt : \rep \rightarrow \tsp$}   \Comment{The most recent timestamp returned by each replica}
            \State{$\nextsn : \rep \rightarrow \sn$}   \Comment{The next sequence number expected by each replica}
            \State{$\tsps: \tx \rightarrow ( \rep \rightarrow \tsp)$}     \Comment{Timestamp received for each tx from each replica}
            \State{$\Pod = (\allTxs, \rperf)$}                   \Comment{The \podtm observed by the client so far}
            \State{$\ppcert: \rep \rightarrow \vote$} \Comment{Past-perfection certificate: the most recent vote from each replica}
            \State{$\alltxcert: \tx \rightarrow \txcert$, where $\txcert: \rep \rightarrow \vote$} \label{line:client-state-end} \Comment{Transaction certificate: for each transaction, all votes}
        \Statex
        \vspace{-1mm}
        \On{\opInit{}} \label{line:client-init-start} \Comment{Called once when the client is initialized}
            \State{\opInitClient{}}
            \State{\textbf{for} $\rep_j \in \allreps$ \textbf{do:} $\msgconnect{} \rightarrow \rep_j$}\label{line:client-init-conn}
        \EndOn\label{line:client-init-end}
        \Statex
        \vspace{-1mm}
        \On{$\msgrecord{(\tx, \tsp, \sn, \sig, \rep_j)} \leftarrow \rep_j$} \Comment{Called when client receives vote from replica $\rep_j$}\label{line:receive-vote-begin}
            \If{\opProcessVote{$\tx, \tsp, \sn, \sig, \rep_j$}}
                \Let{\ppcert[\rep_j]}{(\tx, \tsp, \sn, \sig, \rep_j)} \label{line:upd-ppcert}   \Comment{Keep most recent vote from $\rep_j$ in \ppcert}
                \Let{\alltxcert[\tx][\rep_j]}{(\tx, \tsp, \sn, \sig, \rep_j)} \label{line:client-store-vote-txcert}  \Comment{Keep all votes for \tx in \txcert}
            \EndIf
        \EndOn\label{line:receive-vote-end}

        \Statex
        \vspace{-1mm}
        \Function{$\opwrite{\tx}$}{} \label{line:write-tx-begin} \Comment{Part of \podtm interface, used to write a new transaction}
            \State{\textbf{for} $\rep_j \in \allreps$ \textbf{do:} $\msgwrite{\tx} \rightarrow \rep_j$} \label{line:send-write-tx}
        \EndFunction\label{line:write-tx-end}
        \Statex
        \vspace{-1mm}
        \Function{$\opreadall{}$}{} \label{line:read-begin} \Comment{Part of \podtm interface, used to read all transactions}
            \Let{\txSet}{\opComputeTxSet{\tsps, \mrt}}  \Comment{Shown in \Cref{alg:client-statistics}}
            \Let{\rperf}{\opMinPossibleTsForNewTx{\mrt}}\label{line:upd-rperf-begin} \Comment{Shown in \Cref{alg:client-statistics}}
            \State{\Pod $\leftarrow$ (\txSet, \rperf) ;\; \cert $\leftarrow$ (\ppcert, \alltxcert)} \label{line:client-construct-cert}
            \return{ $(\Pod, \cert)$}\label{line:return-pod}
        \EndFunction\label{line:read-end}
        \Statex
        \vspace{-1mm}
        \Function{\opInitClient{}}{}
            \State{$\tsps\leftarrow\emptyset ;\; \alltxcert\leftarrow\emptyset ;\; \Pod = (\emptyset, 0)$} \label{line:txs-map-init}
            \State{\textbf{for} $\rep_j \in \allreps$ \textbf{do:} $\mrt[\rep_j] \leftarrow 0 ;\; \ppcert[\rep_j] \leftarrow \bot ;\; \nextsn[\rep_j] = -1 $}\label{line:client-init-mrt-ppcert}

        \EndFunction
        \Statex
        \vspace{-1mm}
        \Function{\opProcessVote{$\tx, \tsp, \sn, \sig, \rep_j$}}{} \Comment{Validate vote and update local state}
            \State{\textbf{require} \opVerify{$\pk_j, (\tx, \tsp, \sn), \sig$}} \Comment{Otherwise, vote is invalid} \label{line:client-verify-sig-begin}
            \State{\textbf{require} $\sn = \nextsn[\rep_j]$} \Comment{Otherwise, vote cannot be processed yet}\label{line:client-check-sn}
            \Let{\nextsn[\rep_j]}{\nextsn[\rep_j] + 1}
            \State{\textbf{require} $\tsp \geq \mrt[\rep_j]$} \Comment{Otherwise, $\rep_j$ has sent old timestamp} \label{line:check-old-ts}
            \Let{\mrt[\rep_j]}{\tsp} \label{line:upd-mrt}
                \State{\textbf{require} $\tsps[\tx][\rep_j] = \bot$ \textbf{or} $\tsps[\tx][\rep_j] = \tsp$ } \Comment{Otherwise, vote is duplicate from $\rep_j$ on \tx} \label{line:check-dup-tsp}
            \Let{\tsps[\tx][\rep_j]}{\tsp} \label{line:rec-tsp}
        \EndFunction
    \end{algorithmic}
    \end{algorithm}
\begin{algorithm}[ht!]
    \caption{Protocol \podtm-core: Client code, part 2. Functions to compute trace values and past-perfect round. The code is parametrized with \beta, the number of Byzantine replicas expected by the client, and \gamma, the number of omission-faulty replicas, and $\alpha = n - \beta - \gamma$ for $n$ replicas.
    }
    \label{alg:client-statistics}
    \begin{algorithmic}[1]
        \Function{$\opComputeTxSet{\tsps, \mrt}$}{}
            \State{$\allTxs \gets \emptyset$}
            \For{$\tx \in \tsps.\opKeys{}$}     \Comment{loop over all received transactions} \label{line:loop-over-txs}
                \Let{\rmin}{\opMinPossibleTs{\tsps[\tx], \mrt}}\label{line:upd-rmin-begin}
                \Let{\rmax}{\opMaxPossibleTs{\tsps[\tx]}}\label{line:upd-rmax-begin}
                \State{$\rconf \gets \bot ;\; \txVotes = [ \, ]$} \label{line:unconfirmed-tx}
                \If{$| \tsps[\tx].\opKeys{} | \geq \alpha$}\label{line:confirm-tx-begin}
                    \State{\textbf{for} $\rep_j \in \tsps[\tx].\opKeys{}$ \textbf{do: } $\txVotes \leftarrow \txVotes \concat \tsps[\tx][\rep_j]$}
                    \Let{\rconf}{\opMed{\txVotes}}\label{line:assign-median}
                \EndIf\label{line:confirm-tx-end}
                \Let{\allTxs}{\allTxs \cup \{ (\tx, \rmin, \rmax, \rconf) \} }
            \EndFor
            \return \allTxs
        \EndFunction
        \Statex
        \vspace{-1mm}
        \Function{\opMinPossibleTs{\txVotesMin, \mrt}}{} \Comment{$\txVotesMin: \rep \rightarrow \tsp$, contains timestamps on tx}
        \For{$\rep_j \in \allreps$}\label{line:assing-missing-min-begin} \Comment{$\mrt: \rep \rightarrow \tsp$, most recent tsp from each replica}
            \State{\textbf{if} $\txVotesMin[\rep_j] = \bot$ \textbf{then} $\txVotesMin \gets \txVotesMin \concat [\mrt[\rep_j]]$} \label{line:fill-vote-lts}
        \EndFor\label{line:assing-missing-min-end}
        \State{\text{sort \txVotesMin in increasing order of timestamps}}
        \Let{\txVotesMin}{ [ 0, \stackrel{\beta \text{ times}}{\ldots}, 0 ] \concat \txVotesMin} \Comment{omitted altogether if $\beta=0$}\label{line:prepend-zero}
        \return \opMed{$\txVotesMin[:\alpha]$}
    \EndFunction
    \Statex
    \vspace{-1mm}
    \Function{\opMaxPossibleTs{\txVotesMax}}{}\label{line:fn-compute-max-begin}
        \For{$\rep_j \in \allreps$}\label{line:assing-missing-max-begin}
            \State{\textbf{if} $\txVotesMax[\rep_j] = \bot$ \textbf{then} $\txVotesMax \gets \txVotesMax \concat [\infty]$} \label{line:fill-vote-inf}
        \EndFor\label{line:assing-missing-max-end}
        \State{\text{sort \txVotesMax in increasing order of timestamps}}
        \Let{\txVotesMax}{\txVotesMax \concat [ \infty, \stackrel{\beta \text{ times}}{\ldots}, \infty ]}\Comment{omitted altogether if $\beta=0$} \label{line:apend-infty}
        \return \opMed{$\txVotesMax[-\alpha:]$}
    \EndFunction\label{line:fn-compute-max-end}
    \Statex
    \vspace{-1mm}
    \Function{\opMinPossibleTsForNewTx{\mrt}}{}\label{line:fn-compute-pp-begin}
        \State{\text{sort \mrt in increasing order}}
        \Let{\mrt}{ [ 0, \stackrel{\beta \text{ times}}{\ldots}, 0 ] \concat \mrt}\Comment{omitted altogether if $\beta=0$}
        \return \opMed{$\mrt[:\alpha]$}
    \EndFunction\label{line:fn-compute-pp-end}
    \vspace{-1mm}
    \Statex
    \Function{\opMed{Y}}{}\label{}
        \return $Y[\lfloor \,|Y|/2 \, \rfloor]$
    \EndFunction
\end{algorithmic}
\end{algorithm}

\dotparagraph{Initialization}
The state of a client is shown in \Cref{alg:client-1} in lines~\ref{line:client-state-begin}--\ref{line:client-state-end}.
The state contains the identifiers and public keys of all replicas, \mrt, \nextsn, \tsps, \Pod, \ppcert, and \txcert.
Variable \tsps is a map from transactions \tx to a map from replicas \rep to timestamps \tsp.
The state gets initialized in lines~\ref{line:client-init-start}--\ref{line:client-init-end}.
At initialization the client also sends a \msgconnect{} message to each replica, which initiates a streaming connection from the replica to the client.

\dotparagraph{Receiving votes}
A client maintains a connection to each replica and receives votes through \msgrecord{(\tx, \tsp, \sn, \sig, \rep_j)} messages (lines~\ref{line:receive-vote-begin}--\ref{line:receive-vote-end}).
When a vote is received from replica $\rep_j$, the client first verifies the signature \sig under $\rep_j$'s public key (\cref{line:client-verify-sig-begin}). If invalid, the vote is ignored.
Then the client verifies that the vote contains the next sequence number it expects to receive from replica $\rep_j$ (\cref{line:client-check-sn}). If this is not the case, the vote is \emph{backlogged} and given again to the client at a later point (the backlogging functionality is not shown in the pseudocode).
The client then checks the vote against previous votes received from $\rep_j$.
First, \tsp must be greater or equal to $\mrt_j$, the most recent timestamp returned by replica $\rep_j$ (\cref{line:check-old-ts}).
Second, the replica must have not previously sent a different timestamp for \tx (\cref{line:check-dup-tsp}).
If both checks pass, the client updates $\mrt[j]$ (\cref{line:upd-mrt}) and $\tsps[\tx][\rep_j]$ (\cref{line:rec-tsp}) with \tsp.
The client also updates \ppcert and \alltxcert (\cref{line:upd-ppcert,line:client-store-vote-txcert}) for each valid vote.

If any of these checks fail, the client ignores the vote, since both of these cases constitute \emph{accountable} faults: In the first case, the client can use the message $\msgrecord{(\tx, \tsp, \sn, \sig, \rep_j)}$ and the vote it received when it updated $\mrt[\rep_j]$ to prove that $\rep_j$ has misbehaved. In the second case, it can use $\msgrecord{(\tx, \tsp, \sn, \sig, \rep_j)}$ and the previous vote it has received for \tx.
The \identify{} function we show in \Cref{alg:identify} can detect such misbehavior. However, in this paper we formalize accountability conditioned on safety being violated (\Cref{def:accountable-safety}), hence we do not further explore this.


\dotparagraph{Writing to and reading from \podtm}
Clients interact with a \podtm using the \opwrite{\tx} and \opreadall{} functions.
In order to write a transaction \tx, a client sends \msgwrite{\tx} to each replica (lines \ref{line:write-tx-begin}--\ref{line:write-tx-end}).
Since the construction is stateful and streaming, the client state contains at all times the latest view the client has of the \podtm. Hence, \opreadall{} operates on the local state (lines \ref{line:read-begin}--\ref{line:read-end}).
It returns all the transactions the client has received so far and their traces, and the current past-perfect round \rperf.
We will show the details of $\opComputeTxSet{}$ in \Cref{alg:client-statistics}.
As per the\podtm interface, \opreadall{} also returns auxiliary data \cert, which in the implementation of \podtm-core has two parts: the past-perfection certificate \ppcert and a list of transaction certificates \txcert (\cref{line:client-construct-cert}).
Note that $\tsps.\opKeys{}$ on \cref{line:loop-over-txs} returns all entries in \tsps.

\dotparagraph{Computing the trace values and the past-perfect round}
In \Cref{alg:client-statistics} we show function \opComputeTxSet{}, used to compute the current transaction set from the timestamps \tsps received so far.
A transaction becomes \confirmed when the client receives \alpha votes for \tx from different replicas (line~\ref{line:confirm-tx-begin}), in which case \rconf is the median of all received timestamps (line~\ref{line:assign-median}).
The computation of $\rmin, \rmax$, and \rperf is done using the functions \opMinPossibleTs{}, \opMaxPossibleTs{}, and \opMinPossibleTsForNewTx{}, respectively.

Function \opMinPossibleTs{} gets as input the timestamps \txVotesMin from each replica on \tx and the most recent timestamps \mrt from the replicas. It fills a missing timestamp from replica $\rep_j$ with $\mrt[\rep_j]$ (\cref{line:fill-vote-lts}),
the minimum timestamp that can ever be  accepted from $\rep_j$ (smaller values will not pass the check in \cref{line:check-old-ts} of \Cref{alg:client-1}).
It then prepends \beta times the $0$ value (\cref{line:prepend-zero}), pessimistically assuming that up to $\beta$ replicas will try to bias \tx by sending a timestamp $0$ to other clients, which only happens if replicas may be Byzantine, i.e., if $\beta > 0$.
It then returns the median of the \alpha smallest timestamps, which, again pessimistically, are the smallest timestamps another client may use to confirm \tx.

Function \opMaxPossibleTs{} is analogous, filling a missing vote with $\infty$ (\cref{line:fill-vote-inf})
and appending the $\infty$ value (\cref{line:apend-infty}), the worst-case timestamp that Byzantine replicas may send to other clients,
and returning the median of the \alpha largest timestamps.

Finally, \opMinPossibleTsForNewTx{} is similar to \opMinPossibleTs{} but it operates on the timestamps \mrt, instead of votes on a specific transaction. Hence, since an honest client will not accept a timestamp smaller than \mrt on any future transaction (\cref{line:check-old-ts} of \Cref{alg:client-1}), the returned value bounds from below the \confirmed round that \emph{any} honest client can ever assign to a transaction \emph{not yet seen}.


\subsection{Validation function}
The purpose of the validation function \opValid{} is to allow a client, which is not necessarily communicating with the \podtm replicas, to verify that a given pod data structure \Pod satisfies the security properties of \podtm (\Cref{def:podsec}).

\begin{algorithm}[h]
    \caption{Function \opValid{\Pod, \cert} for \podtm-core. Code for a \emph{verifier}, which can be a \podtm client not communicating with the \podtm replicas.}
    \label{alg:pod-core-valid}
    \begin{algorithmic}[1]
        \State{\textbf{State:} Same as in \Cref{alg:client-1}, includes $\{\rep_1, \ldots, \rep_n\}$, \tsps, \mrt.}
        \Statex
        \vspace{-1mm}
        \Function{\opValid{\Pod, \cert}}{}
            \Let{(\ppcert, \alltxcert) }{\cert} \label{line:valid-extract-cert} \Comment{$\ppcert: \rep \rightarrow \vote, \; \alltxcert: \tx \rightarrow \txcert, \; \txcert: \rep \rightarrow \vote$}
            \State{\opInitClient{}} \Comment{shown in \Cref{alg:client-1}}
            \Let{\allVotes}{\bigcup_{\tx \in \alltxcert} \left(\alltxcert[\tx].\opValues{}\right)}
            \For{$(\tx, \tsp, \sn, \sig, \rep_j) \in \allVotes$ \textbf{in increasing order of} $\sn$}
                \State{\textbf{require} \opProcessVote{$\tx, \tsp, \sn, \sig, \rep_j$}} \Comment{shown in \Cref{alg:client-1}, updates local state \tsps, \mrt}
            \EndFor
            \State{\textbf{require} $\Pod.\txSet = \opComputeTxSet{\tsps, \mrt}$} \label{line:valid-check-begin} \Comment{shown in \Cref{alg:client-statistics}}
            \State{\textbf{require} $\Pod.\rperf = \opMinPossibleTsForNewTx{\mrt}$} \label{line:valid-check-end} \Comment{shown in \Cref{alg:client-statistics}}
            \For{$(\tx, \tsp, \sn, \sig, \rep_j) \in \ppcert.\opValues{}$} \label{line:valid-check-ppcert-begin}
                \State{\textbf{require} $(\tx, \tsp, \sn, \sig, \rep_j) \in \allVotes$}
                \State{\textbf{require} $\sn = \op{max}_{\sn'}((\cdot, \cdot, \sn', \cdot, \rep_j) \in \allVotes)$}
            \EndFor\label{line:valid-check-ppcert-end}
        \EndFunction
    \end{algorithmic}
    \end{algorithm}

The function \opValid{} for \podtm-core is shown in \Cref{alg:pod-core-valid}.
The idea is to have the verifier repeat the logic of an honest client.
The verifier is initialized in the same way as in \Cref{alg:client-1} -- importantly, it knows the identifiers and public keys of \podtm replicas.
Function \opValid{} takes as input a pod data structure \Pod and auxiliary data \cert,
which is expected to contain two parts, a \emph{past-perfection certificate} \ppcert and a collection of \emph{transaction certificates} $\alltxcert$, once for each transaction in \Pod.\txSet (\cref{line:valid-extract-cert}).
Both contain vote messages, as constructed by a \podtm client in lines \ref{line:upd-ppcert} and \ref{line:client-store-vote-txcert} of \Cref{alg:client-1}.
The verifier processes each vote in order of increasing sequence number \sn using function \opProcessVote{}. If any vote is invalid, \opValid{} returns $\false$.
Observe that if the votes are valid the verifier will have updated its local \tsps and \mrt variables with the same values as the \podtm client that constructed \Pod.
Finally, the verifier computes the transaction set \txSet and the past-perfect round \rperf (using its local \tsps and \mrt variables) and requires that the values match the ones in \Pod (lines~\ref{line:valid-check-begin}--\ref{line:valid-check-end}).

Finally, the verifier also verifies the past-perfection certificate. Given that the previous checks have passed, we require that each vote in \ppcert is contained in one of the transaction certificates in \alltxcert and has the maximum sequence number received from the client that sent the vote (lines~\ref{line:valid-check-ppcert-begin}--\ref{line:valid-check-ppcert-end}).
As we have remarked earlier, \ppcert can be derived from \alltxcert by taking the union of certificates $\alltxcert$ for all transactions and keeping the most recent vote for each replica, in which case the checks on lines~\ref{line:valid-check-ppcert-begin}--\ref{line:valid-check-ppcert-end} can be omitted. We maintain the past-perfection certificate for readability and simplicity in the proofs.

\subsection{Analysis}

\begin{theorem}[\podtm-core security]\label{thm:pod-security}
    Assume that the network is partially synchronous with actual network delay $\delta$, that \beta is the number of Byzantine replicas, \gamma the number of omission-faulty replicas, $\alpha = n - \beta - \gamma$ the confirmation threshold, and $n \geq 5 \beta + 3 \gamma + 1$ the total number of replicas. Protocol \podtm-core (\Cref{const:pod-core}), instantiated with a EUF-CMA secure signature scheme, the \opValid{} function shown in \Cref{alg:pod-core-valid}, and the \identify{} function described in \Cref{alg:identify}, is a responsive secure \podtm (\Cref{def:podsec}) with Confirmation within $\parConf = 2 \delta$, Past-perfection within $\parPerf = \delta$ and  \beta-accountable safety (\Cref{def:accountable-safety}), \ewnp.
\end{theorem}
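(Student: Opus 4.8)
The plan is to verify, one at a time, the five properties of a secure \podtm (Definition~\ref{def:podsec}), then responsiveness, then $\beta$-accountable safety. Throughout write $\alpha=n-\beta-\gamma$ for the number of honest replicas and keep two arithmetic facts in mind: from $n\ge 5\beta+3\gamma+1$ we get $\alpha\ge 4\beta+2\gamma+1$, hence $\lfloor\alpha/2\rfloor\ge 2\beta+\gamma$; and in any multiset of at least $\alpha$ votes at most $\beta$ come from Byzantine replicas, so strictly more than half come from \emph{well-behaved} (honest \emph{or} omission-faulty) replicas, i.e.\ replicas whose votes carry truthful timestamps that are non-decreasing in the sequence number and that sign each transaction at most once. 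Everything holds except with negligible probability, the only probabilistic step being EUF-CMA unforgeability (so no vote can be attributed to a replica that did not sign it). I would first record a \emph{prefix lemma}: in a valid pair $(\Pod,\cert)$, since \opValid{} replays the votes of \cert through \opProcessVote{} in increasing sequence-number order and that routine insists on consecutive per-replica sequence numbers, the votes attributed to a fixed replica $\rep$ in \cert form a prefix $0,\dots,k_\rep$ of the votes $\rep$ actually signed; hence the value the verifier ends up storing as $\mrt[\rep]$ is the timestamp of $\rep$'s $k_\rep$-th vote, and for well-behaved $\rep$ this is at most the timestamp of every later vote of $\rep$, in particular of any vote of $\rep$ for a transaction absent from $\Pod$.

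\textsc{Completeness} follows from the prefix lemma almost for free: an honest client stores in \alltxcert exactly the votes it accepted and refreshes \ppcert on each one, and \opProcessVote{}'s checks are per-replica and per transaction--replica pair, so \opValid{}'s replay (which visits each replica's votes in the same sequence-number order the client did, regardless of cross-replica reordering) reconstructs identical \tsps and \mrt, hence identical \txSet and \rperf, and passes the \ppcert checks. \textsc{Responsiveness} is syntactic: nothing in the protocol nor in $\parConf,\parPerf$ refers to $\Delta$. For \textsc{confirmation within $\parConf=2\delta$}: a transaction written at round $r$ reaches all $\alpha$ honest replicas by round $r+\delta$, each of which has then emitted its vote for \tx together with all of its lower-sequence-number votes by round $r+\delta$; these all arrive by round $r+2\delta$, so every honest client then processes them in order and sees $\alpha$ votes for \tx, confirming it. For \textsc{past-perfection within $\parPerf=\delta$}: by round $r\ge\delta$ each honest client has received, from each of the $\alpha$ honest replicas, that replica's round-$(r-\delta)$ heartbeat vote (and all its earlier votes, so no sequence-number gap blocks it), so at least $\alpha$ of the $n$ entries of \mrt are $\ge r-\delta$; at most $\beta+\gamma$ of them plus the $\beta$ padding zeros --- at most $2\beta+\gamma$ list entries --- lie below $r-\delta$, and \opMinPossibleTsForNewTx{} returns the entry at index $\lfloor\alpha/2\rfloor\ge 2\beta+\gamma$, so $\rperf\ge r-\delta$. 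This is the one place where $n\ge 5\beta+3\gamma+1$ is used tightly.

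For \textsc{past-perfection} safety (which, like the next property, needs no synchrony), assume valid $(\Pod_1,\cert_1)$, $(\Pod_2,\cert_2)$ witness a violation with transaction \tx: $\tx\notin\Pod_1.\txSet$ yet $\Pod_2$ confirms \tx with $\rconf^2<\Pod_1.\rperf$. Validity of $\Pod_2$ forces at least $\alpha$ votes for \tx in $\cert_2$ with median $\rconf^2$, so the set $S$ of replicas that signed a vote for \tx with timestamp $\le\rconf^2$ has at least $\lfloor\alpha/2\rfloor+1-\beta$ well-behaved members; none of their \tx-votes lies in $\cert_1$ (else $\tx\in\Pod_1.\txSet$), so by the prefix lemma each contributes an \mrt-entry $\le\rconf^2$, which together with the $\beta$ padding zeros makes at least $\lfloor\alpha/2\rfloor+1$ entries of the list median-ed by \opMinPossibleTsForNewTx{} be $\le\rconf^2$, forcing $\Pod_1.\rperf\le\rconf^2$ --- a contradiction. \textsc{Confirmation bounds} uses the same template: if $\Pod_2$ confirms \tx with $\rconf^2$ while $\Pod_1$ lists \tx with $\rmin^1>\rconf^2$ (resp.\ $\rmax^1<\rconf^2$), the well-behaved replicas whose $\cert_2$-vote for \tx has timestamp $\le\rconf^2$ (resp.\ $\ge\rconf^2$) each contribute to the list median-ed by \opMinPossibleTs{} (resp.\ \opMaxPossibleTs{}) an entry $\le\rconf^2$ (resp.\ $\ge\rconf^2$) --- via their recorded timestamp if that vote is in $\cert_1$, else via \mrt (resp.\ via the $\infty$ padding) --- and counting these against the $\beta$ padding zeros (resp.\ infinities) yields $\rmin^1\le\rconf^2$ (resp.\ $\rconf^2\le\rmax^1$).

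For \textsc{$\beta$-accountable safety}, \identify{} scans a transcript for any replica that signed two votes forming a protocol-impossible pair --- two votes with the same sequence number, two votes for one transaction with distinct timestamps, or a later-sequence-number vote carrying a strictly smaller timestamp --- and outputs exactly those replicas; no honest replica ever produces such a pair, giving \textsf{no-framing} (modulo EUF-CMA). For \textsf{correctness}, run the counts of the safety arguments but keep the conclusion about replicas rather than deriving a contradiction: when past-perfection (resp.\ a confirmation bound) is violated, $|S|\ge\lfloor\alpha/2\rfloor+1$ while validity of $\Pod_1$ caps the number of \mrt-entries (resp.\ list entries) on the ``$\le\rconf^2$'' side --- and hence the number of well-behaved members of $S$ --- at $\lfloor\alpha/2\rfloor-\beta$; so at least $\beta+1$ members of $S$ are not well-behaved, and for each such $\rep$ the vote $\cert_1$ contributes for it (its $k_\rep$-th vote, of strictly larger sequence number than $\rep$'s \tx-vote, or a conflicting timestamp for \tx) together with $\rep$'s \tx-vote in $\cert_2$ exhibits a protocol-impossible pair. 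All these votes appear as \msgrecord{} messages in the transcript, so \identify{} on that partial transcript returns $\ge\beta+1$ replicas. I expect the main obstacle to be the bookkeeping: fixing ``well-behaved'' as precisely the class with truthful, sequence-monotone timestamps, showing \opValid{}'s replay rebuilds a client's state regardless of cross-replica reordering, and aligning the off-by-ones in each \opMed{} call (the index $\lfloor\cdot/2\rfloor$, the $\beta$-fold padding, the $[:\alpha]$ and $[-\alpha:]$ truncations) so that every counting inequality is exactly the one permitted by $n\ge 5\beta+3\gamma+1$; the accountability count in particular is tight and hinges on the padding contributing exactly $\beta$ to the wrong side of each median.
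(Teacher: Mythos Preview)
Your proposal is correct and quite close in spirit to the paper's proof, but the decomposition and one technical detail differ.

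The paper proves the theorem via a chain of lemmas: an index lemma pinning down which position of the sorted (and $\beta$-padded) list each of $\rmin,\rmax,\rperf$ occupies; a lemma that $\rperf$ is always bounded above by some honest replica's most-recent timestamp; and then one lemma per property. The safety lemmas (past-perfection and confirmation bounds) proceed by defining two replica sets---$\allreps_1$ from $\cert_1$ (replicas whose $\mrt$ sits on the ``large'' side of the relevant index) and $\allreps_2$ from $\cert_2$ (replicas whose $\tx$-vote sits on the ``small'' side)---and counting $|\allreps_1\cap\allreps_2|\ge\beta+1$ by inclusion--exclusion. Your route via a ``prefix lemma'' and the class of \emph{well-behaved} replicas is a clean repackaging of the same counts: your $S$ is essentially the paper's $\allreps_2$, and the replicas you end up isolating are exactly $\allreps_1\cap\allreps_2$. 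Your version also explicitly argues Completeness, which the paper's lemma list skips.

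One genuine discrepancy: the theorem fixes \identify{} to be \Cref{alg:identify}, which flags a replica \emph{only} on two signed votes with the same sequence number and different content. Your \identify{} adds two further tests (duplicate-$\tx$ with different timestamp; larger $\sn$ with strictly smaller $\tsp$), and your accountability argument in the ``$\sn>k_\rep$'' case leans on the third test. That case is still catchable by \Cref{alg:identify}, but you must use both prefixes, not just $\cert_1$'s $k_\rep$-th vote and $\cert_2$'s $\tx$-vote: since $\cert_2$ is valid, it also contains $\rep$'s vote at sequence number $k_\rep$, and if that vote agreed with $\cert_1$'s then $\cert_2$ would carry a timestamp $>\rconf^2$ at $k_\rep$ followed by a timestamp $\le\rconf^2$ at $\sn>k_\rep$, which \opProcessVote{} rejects---so the two prefixes must disagree at some common sequence number, and \Cref{alg:identify} fires. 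This is exactly how the paper closes the case; with that one extra sentence your argument matches the stated \identify{}.
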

\begin{proof}
    Shown in Appendix~\ref{app:pod-security}.
\end{proof}


\section{Evaluation} \label{sec:evaluation}
To validate our theoretical results regarding optimal latency in Protocol \podtm-core, we implement\footnote{Our prototype implementation is available at \url{https://github.com/commonprefix/pod-experiments}} a prototype \podtm-core in Rust 1.85.
Our benchmarks measure the
end-to-end confirmation latency of a transaction from the moment it is written by client until it is read as
confirmed by another client in a different continent, both interacting with replicas distributed around the world.
Specifically, the latency is computed as the difference between the timestamp
recorded by the reading client upon receiving sufficiently many votes (quorum size $\alpha$) from different replicas and the initial timestamp recorded by the writing client.
We present the results in \Cref{fig:latency}.

\begin{figure}[h!]
    \centering
    \includegraphics[width=0.9\textwidth]{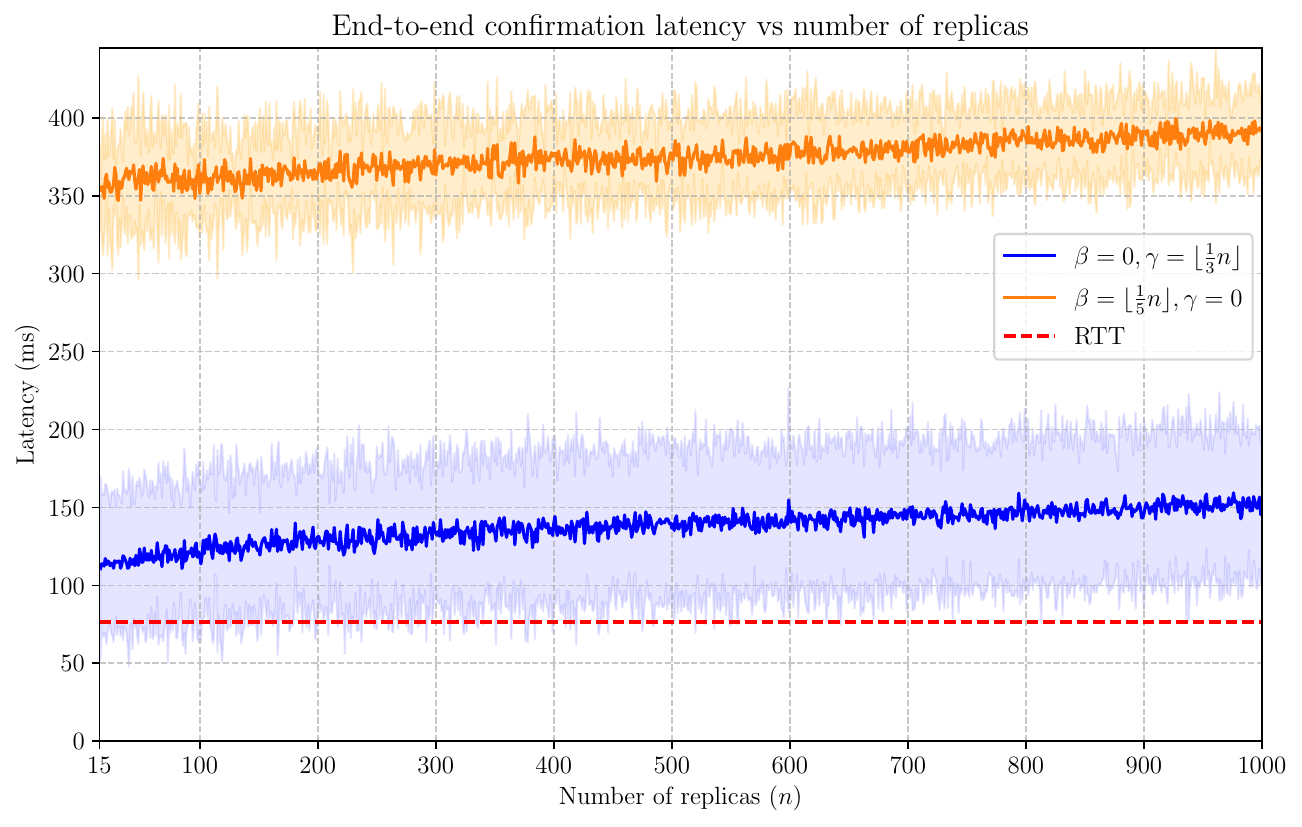}
    \vspace{-5mm}
    \caption{End-to-end confirmation latency from a writing client to a reading client as a transaction traverses across $n = 15, \ldots, 1000$ replicas,
    for two reading clients:
    (1) a client that expects up to $\gamma = \lfloor\frac{1}{3}n\rfloor$ omission faults (blue line, below), and (2) a client that expects up to $\beta = \lfloor\frac{1}{5}n\rfloor$ Byzantine faults (orange line, above).
    We also plot the physical network round-trip time (RTT) between the reading client and the writing client, which is 76ms (dashed red line).
    A 95\% confidence interval is shown for each experiment (shaded area).}
    \label{fig:latency}
\end{figure}

The implementation follows a client-server architecture where each replica maintains two TCP listening sockets:
one for the reading client connection and one for the writing client connection. Upon receiving a transaction
payload from a writer, the replica creates a tuple containing the payload, a sequence number, and the current local timestamp.
The replica then signs this tuple using a Schnorr signature\footnote{\url{https://crates.io/crates/secp256k1}} on secp256k1 curve, appends it to its local log, and forwards
the signed tuple to the reading client.
Replicas are deployed round-robin across seven
AWS regions: eu-central-1 (Frankfurt), eu-west-2 (London), us-east-1 (N. Virginia), us-west-1 (N. California),
ca-central-1 (Canada), ap-south-1 (Mumbai), and ap-northeast-2 (Seoul). Each replica is deployed on a
t2.medium EC2 instance (2 vCPUs, 4GB RAM) and is initialized with user data that contains
the replica's unique secret signing key.

We implement two types of clients. The writing client establishes connections to all replicas, records
the timestamp (in its local view) right before sending the transaction and sends transaction
payloads to each replica. The reading client maintains connections to all replicas, validates incoming signed
transactions, and records the timestamp (in its local view) upon receiving a quorum of valid signatures for a particular transaction.
We deploy the reading
client in eu-west-2 (London) and the writing client in us-east-1 (N. Virginia), both initialized with the complete
list of replica information (IP addresses, public keys).

We conduct experiments with two different values for the quorum size $\alpha = 1 - \beta - \gamma$: (1) $\beta =0 $ and $\gamma = \lfloor \frac{1}{3} n \rfloor$, for a client that only expects omission faults,
and (2) $\beta = \lfloor\frac{1}{5}n\rfloor$ and $\gamma = 0$, for a client that expects Byzantine faults.
We repeat the experiments for
different numbers of replicas ($n = 15, \ldots, 1000$).
We repeat each experiment five times and report the mean latency and a 95\% confidence interval.


As shown in Figure~\ref{fig:latency}, our experimental results demonstrate that the latency remains largely independent
of the number of replicas.
The reading client reports a transaction as confirmed as soon as the fastest
$\alpha$ replicas have responded, which gives rise to the happy artifact that the 1 - $\alpha$ slowest replicas
do not slow down confirmation.
This also explains why the omission-fault experiment exhibits lower latency than the Byzantine experiment.
Even with 1000 replicas the mean confirmation latency is 138ms for the omission-fault experiment and 375ms for the Byzantine experiment.
This approximates
the physical network round-trip time between the reading client and the writing client that stands at 76ms.

\section{Auctions on \podtm through the \bidsettm protocol}\label{sec:auctions}
In this section, we show how single-shot distributed auctions can be implemented on top of \podtm.
This is achieved through \bidsettm, a primitive for collecting a set of bids.
The idea is as follows. A pre-appointed \emph{sequencer} runs the auction, but the bids are collected from \podtm using a \bidsettm protocol. The past-perfection property of \podtm renders the sequencer unable to censor bids: when it creates an output, all timely and honestly-written bids \emph{must} be in it, otherwise the sequencer has provably misbehaved and can be held accountable.
We first define \bidsettm and then construct it using an underlying \podtm.

\begin{remark}[Implicit sub-session identifiers]
	We assume that each instance of the \bidsettm-core protocol is identified by a unique sub-session identifier (ssid). All messages written to the underlying \podtm are concatenated with the ssid.
\end{remark}

\begin{definition}[\bidsettm protocol]\label{def:auction-framework}
    A \bidsettm \emph{protocol} has a \emph{starting time} parameter \tzero and exposes the following interfaces to \emph{bidder} and \emph{consumer} parties:
    \begin{itemize}
        \item function $\opSubmitBid{\bid}$: It is called by a bidder at round \tzero to submit a bid \bid.
        \item event $\opResult{\bag, \bidcert}$: It is an event generated by a consumer. It contains a \emph{bid-set} \bag, which is a set of bids, and auxiliary information \bidcert.
    \end{itemize}
\end{definition}

\noindent
A \bidsettm protocol satisfies the following liveness and safety properties:
\begin{description}[leftmargin=4mm, labelindent=4mm]
    \item[(Liveness) Termination within \parAuctionTerm:]
    An honest consumer generates an event $\opResult{\bag, \bidcert}$ by round $\tzero + \parAuctionTerm$.

    \item [(Safety) Censorship resistance:]
    If an honest bidder calls \opSubmitBid{\bid}
    and an honest consumer generates an event $\opResult{\bag, $\cdot$}$,
    then $\bid \in \bag$.

    \item[(Safety) Weak consistency:]
    If two honest consumers generate \opResult{$\bag_1, \cdot$} and \opResult{$\bag_2, \cdot$} events, such that $\bag_1 \neq \emptyset$ and $\bag_2 \neq \emptyset$, then $\bag_1 = \bag_2$.

\end{description}



\begin{protocol}[\bidsettm-core]\label{const:bidset}
	Protocol \bidsettm-core is parameterized by an integer $\Delta$ (looking ahead, we will prove security in synchrony, \textit{i.e.}, assuming the network delay $\delta$ is smaller than $\Delta$) and assumes digital signatures and a \podtm with $\delta$-timeliness, $\parPerf=\delta$ and $\parConf=2\delta$. At time \tzero, all parties start executing Algorithms~\ref{alg:bid-set-bidder}--\ref{alg:bid-set-consumer}.
	A pre-appointed \emph{sequencer} is responsible to reading the \podtm and writing back to it when a specific condition is met. For example, when instantiating \bidsettm-core on top of \podtm-core, a replica can act as sequencer.
\end{protocol}

\vspace{-5mm}
\begin{algorithm}[h!]
    \caption{\bidsettm-core: Code for a bidder. It runs a client for a \podtm-core instance \podInst.}
    \label{alg:bid-set-bidder}
    \begin{algorithmic}[1]
        \Function{\opSubmitBid{\bid}}{}
            \State{\podInst.\opwrite{\bid}}
        \EndFunction
\end{algorithmic}
\end{algorithm}

\vspace{-5mm}
\begin{algorithm}[h!]
    \caption{\bidsettm-core: Code for the sequencer. It runs a client for a \podtm-core instance \podInst, and $\sk_a$ denotes the secret key of the sequencer.}
    \label{alg:bid-set-sequencer}
    \begin{algorithmic}[1]
        \Function{\opRunAuction{}}{}
            \Let{\left( (\txSet, \rperf), (\ppcert, \alltxcert) \right)}{\podInst.\opreadall{}}
            \While{$\rperf \leq \tzero + \Delta$}\label{line:bidset-wait-pp}
                \Let{\left( (\txSet, \rperf), (\ppcert, \alltxcert) \right)}{\podInst.\opreadall{}}
            \EndWhile{}
            \State{$\bag \gets \{\tx \mid (\tx, \cdot, \cdot , \cdot) \in \txSet\}  ;\; \bidcert \gets \ppcert$}\label{line:bidset-create-bag}
            \Let{\sigma}{\opSign{\sk_a, (\bag, \bidcert)}} \label{line:bidset-sequencer-sign}
            \Let{\tx}{\msgbidresult{(\bag, \bidcert, \sigma)}}\label{line:bidset-write-bag}
            \State{\podInst.\opwrite{\tx}} 
        \EndFunction
\end{algorithmic}
\end{algorithm}

\vspace{-5mm}
\begin{algorithm}[h!]
    \caption{\bidsettm-core: Code for a consumer. It runs a client for a \podtm-core instance \podInst.}
    \label{alg:bid-set-consumer}
    \begin{algorithmic}[1]
        \Function{\opReadResult{}}{}
            \Loop{}\label{line:bidset-wait-pp-consumer}
                \Let{((\txSet, \rperf), (\ppcert, \alltxcert))}{\podInst.\opreadall{}}
                    \If{$\exists (\tx, \cdot, \cdot, \rconf, \cdot) \in \txSet$ \textbf{:} $\tx = \msgbidresult{(\bag, \bidcert, \sigma)}$ \andl $\rconf \leq \tzero + 3 \Delta$} \label{line:bidset-find-confirmed-tx}
                    \State{\textbf{output event} \opResult{\bag, \bidcert}}
                \ElsIf{$\rperf > \tzero + 3 \Delta$}\label{line:bidset-read-result-pp}
                    \State{\textbf{output event} \opResult{$\emptyset$, \ppcert}}
                \EndIf
            \EndLoop\label{line:bidset-consumer-loop-end}
        \EndFunction
    \end{algorithmic}
    \end{algorithm}

\vspace{-5mm}

A bidder (\Cref{alg:bid-set-bidder}) submits a bid by writing it on the \podtm  at round \tzero.
The sequencer (\Cref{alg:bid-set-sequencer}) waits until the \podtm returns a past-perfect round larger than $\tzero + \Delta$ (\cref{line:bidset-wait-pp}) and then constructs the bid-set \bag from the set of transactions in \txSet (\cref{line:bidset-create-bag}).
The sequencer concludes by signing \bag and \bidcert (which can be used as evidence, in case of a safety violation) and writing  \msgbidresult{(\bag, \bidcert, \sigma)} on \podInst.

The code for a consumer is shown in \Cref{alg:bid-set-consumer}. The consumer waits until one of the following two conditions is met.
First, a \emph{confirmed} transaction \msgbidresult{(\bag, \bidcert, \sigma)} appears in \txSet , for which $\rconf \leq \tzero + 3 \Delta$ (\cref{line:bidset-find-confirmed-tx}), in which case it outputs bid-set \bag as result.
Second, a round higher than $\tzero + 3 \Delta$ becomes past-perfect in \podtm (\cref{line:bidset-read-result-pp}) without a confirmed \msgbidresult{} transaction appearing, in which case it outputs $\bag=\emptyset$.

As an intuition on how \bidsettm-core achieves censorship resistance, we observe the following.
The \emph{$\delta$-timeliness property} of \podtm (\Cref{def:pod-timely}), given that $\delta \leq \Delta$, ensures that bids of honest parties will have a \confirmed round $\rconf \leq \tzero + \Delta$.
Now, the sequencer may only produce a valid bid-set when \podtm returns a past-perfect round larger than $\tzero + \Delta$ (\cref{line:bidset-wait-pp}), and the output \emph{must} contain a certificate \ppcert that proves this. However, if the certificate is valid, then the sequencer must have \emph{provably} seen the bids of honest parties in \txSet (we remind that the votes of replicas on \podtm-core are chained using sequence numbers), and thus \bag must contain all bids with $\rconf \leq \tzero + \Delta$. If any party presents a transaction certificate \txcert for some transaction $\tx^*$ with $\rconf^* \leq \tzero + \Delta$, but $\tx^* \not\in \bag$, then the sequencer can be held accountable. We show the detailed proof in \Cref{lem:bidset-censorship} and \Cref{lem:bidset-accountability}.

Regarding liveness, \cref{line:bidset-wait-pp} of \Cref{alg:bid-set-sequencer} becomes true in the view of sequencer by round $\tzero + \Delta + \delta$ (from the \emph{past-perfection within $\parPerf=\delta$} property of \podtm-core),
hence \Cref{alg:bid-set-sequencer} for an honest sequencer terminates by that round.
Observe also that the transaction \msgbidresult{(\bag, \bidcert, \sigma)} becomes confirmed in the view of all honest clients by round $\tzero + \Delta + 3\delta$ (from the \emph{confirmation within $\parConf=2\delta$} property), and it will have a confirmed round $\rconf \leq \tzero + \Delta + 2\delta$ (from the \emph{$\delta$-timeliness} property).
Hence, if the network is synchronous and the sequencer honest, the condition in \cref{line:bidset-find-confirmed-tx} of \Cref{alg:bid-set-consumer} becomes true at round at most $\tzero + \Delta + 3\delta$,
Even if the sequencer is malicious, from the \emph{past-perfection within $\parPerf=\delta$} property of \podtm, the condition in \cref{line:bidset-read-result-pp} will become true at latest at round $\tzero + 3 \Delta + \delta$, hence \bidsettm-core achieves \emph{termination within $\parAuctionTerm=3\Delta + \delta$}.


\begin{theorem}[Bidset security]\label{thm:bidset-construction-secure}
Assuming a synchronous network where $\delta \leq \Delta$,
protocol \bidsettm-core (Construction~\ref{const:bidset}) instantiated with a digital signature and a  secure \podtm protocol that satisfies the \emph{past-perfection within $\parPerf = \delta$}, \emph{confirmation within $\parConf=2\delta$} and \emph{$\delta$-timeliness} properties, is a secure \bidsettm protocol satisfying \emph{termination within $\parAuctionTerm = 3\Delta + \delta$}.
It satisfies accountable safety with an \identifySequencer{} function that identifies a malicious sequencer.
\end{theorem}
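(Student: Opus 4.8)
The plan is to establish the four guarantees one at a time — \emph{termination within} $\parAuctionTerm = 3\Delta + \delta$, \emph{censorship resistance}, \emph{weak consistency}, and accountable safety via \identifySequencer{} — each by reduction to a property of the underlying \podtm (Definitions~\ref{def:podsec} and~\ref{def:pod-timely}) together with EUF-CMA unforgeability of the sequencer's signature. For \textbf{termination}, I would first bound when the sequencer leaves the loop on \cref{line:bidset-wait-pp}: by \emph{past-perfection within $\delta$} the pod it reads at round $\approx \tzero + \Delta + \delta$ already has $\rperf > \tzero + \Delta$, so it signs and writes its \msgbidresult{} transaction by that round. If the sequencer is honest and $\delta \le \Delta$, \emph{confirmation within $2\delta$} confirms that transaction in every honest consumer's view by round $\approx \tzero + \Delta + 3\delta \le \tzero + 3\Delta + \delta$, and \emph{$\delta$-timeliness} gives it $\rconf \le \tzero + \Delta + 2\delta \le \tzero + 3\Delta$, so the guard on \cref{line:bidset-find-confirmed-tx} becomes satisfied in time. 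If the sequencer is malicious, \emph{past-perfection within $\delta$} still forces $\rperf > \tzero + 3\Delta$ in the consumer's view by round $\approx \tzero + 3\Delta + \delta$, so the guard on \cref{line:bidset-read-result-pp} fires and it outputs the empty bid-set. Either way an honest consumer emits its result event by round $\tzero + \parAuctionTerm$ (off-by-one conventions aside).

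For \textbf{censorship resistance}, let an honest bidder call $\opSubmitBid{\bid}$ at round $\tzero$. By \emph{confirmation within $2\delta$} and \emph{$\delta$-timeliness} there is a valid pod — any honest client's view at round $\tzero + 2\delta$ — containing $\bid$ with $\bid.\rconf \le \tzero + \delta \le \tzero + \Delta$. When the sequencer is honest, its own view $\Pod_s$ at the moment it leaves the loop is valid (Completeness) with $\Pod_s.\rperf > \tzero + \Delta \ge \bid.\rconf$, so the \emph{past-perfection} safety property, applied to $\Pod_s$ and that honest view, forbids $\bid \notin \Pod_s.\txSet$; hence $\bid \in \bag$ on \cref{line:bidset-create-bag}. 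Unforgeability of $\sigma$ means the only \msgbidresult{} transaction an honest consumer accepts (after verifying the sequencer's signature and the validity of \bidcert) is the honest sequencer's, and, against an honest sequencer, the consumer reaches the \cref{line:bidset-find-confirmed-tx} branch before the empty-bid-set branch becomes available (the delicate timing point flagged below); thus every non-empty bid-set an honest consumer outputs contains $\bid$. \textbf{Weak consistency} follows from the same unforgeability fact: an honest sequencer writes exactly one valid \msgbidresult{} transaction, so two honest consumers outputting non-empty bid-sets output the same one; for a malicious sequencer the consumers must additionally pick deterministically among several valid \msgbidresult{} transactions (say, smallest $\rconf$ broken by transaction hash), leaving any residual disagreement as an accountable sequencer fault.

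For \textbf{accountable safety} I would define \identifySequencer{} to scan a partial transcript for a sequencer-signed $(\bag, \bidcert, \sigma)$ together with a transaction certificate \txcert for some $\tx^*$ such that (i) \bidcert is a valid past-perfection certificate whose recomputed past-perfect round strictly exceeds the confirmed round $\rconf^*$ certified by \txcert, while (ii) $\tx^* \notin \bag$; it returns $\{\text{sequencer}\}$ in that case and $\emptyset$ otherwise. \emph{No-framing} is the honest-sequencer analysis above: an honest sequencer only ever signs $\bag = \Pod_s.\txSet$ for a valid $\Pod_s$ whose $\rperf$ is exactly the round its $\bidcert$ certifies, and \emph{past-perfection} forbids omitting any transaction confirmed below that round, so no honest sequencer is named. \emph{Correctness} holds because whenever a safety property of \bidsettm is violated — an honestly-submitted $\bid$ with $\bid.\rconf \le \tzero + \Delta$ absent from a bid-set emitted by an honest consumer — the transcript already contains such a witness pair: the transaction certificate for $\bid$ (produced during the execution) and the sequencer's own $\bidcert$, which the protocol obliged it to certify with $\rperf > \tzero + \Delta \ge \bid.\rconf$ before it could sign anything.

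I expect the \textbf{main obstacle} to be condition (i) inside \identifySequencer{}: turning ``$\bidcert$ certifies a large past-perfect round'' into ``the sequencer provably held $\tx^*$''. The abstract \emph{past-perfection} property is stated for two \emph{full} valid pods, whereas $\bidcert$ carries only the past-perfection certificate, so closing the gap needs the concrete structure of \podtm-core's certificates — in particular that each replica's votes form a chain ordered by monotone sequence numbers, so a vote present in $\bidcert$ with timestamp above $\rconf^*$ (which exists for enough replicas once the recomputed $\rperf$ exceeds $\rconf^*$) forces that replica's earlier, lower-sequence-number vote on $\tx^*$ to have reached the sequencer. The delicate sub-case is a Byzantine replica equivocating on one of its sequence numbers, which shifts the blame onto that replica rather than the sequencer — still consistent with no-framing but requiring \identifySequencer{} (or a delegation to \podtm-core's \identify{}) to name the right party. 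A secondary, more routine, piece of care is the timing argument that, with the \emph{fixed} network delay $\delta$, the honest sequencer's \msgbidresult{} becomes confirmed with $\rconf \le \tzero + 3\Delta$ \emph{before} the consumer can observe $\rperf > \tzero + 3\Delta$, so the empty-bid-set branch is never taken against an honest sequencer.
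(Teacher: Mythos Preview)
Your plan matches the paper's proof closely: the paper also breaks the theorem into four lemmas (termination, censorship resistance, weak consistency, accountable safety), proves the two safety properties under the assumption of an \emph{honest} sequencer --- so your tie-breaking addition for the malicious-sequencer case goes beyond what is needed --- and establishes no-framing by the same counting argument over honest replicas' chained (sequence-numbered) votes that you correctly flag as the main obstacle.

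The one substantive divergence is the shape of \identifySequencer{}. The paper's function (\Cref{alg:identify-bidset}) does not compare the recomputed $\rperf$ directly against $\rconf^*$; instead it pins each against the fixed protocol threshold $\tzero + \Delta$ (Check~2 blames the sequencer if the $\bidcert$-recomputed $\rperf \le \tzero+\Delta$; Check~3 blames it if $\rconf^* \le \tzero+\Delta$ and $\tx^* \notin \bag$), and it \emph{also} blames the sequencer whenever $\bidcert$ contains an invalid vote. Your version treats an invalid or low-$\rperf$ $\bidcert$ as ``not evidence'' and implicitly offloads that filtering to the consumer --- but the paper's consumer algorithm performs no such check, so under the protocol as written your \identifySequencer{} would miss a malicious sequencer who attaches a garbage $\bidcert$ yet omits an honest bid. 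Either add those two explicit blame conditions to your \identifySequencer{}, or note that you are strengthening the consumer to verify $\bidcert$ and the threshold before emitting \opResult{}.
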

\vspace{-3mm}
\begin{proof} The proof and \identifySequencer{} are shown in Appendix~\ref{app:bidsetsec}.
\end{proof}

\begin{remark}
    Observe that \bidsettm-core terminates within $\parAuctionTerm = 3\Delta + \delta$ in the worst case, but, if the sequencer is honest, then it terminates within $\parAuctionTerm = \Delta + 3\delta$.
    Moreover, \bidsettm-core is not responsive because \Cref{alg:bid-set-sequencer} waits for a fixed $\Delta$ interval.
    This step can be optimized if the set of bidders is known (i.e., by requiring them to pre-register), which allows for the protocol to be made optimistically responsive (\textit{i.e.}, $\parAuctionTerm = 4\delta$) when all bidders and the sequencer are honest.
\end{remark}
\vspace{-2mm}
\dotparagraph{\textbf{Auctions using \bidsettm}} Building on a \bidsettm protocol, it is trivial to construct single-shot first price and second price open auctions as follows: 1. Bidders place their open bids $\bid$ by calling \opSubmitBid{\bid}; 2. Consumers determine the winner by calling \opReadResult{} to obtain $\bag$ and outputting either the first or second highest bid. We conjecture that single-shot sealed bid auction protocols such as those of~~\cite{trustee,seal,ACNS:DavGenPou22,CCS:TAFWBM23,chitra_et_al:LIPIcs.AFT.2024.19,cryptoeprint:2024/1011} can also be instantiated on top of a \bidsettm protocol. Intuitively, this holds because such protocols first agree on a set of sealed bids and then execute extra steps to determine the winner. However, a formal analysis of sealed-bid auction protocols based on \bidsettm is left as future work.
\section{Discussion}
In this work we present \podtm, a novel consensus layer that finalizes transactions with the optimal one-round-trip latency by eliminating communication among replicas. Instead, clients read the system state by performing lightweight computation on logs retrieved from the replicas. As no replica has a particular role in \podtm (as compared to leaders, block proposers or miners in similar protocols), \podtm achieves censorship resistance by default, without any extra mechanisms or additional cost. Furthermore, replica misbehavior, such as voting in incompatible ways or censoring confirmed transactions, is accountable.

Regarding applications, we have presented an efficient and censorship-resistant auction mechanism, which leverages \podtm as a bulletin board. We show how the accountability, offered by \podtm, is also inherited by applications built on it -- the auctioneer cannot censor confirmed bids without being detected. Similar to auctions, \podtm can enable censorship-resistant voting applications -- \podtm guarantees that no single party or authority can censor or delay a valid vote.

Moreover, payments can be realized on top of \podtm. We leave the complete specification as future work, but outline here two ways in which this can be achieved. The first is by making the replicas stateful, in which case \podtm can directly support a protocol similar to FastPay~\cite{DBLP:conf/aft/BaudetDS20}. The second option is to implement the payment logic on the client side, hence leaving \podtm stateless. This can be achieved using the past-perfection property of \podtm: the sender of a payment  writes the payment transaction to \podtm; the recipient waits until the transaction becomes confirmed and its confirmed round becomes past-perfect; the recipient can then verify whether the sender has created a conflicting transaction before it. Compared to the solution of FastPay, the second approach has the advantage that clients do not need to maintain sequence numbers.

We remark that \podtm differs from standard notions of consensus because it does not offer an agreement property, neither to replicas nor to clients. A client reading the \podtm obtains a past-perfect round \rperf, and it is guaranteed to have received all transactions that obtained a confirmed round \rconf such that $\rconf \leq \rperf$. It is also guaranteed to have received all transactions that can potentially obtain an $\rconf \leq \rperf$ in the future, even though the transaction presently appears to the client as unconfirmed. However, the client cannot tell which unconfirmed transactions will become confirmed. Moreover, a transaction might appear confirmed to one client and unconfirmed to another (in this case, this will be transaction written by a malicious client).

\iflncs
\thispagestyle{plain}
\fi

\ifccs
  \bibliographystyle{ACM-Reference-Format}
\elseiflncs
  \bibliographystyle{splncs04}
\else
  \bibliographystyle{abbrv}
\fi


\pdfbookmark[section]{References}{references}
\bibliography{abbrev3,crypto,biblio}

\appendix
\section{Definition of \podtm monotonicity}\label{app:monotonicity}

The property of \podtm monotonicity requires that, as time advances, $\rmin$ does not decrease, $\rmax$ does not increase and confirmed transactions remain confirmed.

\begin{definition}[\podtm monotonicity]
	A \podtm protocol satisfies \emph{pod monotonicity}, and is called a \emph{monotone \podtm}, if it is a secure \podtm, as per Definition~\ref{def:podsec}, and the following properties hold for any rounds $\round_1,\round_2 > \round_1$ and for any honest client~\client:
	\begin{description}
		\item [Past-perfection monotonicity:] It holds that $\podView{\client}{\round_2}.\rperf \geq \podView{\client}{\round_1}.\rperf$.
		\item [Transaction monotonicity:] If  transaction $\tx$ appears in $\podView{\client}{\round_1}.\txSet$, then $\tx$ appears in $\podView{\client}{\round_2}.\txSet$.
		\item [Confirmation-bounds monotonicity:] For every $\tx$ that appears in $\podView{\client}{\round_1}.\txSet$ with $\rmin,\rmax,\rconf$ and appears in  $\podView{\client}{\round_2}.\txSet$ with $\rmin',\rmax',\rconf'$,  it holds that $\rmin'\geq \rmin$, $\rmax' \leq \rmax$.

	\end{description}
\end{definition}

We now observe that a monotone \podtm protocol can be obtained from any secure \podtm protocol with stateful clients, and that monotonicity implies certain specific properties that may be useful for applications. In particular, our \podtm-core protocol naturally satisfies this property.

\begin{remark}
	Every secure \podtm can be transformed into a monotone \podtm if parties are stateful. Let $\round_1$ be the last round when an honest client $\client$ read the \podtm obtaining view $\podView{\client}{\round_1}$, which is stored as state until $\client$ reads the \podtm again. At any round $\round_2>\round_1$, if $\client$ reads the \podtm and obtains $\podView{\client}{\round_2}$, $\client$ can define a view $\overline{\podView{\client}{\round_2}}$ satisfying the properties of \podtm monotonicity:

	\begin{enumerate}
		\item If $\tx$ appears in $\podView{\client}{\round_1}$ with $\tx.\rmin,\tx.\rmax,\tx.\rconf,\tx.\txcert$, then $\tx$ appears in $\overline{\podView{\client}{\round_2}}$ with $\tx.\overline{\rmin}=\rmin,\tx.\overline{\rmax}=\rmax,\tx.\overline{\rconf}=\rconf,\tx.\overline{\txcert}=\txcert$.
		\item If $\tx$ appears in $\podView{\client}{\round_2}$ with $\tx.\rmin',\tx.\rmax',\tx.\rconf',\tx.\txcert'$ and does not appear in $\podView{\client}{\round_1}$, then $\tx$ appears in $\overline{\podView{\client}{\round_2}}$ with $\tx.\overline{\rmin}=\rmin',\tx.\overline{\rmax}=\rmax',\tx.\overline{\rconf}=\rconf',\tx.\overline{\txcert}=\txcert'$.
		\item For every $\tx$ that appears in $\podView{\client}{\round_1}.\txSet$ and in  $\podView{\client}{\round_2}.\txSet$ such that $\rmin'\geq \rmin$, $\rmax' \leq \rmax$, $\rconf' \geq \rconf$, update $\tx.\overline{\rmin}=\rmin',\tx.\overline{\rmax}=\rmax',\tx.\overline{\rconf}=\rconf',\tx.\overline{\txcert}=\txcert'$.
		\item If  $\podView{\client}{\round_2}.\rperf > \podView{\client}{\round_1}.\rperf$, then $\overline{\podView{\client}{\round_2}}.\rperf=\podView{\client}{\round_2}.\rperf $. Otherwise, $\overline{\podView{\client}{\round_2}}.\rperf=\podView{\client}{\round_1}.\rperf $.

	\end{enumerate}
\end{remark}


In the remarks below, we observe that \podtm monotonicity implies a number of useful properties about the monotonicity of past perfection and the values $\rmin,\rmax,\rconf$ associated to a transaction in the \podtm.

\begin{remark}[Confirmation monotonicity]\label{rem:confstick}
	Properties 2 and 3 of \podtm monotonicity imply that  for any honest client $\client$ and rounds $\round_1, \round_2 > \round_1$,
	if $\tx \in \podView{\client}{\round_1}$ and $\tx.\rconf \neq \bot$, then $\tx \in \podView{\client}{\round_2}$ and $\tx.\rconf \neq \bot$.
\end{remark}



\begin{remark}
	Observe that the \emph{confirmation monotonicity} property in Remark~\ref{rem:confstick} is a specific version of a more general \emph{common subset} property, which would demand the condition for any two honest clients $\client_1, \client_2$.
\end{remark}
\section{Security of Protocol \podtm-core under a Continuum of Byzantine and Omission faults}\label{app:pod-security}
In order to prove Theorem~\ref{thm:pod-security} and establish the security of Protocol \podtm-core shown Construction~\ref{const:pod-core}, we first prove some useful intermediate results. We remind that $n = \alpha + \beta + \gamma$, where $n$ denotes the total number of replicas, $\beta$ denotes the number of Byzantine replicas, \gamma denotes the number of omission-faulty replicas in an execution, and \alpha denotes the number of replicas required to confirm a transaction.

\begin{lemma}[The values for minimum, maximum and \confirmed rounds]\label{lem:assign-min-max}
	Regarding \Cref{alg:client-statistics}, we have the following.
	Consider the list of all timestamps received by a client for a particular transaction, replacing a missing vote from $\rep_j$ with a special value ($\mrt[\rep_j]$ for computing \rmin, $\infty$ for computing \rmax), to get $n$ values in total, sorted in increasing order.
	Assume \mrt is also sorted in increasing order of timestamps.
	\begin{enumerate}
		\item \rmin is the timestamp at index $\lfloor \alpha / 2 \rfloor - \beta$ of this list.
		\item \rmax is the timestamp at index $n - \alpha + \lfloor \alpha / 2 \rfloor + \beta$ of this list.
		\item \rperf is the timestamp at index $\lfloor \alpha / 2 \rfloor - \beta$ of \mrt.
	\end{enumerate}
\end{lemma}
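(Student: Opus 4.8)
The statement is essentially a bookkeeping exercise: I need to trace through the three functions \opMinPossibleTs{}, \opMaxPossibleTs{}, and \opMinPossibleTsForNewTx{} in \Cref{alg:client-statistics} and identify which entry of the (conceptual) fully-sorted length-$n$ list each one returns. The plan is to handle the three cases in turn, keeping careful track of list lengths and the effect of prepending/appending sentinel values, and of the definition $\opMed{Y} = Y[\lfloor |Y|/2 \rfloor]$.

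\textbf{Case 1 (\rmin).} In \opMinPossibleTs{}, the client starts with the map \txVotesMin of received timestamps on \tx, fills every missing replica slot with $\mrt[\rep_j]$ (lines \ref{line:assing-missing-min-begin}--\ref{line:assing-missing-min-end}), obtaining a list of exactly $n$ values, then sorts it increasingly. Call this sorted list $S$ (so $S$ is precisely the length-$n$ list described in the lemma statement). Line \ref{line:prepend-zero} prepends $\beta$ copies of $0$; since $0$ is $\leq$ every timestamp, the prepended list $S' = [0,\ldots,0]\concat S$ is already sorted, has length $n+\beta$, and satisfies $S'[k] = 0$ for $k<\beta$ and $S'[k] = S[k-\beta]$ for $k \geq \beta$. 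The function returns $\opMed{S'[{:}\alpha]}$, i.e. $S'[\lfloor \alpha/2 \rfloor]$ (note $\alpha \le n+\beta$ so the truncation is well-defined and the index lands inside it). Assuming $\lfloor \alpha/2 \rfloor \geq \beta$ — which I should note holds since $\alpha = n-\beta-\gamma \geq 4\beta+2\gamma+1$ under $n \geq 5\beta+3\gamma+1$, so $\lfloor\alpha/2\rfloor \geq 2\beta \geq \beta$ — this equals $S[\lfloor\alpha/2\rfloor - \beta]$, which is exactly claim 1.

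\textbf{Case 2 (\rmax).} Symmetric. In \opMaxPossibleTs{} missing slots are filled with $\infty$ (lines \ref{line:assing-missing-max-begin}--\ref{line:assing-missing-max-end}), giving the length-$n$ sorted list $S$ of the lemma; line \ref{line:apend-infty} appends $\beta$ copies of $\infty$, yielding a sorted list $S''$ of length $n+\beta$ with $S''[k]=S[k]$ for $k<n$. The function returns $\opMed{S''[-\alpha{:}]}$. The subarray $S''[-\alpha{:}]$ has length $\alpha$ and its element at local index $i$ is $S''[(n+\beta-\alpha)+i] = S''[(\beta+\gamma)+i]$, so $\opMed{}$ returns $S''[(\beta+\gamma)+\lfloor\alpha/2\rfloor]$. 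Now $n-\alpha = \beta+\gamma$, so this index is $n-\alpha+\lfloor\alpha/2\rfloor+\beta$ as claimed; I need $\beta+\gamma+\lfloor\alpha/2\rfloor+\beta \le n-1$ for the value to actually come from $S$ rather than an appended $\infty$, which again follows from the resilience bound and should be checked, though it is not strictly needed for the index-identity as stated. \textbf{Case 3 (\rperf).} This is literally \opMinPossibleTs{} with the role of ``votes on \tx'' played by \mrt itself and no fill step needed (\mrt already has one entry per replica), so the same computation as Case 1 with $S = \mrt$ (sorted) gives $\mrt[\lfloor\alpha/2\rfloor - \beta]$, which is claim 3.

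\textbf{Main obstacle.} There is no deep mathematical content; the only thing that can go wrong is an off-by-one in the interaction between (i) the $\beta$ sentinels prepended/appended, (ii) the slicing $[{:}\alpha]$ versus $[-\alpha{:}]$, and (iii) the floor in \opMed{}. The delicate point is making the negative-index slicing in Case 2 precise and confirming that, after prepending/appending, the overall list really is sorted so that "the element at index $j$ of the padded-and-truncated list" coincides with "$S[\,j-\beta\,]$" (resp. the mirrored statement). I would write a short lemma-internal sub-claim: for a sorted list $S$ of length $n$ and $\beta \le \lfloor\alpha/2\rfloor \le \alpha \le n$, $\opMed{([0]^{\beta}\concat S)[{:}\alpha]} = S[\lfloor\alpha/2\rfloor-\beta]$, prove it once by unwinding the index arithmetic, and then apply it (and its mirror) to the three cases. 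I should also remark that the special values $0$ and $\infty$ used for missing votes are exactly what keep the padded lists sorted, so no re-sorting is conceptually needed after the pad step.
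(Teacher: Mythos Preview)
Your proposal is correct and follows exactly the same approach as the paper's own proof: unwind the effect of prepending/appending $\beta$ sentinel values and applying $\opMed{}$ to the first/last $\alpha$ entries, arriving at the stated indices. The paper's proof is a two-sentence version of your argument (omitting the resilience-bound checks you added to justify $\lfloor\alpha/2\rfloor\geq\beta$ and that the index in Case~2 lands before the appended $\infty$'s), so your write-up is in fact more careful than the original.
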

\begin{proof}
	Functions \opMinPossibleTs{} and \opMinPossibleTsForNewTx{} prepend \beta times the $0$ value in the beginning of the list and return the median of the first \alpha values, hence they return the timestamp at index $\lfloor \alpha / 2 \rfloor - \beta$.
	Function \opMaxPossibleTs{} appends \beta times the $\infty$ value at the end of the list and returns the median of the last \alpha values of that list, that is, it ignores the first $n - \alpha  + \beta$ values and returns the timestamp at index $n - \alpha  + \beta + \lfloor \alpha / 2 \rfloor$.
\end{proof}

\begin{lemma}[\rperf bounded by honest timestamp]\label{lem:rperf-value-of-honest}
	Assuming $n \geq 5 \beta + 3 \gamma + 1$ (equiv., $\alpha \geq 4\beta + 2 \gamma + 1$),
	for a valid \Pod with auxiliary data $\cert = (\ppcert, \alltxcert)$,
	there exists some honest replica $\rep_j$, such that the most-recent timestamp \mrt from $\rep_j$ included in $\ppcert$ satisfies $\mrt \leq \Pod.\rperf$.
\end{lemma}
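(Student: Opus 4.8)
The plan is to unfold the validation function and reduce the claim to a counting argument on the sorted list of most-recent timestamps. Since $\opValid{\Pod, \cert} = \true$, the verifier of \Cref{alg:pod-core-valid} reconstructs a local map $\mrt : \rep \rightarrow \tsp$ by replaying $\opProcessVote{}$ on all votes of $\alltxcert$ in increasing order of sequence number, and then checks $\Pod.\rperf = \opMinPossibleTsForNewTx{\mrt}$. Moreover, the checks on lines~\ref{line:valid-check-ppcert-begin}--\ref{line:valid-check-ppcert-end} force each vote in $\ppcert$ to be the one with the largest sequence number from its replica; since honest processing only accepts non-decreasing timestamps per replica (\cref{line:check-old-ts} of \Cref{alg:client-1}), the timestamp stored for $\rep_j$ in $\ppcert$ equals $\mrt[\rep_j]$ in the verifier's reconstructed state. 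Hence it suffices to exhibit an honest replica $\rep_j$ with $\mrt[\rep_j] \leq \Pod.\rperf$ for this map.

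Next I would invoke part~3 of \Cref{lem:assign-min-max}: $\Pod.\rperf$ is the timestamp at sorted index $\lfloor \alpha/2 \rfloor - \beta$ of the $n$-entry map $\mrt$ (sorted increasingly). This index is non-negative because $\alpha \geq 4\beta + 2\gamma + 1 \geq 4\beta + 1$ gives $\lfloor \alpha/2 \rfloor \geq 2\beta \geq \beta$. Consequently the $\lfloor \alpha/2 \rfloor - \beta + 1$ distinct replicas occupying sorted positions $0, 1, \dots, \lfloor \alpha/2 \rfloor - \beta$ all have $\mrt$ value at most $\Pod.\rperf$. Among those replicas at most $\beta + \gamma$ can be faulty (Byzantine or omission-faulty), so at least $\lfloor \alpha/2 \rfloor - \beta + 1 - (\beta + \gamma) = \lfloor \alpha/2 \rfloor - 2\beta - \gamma + 1$ of them are honest.

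Finally I would close the arithmetic: from $\alpha \geq 4\beta + 2\gamma + 1$ we get $\alpha/2 \geq 2\beta + \gamma + \tfrac{1}{2}$, and since $2\beta + \gamma \in \BN$ this gives $\lfloor \alpha/2 \rfloor \geq 2\beta + \gamma$, hence $\lfloor \alpha/2 \rfloor - 2\beta - \gamma + 1 \geq 1$. So at least one honest replica $\rep_j$ lies in that prefix, and by the reduction of the first paragraph the timestamp recorded for $\rep_j$ in $\ppcert$ equals $\mrt[\rep_j] \leq \Pod.\rperf$, which is the claim. I do not expect a serious obstacle; the only points needing care are (i) the identification of the $\ppcert$-timestamp of a replica with the $\mrt$ value the verifier actually uses (relying on per-replica timestamp monotonicity and the ``maximum sequence number'' check of $\opValid{}$), and (ii) doing the floor estimate tightly so that exactly the $n \geq 5\beta + 3\gamma + 1$ bound is consumed; a minor corner case is an honest replica that contributed no vote to $\Pod$, for which $\mrt[\rep_j]$ defaults to $0 \leq \Pod.\rperf$ and $\ppcert$ carries no entry, but the same count still leaves an honest replica among those that did contribute.
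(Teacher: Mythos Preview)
Your proposal is correct and follows essentially the same approach as the paper: invoke validity to identify $\Pod.\rperf$ with the output of $\opMinPossibleTsForNewTx{}$, use \Cref{lem:assign-min-max} to locate it at index $\lfloor \alpha/2 \rfloor - \beta$ of the sorted $\mrt$ list, and conclude via the counting inequality $\beta + \gamma \leq \lfloor \alpha/2 \rfloor - \beta$ that an honest replica must occupy one of the first $\lfloor \alpha/2 \rfloor - \beta + 1$ positions. Your treatment is more explicit than the paper's in linking the $\ppcert$ entries to the verifier's reconstructed $\mrt$ map and in flagging the corner case of an honest replica with no vote, which the paper's proof elides.
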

\begin{proof}
	Since $\opValid{\Pod, \cert} = \true$, the past-perfect round $\Pod.\rperf$ is the value returned by \opMinPossibleTsForNewTx{} of \Cref{alg:client-statistics}.
	From \Cref{lem:assign-min-max} we have that \rperf is the timestamp at index $\lfloor \alpha / 2 \rfloor - \beta$ of sorted \mrt.
	The condition $\alpha \geq 4\beta + 2 \gamma + 1$ implies that $\beta + \gamma \leq \lfloor \alpha / 2 \rfloor - \beta$,
	hence the number of not honest replicas ($\beta + \gamma$) cannot fill all positions between $0$ and $\lfloor \alpha / 2 \rfloor - \beta$,
	hence at least one of the indexes between $0$ and $\lfloor \alpha / 2 \rfloor - \beta$ (inclusive) will
	contain the timestamp created and sent by an honest replica.
\end{proof}

\noindent

We now recall Theorem~\ref{thm:pod-security}, which we prove through a series of lemmas.

\begin{customthm}{\ref{thm:pod-security}}[\podtm-core security]
	Assume that the network is partially synchronous with actual network delay $\delta$, that \beta is the number of Byzantine replicas, \gamma the number of omission-faulty replicas, $\alpha = n - \beta - \gamma$ the confirmation threshold, and $n \geq 5 \beta + 3 \gamma + 1$ the total number of replicas. Protocol \podtm-core (\Cref{const:pod-core}), instantiated with a EUF-CMA secure signature scheme, the \opValid{} function shown in \Cref{alg:pod-core-valid}, and the \identify{} function described in \Cref{alg:identify}, is a responsive secure \podtm (\Cref{def:podsec}) with Confirmation within $\parConf = 2 \delta$, Past-perfection within $\parPerf = \delta$ and  \beta-accountable safety (\Cref{def:accountable-safety}), \ewnp.
\end{customthm}
\begin{proof}
	The proof follows from Lemmas~\ref{lem:pod-confirmation}--\ref{lem:pod-accountable-safety}, presented and proven in the remainder of this section.
\end{proof}

\begin{lemma}[Confirmation within $\parConf$]\label{lem:pod-confirmation}
	For the conditions stated in \Cref{thm:pod-security}, \Cref{const:pod-core} satisfies the \emph{confirmation within $\parConf$} property (\Cref{def:podsec}) for $\parConf = 2 \delta$.
\end{lemma}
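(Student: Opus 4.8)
The plan is to trace the life of a transaction $\tx$ written by an honest client $\client$ at round $\round$ and show that every honest client $\client'$ sees $\tx$ as confirmed by round $\round + 2\delta$. First I would unwind the \opwrite{} call: at round $\round$, $\client$ sends $\msgwrite{\tx}$ to all $n$ replicas (\cref{line:send-write-tx} of \Cref{alg:client-1}). By the network assumption, every honest replica receives this message by round $\round + \delta$; upon receipt it runs \opSendVote{\tx} (unless it already has $\tx$ in its log, in which case it already sent a vote earlier, which is only better), assigning a timestamp $\tsp \le \round + \delta$ and the next sequence number, signing, appending to \repLog, and forwarding $\msgrecord{(\tx,\tsp,\sn,\sig,\rep_i)}$ to every connected client (\cref{line:send-vote}). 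Since $\client'$ connected to each replica at its own initialization and the replica replays its whole log to newly connected clients (lines~\ref{line:rep-receive-connect-begin}--\ref{line:rep-receive-connect-end}), $\client'$ is guaranteed to receive the vote of every honest replica on $\tx$ by round $\round + 2\delta$.

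Next I would argue that these votes are actually \emph{processed} by $\client'$ (i.e., pass all the \textbf{require} checks in \opProcessVote{}), so that $\tsps[\tx]$ accumulates at least $\alpha = n - \beta - \gamma$ entries. Here the key point is that honest replicas use monotonically increasing sequence numbers tied to non-decreasing timestamps (the streaming/ordered-delivery remark and \cref{line:rep-increment-sn}), so the sequence-number check on \cref{line:client-check-sn} is eventually satisfied after backlogging, the timestamp-monotonicity check on \cref{line:check-old-ts} holds because an honest replica never emits a smaller timestamp than one it emitted before, and the no-equivocation check on \cref{line:check-dup-tsp} holds because an honest replica assigns $\tx$ a single timestamp. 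I should note that heartbeat votes (lines~\ref{line:rep-send-heartbeat-begin}--\ref{line:rep-send-heartbeat-end}) interleave with the vote on $\tx$ but do not block it, since they carry distinct dummy transactions and do not violate any check. Thus by round $\round + 2\delta$, for each of the $\ge \alpha$ honest replicas $\rep_j$, the entry $\tsps[\tx][\rep_j]$ is set in $\client'$'s state.

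Then the confirmation follows mechanically: when $\client'$ next calls \opreadall{} (which, since the client is stateful and streaming, reflects state at the current round $\ge \round + 2\delta$), \opComputeTxSet{} iterates over $\tx \in \tsps.\opKeys{}$, finds $|\tsps[\tx].\opKeys{}| \ge \alpha$, and hence enters the block on lines~\ref{line:confirm-tx-begin}--\ref{line:confirm-tx-end}, setting $\rconf = \opMed{\txVotes} \neq \bot$. Therefore $\tx \in \podView{\client'}{\round + 2\delta}.\txSet$ with $\tx.\rconf \neq \bot$, which is exactly the \emph{confirmation within $2\delta$} statement. The main obstacle — and the step deserving the most care — is the processing argument of the second paragraph: one must be precise that no \textbf{require} check in \opProcessVote{} can permanently discard an honest replica's vote on $\tx$, which relies crucially on the interplay between sequence numbers, timestamp monotonicity at honest replicas, the backlogging mechanism, and the fact that $\client'$ receives the \emph{entire} log of each replica (so earlier votes needed to advance $\nextsn[\rep_j]$ also arrive). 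Responsiveness is immediate since the bound $2\delta$ involves only $\delta$ and never $\Delta$.
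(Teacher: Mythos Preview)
Your proposal is correct and follows essentially the same approach as the paper's proof: trace $\msgwrite{\tx}$ from the honest writer to all replicas ($\leq \delta$), then the resulting $\msgrecord{}$ votes from the $\alpha$ honest replicas back to every honest reader ($\leq \delta$), and conclude that the $\alpha$-vote threshold on line~\ref{line:confirm-tx-begin} is met. Your treatment is in fact more careful than the paper's own proof, which does not spell out why the \textbf{require} checks in \opProcessVote{} are passed; the paper simply asserts that $\client'$ ``receives at least $\alpha$ such votes'' and the condition is satisfied.
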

\begin{proof}
	Assume an honest client \client calls \opwrite{\tx} at round \round.
	It sends a message \msgwrite{\tx} to all replicas at round \round (line~\ref{line:send-write-tx}).
	An honest replica receives this by round $\round + \delta$ and sends a \msgrecord{} message back to all connected clients (line~\ref{line:send-vote}).
	An honest client $\client'$ receives the vote by round $\round + 2\delta$.
	As are at least \alpha honest (not Byzantine and not omission-faulty) replicas, $\client'$ receives at least \alpha such votes, hence the condition in line~\ref{line:confirm-tx-begin} is satisfied and $\client'$ observes \tx as \confirmed.
\end{proof}

\begin{lemma}[Past-perfection within $\parPerf$]\label{lem:past-perfection-liveness}
	For the conditions stated in \Cref{thm:pod-security}, \Cref{const:pod-core} satisfies the \emph{past-perfection within $\parPerf$} property (\Cref{def:podsec}) for $\parPerf = \delta$.
\end{lemma}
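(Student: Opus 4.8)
The plan is to show that for any honest client $\client$ and any round $\round \geq \delta$, the past-perfect round computed by $\client$ at round $\round$ satisfies $\podView{\client}{\round}.\rperf \geq \round - \delta$. Recall that $\rperf$ is computed by \opMinPossibleTsForNewTx{} applied to the vector $\mrt$, and by \Cref{lem:assign-min-max} this equals the timestamp at index $\lfloor \alpha/2 \rfloor - \beta$ of $\mrt$ sorted in increasing order. So it suffices to argue that, at the round $\round$ when $\client$ produces its view, at least $n - \lfloor \alpha/2 \rfloor + \beta$ of the entries $\mrt[\rep_j]$ are already $\geq \round - \delta$; equivalently, that \emph{strictly fewer} than $n - (\lfloor \alpha/2 \rfloor - \beta)$ entries lie below $\round - \delta$, so that the entry at the target index is not among the small ones.

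First I would observe the key liveness fact about heartbeats: every honest replica executes the \emph{end round} handler (lines~\ref{line:rep-send-heartbeat-begin}--\ref{line:rep-send-heartbeat-end}) at the end of each round $r$, sending \opSendVote{} on $\heartBeat \| r$, which carries timestamp $\tsp = r$. Hence at round $r' \geq \round - \delta$, say $r' = \round - \delta$, every honest replica has sent (by the end of that round) a vote with timestamp $\geq \round - \delta$, and this vote reaches $\client$ within $\delta$ rounds, i.e.\ by round $\round$. One must be slightly careful: a vote is only \emph{processed} by $\client$ (and thus updates $\mrt$ via \cref{line:upd-mrt}) if it passes the sequence-number check on \cref{line:client-check-sn}; since honest replicas emit heartbeats every round and assign strictly increasing sequence numbers, and all of a replica's earlier votes also arrive within $\delta$, by round $\round$ the client has processed a contiguous prefix of $\rep_j$'s votes that includes one with timestamp $\geq \round - \delta$ (using that $\mrt$ is monotone non-decreasing, \cref{line:check-old-ts}). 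Therefore for every honest replica $\rep_j$ we have $\mrt[\rep_j] \geq \round - \delta$ in $\client$'s state at round $\round$.

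Next, I would count. There are at least $\alpha$ honest replicas, so at least $\alpha$ entries of $\mrt$ are $\geq \round - \delta$, leaving at most $\beta + \gamma = n - \alpha$ entries below $\round - \delta$. After sorting $\mrt$ in increasing order, all entries at indices $\geq n - \alpha$ are therefore $\geq \round - \delta$. It then remains to check that the target index $\lfloor \alpha/2 \rfloor - \beta$ satisfies $\lfloor \alpha/2 \rfloor - \beta \geq n - \alpha$, i.e.\ $\lfloor \alpha/2 \rfloor - \beta \geq \beta + \gamma$, which is exactly the inequality $\alpha \geq 4\beta + 2\gamma + 1$ guaranteed by $n \geq 5\beta + 3\gamma + 1$ (the same bound already used in \Cref{lem:rperf-value-of-honest}). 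Hence the timestamp at index $\lfloor \alpha/2 \rfloor - \beta$ of sorted $\mrt$ is $\geq \round - \delta$, giving $\rperf \geq \round - \delta$ as required.

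The main obstacle I anticipate is the careful handling of the sequence-number / backlogging interaction in the second paragraph: one has to argue that honest replicas' heartbeats cannot be indefinitely stuck behind an undelivered earlier vote. This is fine because an honest replica's entire stream of votes is sent in sequence-number order and every such message is delivered within $\delta$, so the prefix processed by round $\round$ necessarily reaches the heartbeat timestamped $\round - \delta$; I would state this as a short auxiliary observation (or cite the ``processing votes in order'' remark) rather than belabor it. Everything else is the index arithmetic already packaged by \Cref{lem:assign-min-max}.
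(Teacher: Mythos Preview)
Your proposal is correct and follows essentially the same approach as the paper: heartbeats from honest replicas ensure their $\mrt$ entries are at least $\round - \delta$, and the index arithmetic on the sorted $\mrt$ then pins $\rperf$ above $\round - \delta$. The only organizational difference is that the paper invokes \Cref{lem:rperf-value-of-honest} to obtain an honest replica $\rep_j$ with $\mrt[\rep_j] \leq \rperf$ and then argues $\mrt[\rep_j] \geq \round - \delta$ for that single replica, whereas you inline the counting directly (and additionally spell out the sequence-number/backlogging point that the paper elides).
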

\begin{proof}
	Assume an honest client \client at round \round has view $\podView{c}{\round}$.
	From \Cref{lem:rperf-value-of-honest}, there exists some honest replica $\rep_j$, such that the most-recent timestamp $\mrt[\rep_j]$ that $\rep_j$ has sent to \client satisfies $\podView{c}{\round}.\rperf \geq \mrt[\rep_j]$.
	The honest replica $\rep_j$ sends at least one heartbeat or vote message per round (line~\ref{line:rep-send-vote-heartbeat}), which arrives within $\delta$ rounds, and an honest client updates $\mrt[\rep_j]$ when it receives the heartbeat or vote message.
	Hence, \client will have $\mrt[\rep_j] \geq \round - \delta$.
	All together, $\podView{c}{\round}.\rperf \geq \round - \delta$.
\end{proof}

\begin{lemma}[Past-perfection safety]\label{lem:pod-pp-safety}
	For the conditions stated in \Cref{thm:pod-security}, \Cref{const:pod-core} satisfies the \emph{past-perfection} safety property (\Cref{def:podsec}), except with negligible probability.
\end{lemma}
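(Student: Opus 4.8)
The plan is to prove this by contradiction, reducing any violation to either a signature forgery or a combinatorial impossibility given $n \geq 5\beta + 3\gamma + 1$. Suppose the adversary outputs two valid pods $(\Pod_1, \cert_1)$ and $(\Pod_2, \cert_2)$ and a transaction $\tx$ with $\tx \notin \Pod_1.\txSet$ but $(\tx, \rmin^2, \rmax^2, \rconf^2) \in \Pod_2.\txSet$ with $\bot \neq \rconf^2 < \Pod_1.\rperf$. Since $\opValid{\Pod_2,\cert_2} = \true$, the transaction $\tx$ was confirmed in $\Pod_2$, so by the confirmation condition (\cref{line:confirm-tx-begin} of \Cref{alg:client-statistics}) the transaction certificate $\alltxcert_2[\tx]$ contains valid votes from at least $\alpha$ distinct replicas, and $\rconf^2 = \opMed{\txVotes}$ is the median of those $\alpha$ timestamps. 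In particular, at least $\lfloor \alpha/2 \rfloor + 1$ of those votes carry a timestamp $\tsp \leq \rconf^2$; of these, at least $\lfloor\alpha/2\rfloor + 1 - \beta$ come from \emph{honest} replicas (here "honest" means not Byzantine; omission-faulty replicas still sign honestly, so they may be counted among the signers). Call this set of honest replicas $H$, with $|H| \geq \lfloor \alpha/2\rfloor + 1 - \beta$.

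The next step is to analyze $\Pod_1.\rperf$. Since $\opValid{\Pod_1,\cert_1} = \true$, by \Cref{lem:assign-min-max}(3) the value $\Pod_1.\rperf$ is the timestamp at index $\lfloor\alpha/2\rfloor - \beta$ of $\ppcert_1$'s most-recent timestamps sorted in increasing order (after prepending $\beta$ zeros), equivalently it is the $(\lfloor\alpha/2\rfloor - \beta + 1)$-th smallest among the genuine $\mrt$ values in $\ppcert_1$. So \emph{at most} $\lfloor\alpha/2\rfloor - \beta$ replicas in $\ppcert_1$ have a most-recent timestamp strictly less than $\Pod_1.\rperf$; equivalently, all but at most $\lfloor\alpha/2\rfloor - \beta$ of the $n$ replicas have $\mrt \geq \Pod_1.\rperf$ recorded in $\ppcert_1$ (using that $\ppcert_1$ has an entry for every replica, enforced by \opInitClient and the validity checks). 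The key point: for an \emph{honest} replica $\rep_j$, the $\mrt$ value in $\ppcert_1$ is a timestamp that $\rep_j$ genuinely assigned and signed at its local round equal to that timestamp, and by the monotone-sequence-number discipline (\cref{line:rep-increment-sn}, and the in-order processing in \opValid via \cref{line:client-check-sn}, \cref{line:check-old-ts}), \emph{every} transaction that $\rep_j$ signed with a timestamp $\leq \mrt$ was processed before $\ppcert_1$ was formed — so if $\ppcert_1$ records $\mrt[\rep_j] \geq \Pod_1.\rperf > \rconf^2 \geq$ (the timestamp $\rep_j$ gave $\tx$), then the vote of $\rep_j$ on $\tx$ must appear among the votes processed when constructing $\Pod_1$, hence $\tx \in \Pod_1.\txSet$ by \cref{line:loop-over-txs} of \opComputeTxSet — contradiction.

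Therefore it must be that \emph{no} honest replica in $H$ has $\mrt[\rep_j] \geq \Pod_1.\rperf$ in $\ppcert_1$; i.e. every replica in $H$ is among the at most $\lfloor\alpha/2\rfloor - \beta$ replicas whose recorded $\mrt$ is below $\Pod_1.\rperf$. Hence $|H| \leq \lfloor\alpha/2\rfloor - \beta$. Combined with $|H| \geq \lfloor\alpha/2\rfloor + 1 - \beta$ this is a contradiction. (Strictly, I should double-check the boundary arithmetic — whether an honest replica could legitimately record a stale $\mrt$ in $\ppcert_1$ because the adversary withheld its later votes from that particular reader; this is exactly where the argument needs care, and it is handled by observing that $\ppcert_1$ records whatever most-recent vote the pod's constructor saw, and the constructor is free to have seen an old one — so the correct statement is that the relevant honest replica's \emph{recorded} $\mrt$ being $\geq \Pod_1.\rperf$ forces $\tx$ into $\Pod_1$, and the counting above shows at least one $H$-replica must be so recorded.) The main obstacle is precisely this interplay: making rigorous that "a high recorded $\mrt$ for an honest replica pulls in all of that replica's lower-timestamped transactions" via the sequence-number chaining, and then getting the counting bound $|H| > \lfloor\alpha/2\rfloor - \beta$ to force at least one such replica to exist. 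The bound $\alpha \geq 4\beta + 2\gamma + 1$ (equivalently $n \geq 5\beta + 3\gamma + 1$) is what guarantees $\lfloor\alpha/2\rfloor + 1 - \beta > \lfloor\alpha/2\rfloor - \beta$ survives the worst-case overlap of the Byzantine set with both the confirming quorum and the low-$\mrt$ positions; I would track the $\beta$'s and $\gamma$'s carefully through both the median-of-$\alpha$ confirmation argument and the index-$(\lfloor\alpha/2\rfloor - \beta)$ past-perfect argument, and invoke EUF-CMA security once at the start to assume all processed votes are genuinely signed.
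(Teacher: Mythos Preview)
Your proposal is correct and takes essentially the same approach as the paper. Both arguments hinge on the sequence-number chaining (a high recorded $\mrt$ for a non-Byzantine replica in $\ppcert_1$ forces all of that replica's earlier-timestamped votes, including the one on $\tx$, into $\Pod_1$) and then a pigeonhole count; the paper phrases the count as ``the set $\allreps_1$ of replicas with $\mrt\geq\Pod_1.\rperf$ and the set $\allreps_2$ of replicas voting $\tx$ with timestamp $<\Pod_1.\rperf$ intersect in at least $\beta+1$ replicas, hence in an honest one,'' while you phrase it dually as ``the honest low-timestamp voters $H$ are too many to fit inside the low-$\mrt$ set.'' These are the same argument.

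One small correction: your closing remark that the resilience bound $n\geq 5\beta+3\gamma+1$ is what makes $\lfloor\alpha/2\rfloor+1-\beta>\lfloor\alpha/2\rfloor-\beta$ ``survive'' is off---that inequality is just $1>0$. The threshold is used here only to ensure the index $\lfloor\alpha/2\rfloor-\beta$ is non-negative so that $\rperf$ is well-defined and bounded by an honest timestamp (cf.\ \Cref{lem:rperf-value-of-honest}); the pigeonhole step itself needs no further assumption.
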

\begin{proof}
	Assume the adversary outputs valid $(\Pod_1, \cert_1)$ and $(\Pod_2, \cert_2)$ that violate the property,
	i.e., there exists a transaction $\tx$ such that $(\tx, \rmin^1, \rmax^1, \rconf^1) \not\in \Pod_1.\txSet$ and $(\tx, \rmin^2, \rmax^2, \rconf^2) \in \Pod_2.\txSet$ and $\rconf^2 \neq \bot$ and $\rconf^2 < \Pod_1.\rperf$.
	Let $\cert_1 = (\ppcert^1, \alltxcert^1)$ and $\cert_2 = (\ppcert^2, \alltxcert^2)$.

	Let $\allreps_1$ be the set of replicas $\rep_i$ for which $\ppcert^1$ contains a vote with timestamp $\mrt_i \geq \Pod_1.\rperf$.
	From \Cref{lem:assign-min-max} (\rperf is computed as the timestamp at index $\lfloor \alpha / 2 \rfloor - \beta$ of sorted \mrt), and since $\Pod_1$ is valid, there exist at least $n - \lfloor a/2 \rfloor + \beta$ such replicas, hence $|\allreps_1| \geq n - \lfloor a/2 \rfloor + \beta$.
	For each $\rep_i \in \allreps_1$, the transaction certificates $\alltxcert^1$ contain the whole log of $\rep_i$ with timestamps up to $\mrt_i$ (\cref{line:client-check-sn} of \Cref{alg:client-1} does not allow gaps in the sequence number of the received votes).
	That is, for each $\rep_i \in \allreps_1$ the certificates $\alltxcert^1$ contains votes
	\begin{equation}\label{eq:proof-pp-set1}
		(\tx_{i,1}, \tsp_{i,1}, 1, \sigma_{i,1}, \rep_i), (\tx_{i,2}, \tsp_{i,2}, 2, \sigma_{i,2}, \rep_i), \ldots, (\tx_{i,k_i}, \tsp_{i, k_i}, k_i, \sigma_{i, k_i}, \rep_i),
	\end{equation}
	where $k_i$ is the smallest sequence number for which $\tsp_{i, k_i} \geq \Pod_1.\rperf$, and $\tx_{i,j}$ are transactions.

	Since \tx is confirmed in $\Pod_2$ and $\rconf^2 < \Pod_1.\rperf$, the transaction certificate $\alltxcert^2[\tx]$ must contain votes on \tx with timestamp $\tsp_i$, such that $\tsp_i < \Pod_1.\rperf$, from at least $\lfloor \alpha/2 \rfloor + 1$ replicas.
	Let $\allreps_2$ be the set of these replicas, with $|\allreps_2| \geq \lfloor \alpha/2 \rfloor + 1$.
	For each $\rep_i \in \allreps_2$, certificate $\alltxcert^2[\tx]$ contains a vote
	\begin{equation}\label{eq:proof-pp-vote2}
		(\tx, \tsp_i, \sn_i, \sigma_i, \rep_i),
	\end{equation}
	such that $\tsp_i < \Pod_1.\rperf$.
	We will show that, if at most \beta replicas are Byzantine, this leads to a contradiction.
	Observe from the cardinality of $\allreps_1$ and $\allreps_2$ that at least $\beta + 1$ replicas must be in both sets, hence at least one honest replica must be in both sets (except if the adversary forges a signature under the public key of an honest replica, which happens with negligible probability).
	For that replica, the vote in (\ref{eq:proof-pp-vote2}) must be one of the votes in (\ref{eq:proof-pp-set1}) since $\tsp_i < \Pod_1.\rperf$ and $\tsp_{i, m_i} \geq \Pod_1.\rperf$.
	Hence, one of the $\tx_{i,j}$ in (\ref{eq:proof-pp-set1}) is \tx, and $\tx$ must appear in $\Pod_1.\txSet$, a contradiction.
\end{proof}


\begin{lemma}[Confirmation bounds]\label{lem:bounds-safety}
	For the conditions stated in \Cref{thm:pod-security}, \Cref{const:pod-core} satisfies the \emph{confirmation bounds} safety property (\Cref{def:podsec}), except with negligible probability.
\end{lemma}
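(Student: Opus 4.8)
The plan is to argue by contradiction. Suppose the adversary produces valid $(\Pod_1, \cert_1)$ and $(\Pod_2, \cert_2)$ and a transaction $\tx$ with $(\tx, \rmin^1, \rmax^1, \rconf^1) \in \Pod_1.\txSet$, $(\tx, \rmin^2, \rmax^2, \rconf^2) \in \Pod_2.\txSet$, and $\rmin^1 > \rconf^2$ or $\rmax^1 < \rconf^2$. Then $\rconf^2 \neq \bot$, so by validity of $\Pod_2$ (the test on \cref{line:confirm-tx-begin} of \Cref{alg:client-statistics}) a set $Q$ of at least $\alpha$ distinct replicas cast valid votes on $\tx$ among those processed by a verifier of $\Pod_2$, and $\rconf^2$ is the median of the associated $\alpha$ timestamps; in particular at least $\lfloor \alpha/2 \rfloor + 1$ of them are $\leq \rconf^2$ and at least $\lceil \alpha/2 \rceil$ are $\geq \rconf^2$. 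I would then invoke \Cref{lem:assign-min-max}: $\rmin^1$ and $\rmax^1$ are, respectively, the entries at index $\lfloor \alpha/2 \rfloor - \beta$ and at index $n - \alpha + \lfloor \alpha/2 \rfloor + \beta = n - (\lceil \alpha/2 \rceil - \beta)$ of the sorted list of the $n$ ``slot values'' that a verifier of $\Pod_1$ uses for $\tx$ --- the actual recorded timestamp whenever $\alltxcert^1$ contains that replica's vote on $\tx$, and the fill-in value $\mrt[\rep_j]$ (for $\rmin^1$) or $\infty$ (for $\rmax^1$) otherwise.

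The crux is the following claim about non-Byzantine replicas $\rep_j \in Q$: if $\rep_j$ voted on $\tx$ with timestamp $t_j$ in $\Pod_2$, then the slot value a verifier of $\Pod_1$ assigns to $\rep_j$ when computing $\rmin^1$ is $\leq t_j$, and the one it assigns when computing $\rmax^1$ is $\geq t_j$. All such statements hold except with negligible probability, since by EUF-CMA security a valid signature under a non-Byzantine replica's key was produced by that replica, and such a replica assigns at most one timestamp to $\tx$ and uses strictly increasing sequence numbers carrying non-decreasing timestamps. If $\alltxcert^1$ contains $\rep_j$'s vote on $\tx$, then --- since $\Pod_1$ is valid, the verifier processes it --- the recorded timestamp equals exactly $t_j$, so both slot values equal $t_j$. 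If $\alltxcert^1$ does not contain that vote, say it has sequence number $s_j$, then because the verifier (\cref{line:client-check-sn} of \Cref{alg:client-1}) rejects gaps in sequence numbers, validity of $\Pod_1$ forces $\alltxcert^1$ to contain only a contiguous prefix of $\rep_j$'s votes, none reaching $s_j$; hence $\rep_j$'s slot is filled, with $\infty \geq t_j$ for $\rmax^1$, and with $\mrt[\rep_j]$ for $\rmin^1$ --- the timestamp of $\rep_j$'s last processed vote, which has sequence number below $s_j$ and therefore (by monotonicity) timestamp $\leq t_j$, or $0 \leq t_j$ if the verifier processed no vote from $\rep_j$.

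Granting this claim, each case becomes a counting argument. At least $\lfloor \alpha/2 \rfloor + 1$ votes on $\tx$ in $\Pod_2$ have timestamp $\leq \rconf^2$, hence at least $\lfloor \alpha/2 \rfloor + 1 - \beta$ of them come from non-Byzantine replicas; by the claim, that many of $\Pod_1$'s min-slot values are $\leq \rconf^2$, and since $\lfloor \alpha/2 \rfloor + 1 - \beta = (\lfloor \alpha/2 \rfloor - \beta) + 1$, the entry at index $\lfloor \alpha/2 \rfloor - \beta$ is $\leq \rconf^2$, i.e.\ $\rmin^1 \leq \rconf^2$, a contradiction. Symmetrically, at least $\lceil \alpha/2 \rceil - \beta$ of $\Pod_1$'s max-slot values are $\geq \rconf^2$; these occupy the top $\lceil \alpha/2 \rceil - \beta$ indices of the sorted list, so the entry at index $n - (\lceil \alpha/2 \rceil - \beta)$ is $\geq \rconf^2$, i.e.\ $\rmax^1 \geq \rconf^2$, again a contradiction. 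The hypothesis $n \geq 5\beta + 3\gamma + 1$ of \Cref{thm:pod-security} is far more than needed here: it is used only to ensure $\alpha \geq 2\beta$, which keeps the indices above non-negative and guarantees at least one non-Byzantine replica to invoke. I expect the single delicate point to be the absent-vote subcase of the claim for $\rmin^1$: it is precisely there that the no-gap sequence-number chaining enforced by the validity function is indispensable, and it must be combined carefully with the negligible-probability bookkeeping coming from signature unforgeability.
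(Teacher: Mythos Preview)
Your proposal is correct and follows the same median-counting argument as the paper's own proof. You are, if anything, more careful: the paper implicitly relies on what you isolate as the ``claim'' about non-Byzantine replicas' slot values in $\Pod_1$, but does not spell out the sequence-number chaining that justifies it when the vote on $\tx$ is absent from $\alltxcert^1$.
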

\begin{proof}
	Assume the adversary outputs $(\Pod_1, \cert_1)$ and $(\Pod_2, \cert_2)$, such that $\opValid{\Pod_1, \cert_1} \land \opValid{\Pod_2, \cert_2}$ and there exists a transaction $\tx$ such that $(\tx, \rmin^1, \rmax^1, \rconf^1) \in \Pod_1.\txSet$ and $(\tx, \rmin^2, \rmax^2, \rconf^2) \in \Pod_2.\txSet$.
	Let $\cert_1 = (\ppcert^1, \alltxcert^1)$ and $\cert_2 = (\ppcert^2, \alltxcert^2)$, and $\txcert^1 = \alltxcert^1[\tx]$ and $\txcert^2 = \alltxcert^2[\tx]$.

	First assume $\rmin^1 > \rconf^2$.
	From \Cref{lem:assign-min-max}, $\txcert^1$ can include at most $\lfloor \alpha / 2 \rfloor - \beta$ votes with a timestamp for \tx smaller than $\rmin^1$.
	Allowing up to \beta replicas to equivocate, the adversary can obtain at most $\lfloor \alpha / 2 \rfloor$ votes on \tx with a timestamp smaller than $\rmin^1$, except if it forges a digital signature from an honest replica, which happens with negligible probability.
	In order to compute $\rconf^2 < \rmin^1$ for \tx, the adversary must include in $\txcert^2$ timestamps smaller than $\rmin^1$ from at least $\lfloor \alpha / 2 \rfloor + 1$ replicas.

	Now assume $\rmax^1 < \rconf^2$.
	Using \Cref{lem:assign-min-max}, $\txcert^1$ can include at most $\alpha - \lfloor \alpha / 2 \rfloor - \beta - 1$ votes with a timestamp larger than \rmax,
	hence the number of honest replicas, from which a vote with timestamp larger than \rmax can be included in $\txcert^2$ is at most $\alpha - \lfloor \alpha / 2 \rfloor - 1$ (since \beta are malicious).
	If \alpha is odd, this upper bound becomes $\alpha - \lfloor \alpha / 2 \rfloor -1 = \lfloor \alpha / 2 \rfloor $, while at least
	$\lfloor \alpha / 2 \rfloor + 1$ votes larger that \rmax are required to compute a median larger than \rmax,
	and if \alpha is even, then $\alpha - \lfloor \alpha / 2 \rfloor - 1 = \lfloor \alpha / 2 \rfloor - 1$, while at least
	$\lfloor \alpha / 2 \rfloor$ votes larger that \rmax are required to compute a median larger than \rmax.
	(we remind that \cref{alg:client-statistics} returns as median the value at position $\lfloor \alpha /  2 \rfloor$).
	In either case, we get a contradiction, except for the negligible probability that the adversary forges a digital signature from an honest replica.
\end{proof}

\begin{algorithm}[]
    \caption{The \identify{} function for Protocol \podtm-core (\Cref{const:pod-core}).}
    \label{alg:identify}
    \begin{algorithmic}[1]
        \Function{\identify{\tran}}{}
            \Let{\cheaters}{\emptyset}
            \For{$ \msgrecord{(\tx_1, \tsp_1, \sn_1, \sig_1, \rep_1)} \in \tran$}
                \If{\textbf{not} \opVerify{$\pk_1, (\tx_1, \tsp_1, \sn_1), \sig_1$}}
                    \State{\textbf{continue}}
                \EndIf
                \For{$ \msgrecord{(\tx_2, \tsp_2, \sn_2, \sig_2, \rep_2)} \in \tran$}
                    \If{\textbf{not} \opVerify{$\pk_2, (\tx_2, \tsp_2, \sn_2), \sig_2$}}
                        \State{\textbf{continue}}
                    \EndIf
                    \If{$\rep_1  = \rep_2 \textbf { and } \sn_1 = \sn_2 \textbf { and } (\tx_1 \neq \tx_2 \textbf { or } \tsp_1 \neq \tsp_2)$ }\label{line:identify-cheat}
                        \Let{\cheaters}{\cheaters \cup \{\rep_1\}}
                    \EndIf
                \EndFor
            \EndFor
        \EndFunction
    \end{algorithmic}
    \end{algorithm}

\begin{lemma}[\beta-Accountable safety]\label{lem:pod-accountable-safety}
	For the conditions stated in \Cref{thm:pod-security}, \Cref{const:pod-core} satisfies accountable safety (\Cref{def:accountable-safety}) with resilience \beta, except with negligible probability.
\end{lemma}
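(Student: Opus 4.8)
The plan is to establish the two conditions of \Cref{def:accountable-safety} for the function \identify{} of \Cref{alg:identify}, whose only effect is to output a replica $\rep$ exactly when its input contains two validly signed votes of $\rep$ that share a sequence number but disagree on the transaction or on the timestamp — a ``same-sequence-number equivocation''. First I would dispatch \emph{no-framing}: an honest replica advances its sequence counter by one after every vote (line~\ref{line:rep-increment-sn} of \Cref{alg:pod-core-replica}), hence signs at most one message per sequence number and never produces an equivocating pair; the only way \identify{} could implicate an honest $\rep$ on a transcript from a genuine execution is therefore if that transcript contains a validly signed vote of $\rep$ that $\rep$ never issued, which a standard reduction rules out under EUF-CMA security of the signature scheme \ewnp.

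For \emph{correctness}, suppose some safety property of \Cref{def:podsec} is violated, so the adversary produces valid $(\Pod_1,\cert_1)$ and $(\Pod_2,\cert_2)$ witnessing a failure of \emph{past-perfection} or of \emph{confirmation bounds}. I would take $\tran$ to be the set of vote messages occurring in $\cert_1$ and $\cert_2$ (a legitimate partial transcript, since these votes carry valid signatures, and exactly the ``pair of parties'' local-transcript case flagged in the remark after \Cref{def:accountable-safety}) and show $|\identify{\tran}| \ge \beta+1$. The counting is inherited from the proofs of \Cref{lem:pod-pp-safety} and \Cref{lem:bounds-safety}: in either case the two certificates pin down two replica sets whose cardinalities sum to at least $n+\beta+1$, so their intersection has at least $\beta+1$ elements, and for every replica $\rep$ in that intersection the union $\cert_1\cup\cert_2$ contains a pair of $\rep$'s votes that are ``incompatible'' in the sense used there (a vote on $\tx$ with a small timestamp, alongside either a conflicting vote on $\tx$ or a complete gap-free prefix of $\rep$'s log whose timestamps are all large). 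The new ingredient is to upgrade ``incompatible'' to ``same-sequence-number equivocation'': leveraging that \opValid{} feeds each replica's votes to \opProcessVote{} in increasing sequence-number order, accepts only non-decreasing timestamps (line~\ref{line:check-old-ts} of \Cref{alg:client-1}), rejects a second distinct timestamp for the same transaction (line~\ref{line:check-dup-tsp}), and permits no gaps in sequence numbers (line~\ref{line:client-check-sn}), I would argue that if $\rep$'s two incompatible votes bore \emph{different} sequence numbers then one of $\opValid{\Pod_1,\cert_1}$, $\opValid{\Pod_2,\cert_2}$ would necessarily have rejected, contradicting validity. Hence the sequence numbers coincide, \identify{} flags $\rep$, and since this applies to all $\ge \beta+1$ replicas in the intersection we obtain $|\cheaters| = |\identify{\tran}| \ge \beta+1 > \beta$. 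Combined with no-framing, this gives \beta-accountable safety \ewnp.

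I expect the main obstacle to be precisely this last reduction from ``incompatible votes'' to ``same-sequence-number equivocation''. Because \identify{} recognises only one syntactic pattern, the proof must rule out every softer deviation by which a corrupt replica could make two valid pods disagree — replaying a timestamp under a fresh sequence number, inserting a vote out of timestamp order, or exploiting a sequence-number gap — by tracing each back to a concrete rejecting check inside \opProcessVote{} rather than to \identify{}. Carrying this out uniformly for the past-perfection case and for both the $\rmin$ and the $\rmax$ sub-cases of confirmation bounds is where essentially all the effort lies; the quorum-intersection counting itself is already available from \Cref{lem:pod-pp-safety} and \Cref{lem:bounds-safety}.
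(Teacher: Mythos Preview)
Your proposal is correct and follows essentially the same approach as the paper: no-framing via the sequence-counter increment plus EUF-CMA, and correctness by reusing the quorum-intersection counts from \Cref{lem:pod-pp-safety} and \Cref{lem:bounds-safety} and then showing that, for each replica in the intersection, the two gap-free log prefixes contained in $\cert_1$ and $\cert_2$ must disagree at some sequence number. The paper carries out that last step by a direct case analysis on the relative lengths of the two prefixes (e.g., $k_i > k'_i$ versus $k_i \le k'_i$) rather than via the contrapositive you sketch, but the substance is the same and your identification of where the real work lies is accurate.
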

\begin{proof}
	We show that \identify{} (\Cref{alg:identify}) satisfies the \emph{correctness} and \emph{no-framing} properties required by \Cref{def:accountable-safety}, in three steps.

	\dotparagraph{1}
	If the past-perfection safety property (\Cref{def:podsec}) is violated, there exists a partial transcript \tran, such that \identify{} on input \tran returns at least \beta replicas.

	\noindent
	\emph{Proof:}
	We resume the proof of \Cref{lem:pod-pp-safety}. There, we constructed sets $\allreps_1, \allreps_2$,
	such that $\allreps_1 \cap \allreps_2 \geq \beta + 1$.
	We saw that, for each $\rep_i \in \allreps_1 \cap \allreps_2$, certificates $\alltxcert^1$ contain the replica log shown in (\ref{eq:proof-pp-set1}), containing all votes with timestamp up to $\tsp_{i, k_i} \geq \rperf$.
	In a similar logic, certificates $\alltxcert^2$ contains the following $k'_i$ votes from $\rep_i$ (possibly more, but we care for the votes up to transaction \tx)
	\begin{equation}\label{eq:proof-pp-set2}
		(\tx'_{i,1}, \tsp'_{i,1}, 1, \sigma'_{i,1}, \rep_i), (\tx'_{i,2}, \tsp'_{i,2}, 2, \sigma'_{i,2}, \rep_i), \ldots, (\tx'_{i, k'_i}, \tsp'_{i, k'_i}, k'_i, \sigma'_{i, k'_i}, \rep_i),
	\end{equation}
	with $\tx'_{i, k'_i} = \tx$ and $\tsp'_{i, k'_i} < \rperf$.
	Obviously, for an honest $\rep_i$, the replica logs of (\ref{eq:proof-pp-set1}) and (\ref{eq:proof-pp-set2}) must be identical, i.e., $\tx_{i,j} = \tx'_{i,j}$ and $\tsp_{i,j} = \tsp'_{i,j}$, for $j \in [ 1, \min(k_i, k'_i)]$.
	We will show that they differ in at least one sequence number.
	If $k_i > k'_i$, then the replica logs differ at sequence number $k'_i$, because the transaction $\tx_{i, k_i}$ in (\ref{eq:proof-pp-set1}) cannot be \tx, as $\Pod_1.\txSet$ does not contain \tx, and $\tx'_{i, k'_i} = \tx$.
	If $k_i \leq k'_i$, the log of (\ref{eq:proof-pp-set1}) should be identical with the first $k_i$ positions of the log of (\ref{eq:proof-pp-set2}), which would imply that $\tsp_{i, k_i} = \tsp'_{i, k_i}$ and, since a valid pod only accepts non-decreasing timestamps, $\tsp'_{i, k_i} \leq \tsp'_{i, k'_i}$, and all together $\tsp_{i, k_i} \leq \tsp'_{i, k'_i}$.
	This is impossible, because $\tsp_{i, k_i} > \rperf$ and $\tsp'_{i, k'_i} < \rperf$.
	Hence, the two logs will contain a different timestamp for some sequence number in $[1, k'_i]$.

	Summarizing, we have shown for at least $\beta + 1$ replicas $\rep_i \in \allreps_1 \cap \allreps_2$, certificate $\cert_1$ and $\cert_2$ contain votes $(\tx_1, \tsp_1, \sn_1, \sig_1, \rep_i)$ and $(\tx_2, \tsp_2, \sn_2, \sig_2, \rep_i)$, such that $\sn_1 = \sn_2$ but $\tx_1 \neq \tx_2$ or $\tsp_1 \neq \tsp_2$.
	On input a set $\tran$ that contains these votes, function \identify{\tran} returns $\allreps_1 \cap \allreps_2$.

	\dotparagraph{2}
	If the confirmation-bounds property (\Cref{def:podsec}) is violated, there exists a partial transcript \tran, such that \Cref{alg:identify} on input \tran returns at least \beta replicas.

	\noindent
	\emph{Proof:}
	As in the proof of \Cref{lem:bounds-safety}, assume the adversary outputs $(\Pod_1, \cert_1)$ and $(\Pod_2, \cert_2)$, such that $\opValid{\Pod_1, \cert_1} \land \opValid{\Pod_2, \cert_2}$ and there exists a transaction $\tx$ such that $(\tx, \rmin^1, \rmax^1, \rconf^1) \in \Pod_1.\txSet$, $(\tx, \rmin^2, \rmax^2, \rconf^2) \in \Pod_2.\txSet$, and $\rmin^1 > \rconf^2 \lor \rmax^1 < \rconf^2$
	Let $\cert_1 = (\ppcert^1, \alltxcert^1)$ and $\cert_2 = (\ppcert^2, \alltxcert^2)$, and $\txcert^1 = \alltxcert^1[\tx]$ and $\txcert^2 = \alltxcert^2[\tx]$.

	Let's take the case $\rmin^1 > \rconf^2$ first. From \Cref{lem:assign-min-max} (\txVotesMin contains at least $n - \lfloor \alpha / 2 \rfloor + \beta$ timestamps \tsp such that $\tsp \geq \rmin$), there is a set $\allreps_1$ with at least $n - \lfloor \alpha / 2 \rfloor + \beta$ replicas $\rep_i$, from each of which $\alltxcert^1$ contains votes
	\begin{equation}\label{eq:proof-bounds-set1}
		(\tx_{i,1}, \tsp_{i,1}, 1, \sigma_{i,1}, \rep_i), (\tx_{i,2}, \tsp_{i,2}, 2, \sigma_{i,2}, \rep_i),
		\ldots, (\tx_{i,m_i}, \tsp_{i, m_i}, m_i, \sigma_{i, m_i}, \rep_i),
	\end{equation}
	up to some sequence number $m_i$, such that $\tsp_{i, m_i} \geq \rmin$ and either $\tx_{i,m_i} = \tx$ (i.e., a vote from $\rep_i$ on \tx is included in $\txcert^1$, and we only consider the votes up to this one), or  $\tx_{i, j} \neq \tx, \forall j \leq m_i$ (i.e., a vote from $\rep_i$ on \tx is not included in $\txcert^1$, in which case \txVotesMin contains the timestamp $\rep_i$ has sent on $\tx_{i,m_i} \neq \tx$).

	Now, for a valid $\Pod_2$ to output $\rconf^2 < \rmin^1$, certificate $\txcert^2$ must contain timestamps smaller than \rmin from at least $\lfloor \alpha / 2 \rfloor + 1$ replicas. Call this set $\allreps_2$. From each of these replicas, certificates $\alltxcert^2$ must contain votes
	\begin{equation}\label{eq:proof-bounds-set2}
		(\tx'_{i,1}, \tsp'_{i,1}, 1, \sigma'_{i,1}, \rep_i), (\tx'_{i,2}, \tsp'_{i,2}, 2, \sigma'_{i,2}, \rep_i), \ldots, (\tx, \tsp'_{i, m'_i}, m'_i, \sigma'_{i, m'_i}, \rep_i),
	\end{equation}
	considering only votes up to \tx, for which $\tsp'_{i, m'_i} < \rmin$.

	By counting arguments there are at least $\beta + 1$ replicas in $\allreps_1 \cap \allreps_2$.
	For each one, we make the following argument. Since $\tsp_{i, m_i} \geq \rmin$ and $\tsp'_{i, m'_i} < \rmin$, we get $\tsp'_{i, m'_i} < \tsp_{i, m_i}$, and it must be the case that $m'_i < m_i$ (otherwise, the two logs will differ at a smaller sequence number, similar to the previous case).
	But in this case the two logs differ at sequence number $m'_i$, i.e., $\tx_{i,m'_i} \neq \tx'_{i,m'_i} = \tx$.
	This is because the log of (\ref{eq:proof-bounds-set1}) either does not contain \tx, or contains it at sequence number $m_i > m'_i$, in which case it must contain a different transaction at sequence number $m'_i$.
	On input a set $\tran$ that contains all votes for replicas in $\allreps_1$ and $\allreps_2$ votes, function \identify{\tran} returns $\allreps_1 \cap \allreps_2$.

	For the case $\rmax^1 < \rconf^2$, similar arguments apply.
	In order to compute $\rconf^2 > \rmax^1$, certificate $\txcert^2$ must contain at least $\lfloor \alpha / 2 \rfloor$ or $\lfloor \alpha / 2 \rfloor + 1$ (depending on the parity of $\alpha$) votes on \tx with timestamp larger than \rmax.
	On the other hand, from \Cref{lem:assign-min-max} certificate $\txcert^1$ contains at least $n - \alpha + \lfloor \alpha / 2 \rfloor + \beta$ votes on \tx with a timestamp smaller or equal than \rmax.
	As before, the replicas in the intersection of these two sets have sent conflicting votes for some sequence numbers.

	\dotparagraph{3}
	The \identify{} function never outputs honest replicas.

	\noindent
	\emph{Proof:} The function only adds a replica to \cheaters if given as input two vote messages from that replica, where the same sequence number is assigned to two different votes (line~\ref{line:identify-cheat} on \Cref{alg:identify}). An honest replica always increments \nextsn after each vote it inserts to its log (line~\ref{line:rep-increment-sn} on \Cref{alg:pod-core-replica}), hence, the adversary can only construct such verifying votes by forging a signature under the public key of an honest replica, which happens with negligible probability.
\end{proof}
\section{Proofs for additional \podtm properties}
In this section we prove the $\parTemp$-timeliness property for \podtm, as stated in~\Cref{app:monotonicity}.

\begin{theorem}[$\parTemp$-timeliness for honest transactions]
	\sloppy{
		For the conditions stated in \Cref{thm:pod-security}, \Cref{const:pod-core} satisfies \emph{\parTemp-timeliness for honest transactions} (\Cref{def:pod-timely}), for $\parTemp = \delta$, except with negligible probability.
	}
\end{theorem}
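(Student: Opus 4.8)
The plan is to reduce each of the three claims of \Cref{def:pod-timely} to a counting argument over the order statistics that \Cref{alg:client-statistics} uses to define $\rconf$, $\rmax$ and $\rmin$, resting on two facts. First, since the honest writer $\client_1$ sends $\tx$ to every replica at round $\round$, every non-Byzantine replica (the $\alpha$ honest ones and the $\gamma$ omission-faulty ones) that processes this message assigns $\tx$ a timestamp equal to its local round of receipt, which lies in $(\round, \round+\delta]$; and, arguing exactly as in \Cref{lem:pod-confirmation}, by round $\round+2\delta$ the client $\client_2$ has received the vote on $\tx$ from every one of the $\alpha$ honest replicas. Second, the resilience bound $n \ge 5\beta+3\gamma+1$ will be used in the two forms $\lfloor \alpha/2 \rfloor \ge 2\beta+\gamma$ and $\lceil \alpha/2 \rceil \ge 2\beta+\gamma+1$. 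I carry the argument out for a round $\round' \ge \round + 2\delta$, which is the regime in which the $\rmin/\rmax$ bounds are meaningful, and I assume, at the cost of negligible probability, that the adversary forges no signatures, so every vote $\client_2$ holds that is attributed to an honest replica is genuine and at most $\beta$ of the votes it holds for $\tx$ are adversarial.

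For the bound on $\rconf$ (claimed only when $\rconf \ne \bot$): confirmation means $\client_2$ has received $m \ge \alpha$ votes for $\tx$, and $\rconf$ is the entry at index $\lfloor m/2 \rfloor$ of the sorted list of their timestamps. Among the $\lfloor m/2 \rfloor + 1$ votes that are $\le \rconf$, at least $\lfloor m/2 \rfloor + 1 - \beta \ge \lfloor \alpha/2 \rfloor + 1 - \beta \ge 1$ come from non-Byzantine replicas, so $\rconf$ is at least one such replica's timestamp and hence $\rconf > \round$; symmetrically, of the $\lceil m/2 \rceil$ votes that are $\ge \rconf$, at least $\lceil \alpha/2 \rceil - \beta \ge 1$ are non-Byzantine, so $\rconf \le \round + \delta$. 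This part needs no assumption on $\round'$ beyond $\rconf \ne \bot$.

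For $\rmax$ and $\rmin$ I invoke \Cref{lem:assign-min-max}. By its part 2, $\rmax$ is the entry at index $n - \alpha + \lfloor\alpha/2\rfloor + \beta = 2\beta+\gamma+\lfloor\alpha/2\rfloor$ of the sorted $n$-element list in which a replica with no recorded vote on $\tx$ contributes $\infty$. In the regime $\round' \ge \round+2\delta$ all $\alpha$ honest replicas contribute genuine timestamps in $(\round,\round+\delta]$, so the only entries that can exceed $\round+\delta$ (or be $\infty$) are the $\le \beta$ from Byzantine replicas and the $\le \gamma$ $\infty$-fills of omission-faulty replicas — at most $\beta+\gamma$ entries, occupying the top of the list (indices $\ge n-(\beta+\gamma)=\alpha$). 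Since $2\beta+\gamma+\lfloor\alpha/2\rfloor \le \alpha-1$, which is exactly the inequality $2\beta+\gamma+1 \le \lceil\alpha/2\rceil$, the index of $\rmax$ falls strictly below that zone, so $\rmax$ equals an honest timestamp and $\rmax \in (\round,\round+\delta]$, in particular $\rmax \ne \infty$. Dually, by part 1, after prepending $\beta$ zeros $\rmin$ is the entry at index $\lfloor\alpha/2\rfloor$ of the $(n+\beta)$-element sorted list in which missing votes are filled with $\mrt[\rep_j]$; the only entries that can be $\le \round$ are the $\beta$ prepended zeros, the $\le \beta$ Byzantine entries, and the $\le \gamma$ $\mrt$-fills of omission-faulty replicas — at most $2\beta+\gamma$ entries, at the bottom — and since $\lfloor\alpha/2\rfloor \ge 2\beta+\gamma$, $\rmin$ lands on an entry $> \round$, hence $\rmin \ne 0$. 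Finally, $\rmax \le \round+\delta$ together with $\rmin > \round$ yields $\rmax - \rmin < \delta$, so $\parTemp = \delta$ works.

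The part I expect to be most delicate is the precise accounting in each list: tracking which entries the adversary controls (arbitrary Byzantine timestamps), which are $\infty$-fills, which are $\mrt$-fills, and the $\beta$ artificial extremes prepended/appended by \Cref{alg:client-statistics}, and then checking that $n \ge 5\beta+3\gamma+1$ is tight enough for the target indices to land on honest entries in every case, including the parity of $\alpha$ and the corner case of equality in the resilience bound. A secondary point the write-up must handle cleanly is that the bounds on $\rmin$ and $\rmax$ genuinely require $\round'$ to be large enough that $\client_2$ has heard about $\tx$ from all $\alpha$ honest replicas: before that, $\tsps[\tx]$ in $\client_2$'s state may be missing honest votes and $\rmax$ may still be $\infty$, so the statement is to be read for $\round' \ge \round + 2\delta$ (equivalently, via pod monotonicity, the bounds hold from that round onward).
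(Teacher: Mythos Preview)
Your proof is correct and follows essentially the same approach as the paper: both invoke the index characterization of \Cref{lem:assign-min-max} and a counting argument over honest versus adversarial/missing entries (using $\alpha \ge 4\beta+2\gamma+1$) to conclude that the order statistics defining $\rconf$, $\rmax$, and $\rmin$ land on non-Byzantine timestamps in $(\round,\round+\delta]$. Your explicit caveat that the $\rmin/\rmax$ bounds require $\round' \ge \round+2\delta$ so that all $\alpha$ honest votes have arrived is a point the paper's own proof leaves implicit.
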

\begin{proof}
	Assume an honest client \client calls \opwrite{\tx} at round \round.
	It sends a message \msgwrite{\tx} to all replicas at round \round (line~\ref{line:send-write-tx}).
	An honest replica receives this by round $\round + \delta$ and assigns its current round, which lies in the interval $(\round, \round + \delta]$, as the timestamp (line~\ref{line:rep-assign-tsp}).
	\begin{enumerate}
		\item Regarding \rconf, when a client calls \opreadall{} (after the point in time when \tx is confirmed, which happens after \parConf rounds from the property of \emph{confirmation within \parConf}), it receives votes on \tx from at least \alpha replicas. All honest replicas have sent timestamps for \tx in the interval $ (r, \round + \delta]$. Since \rconf is computed as the median of \alpha timestamps and $\alpha \geq 4\beta + 2\gamma + 1$,\footnote{For this argument on \rconf, $\alpha \geq 2\beta + 2\gamma + 1$ would also be enough. The condition $\alpha \geq 4\beta + 2\gamma + 1$ is necessary in order for \rmin and \rmax of a confirmed transaction to be timestamps returned by honest replicas.}
		we get $\lfloor \alpha / 2 \rfloor > 2\beta + \gamma$,
		hence \rconf will be a timestamp returned by an honest (not Byzantine and not omitting messages) replica, or it will lie between timestamps returned by honest replicas. Hence, $\rconf \in (r, \round + \delta]$.

		\item Regarding \rmax, from \Cref{lem:assign-min-max} (\rmax is the timestamp at index $n - \alpha + \lfloor \alpha / 2 \rfloor + \beta$ of \txVotes), there areat least $\alpha - \lfloor \alpha/2 \rfloor - \beta + 1 > \lfloor \alpha/2 \rfloor - \beta$ timestamps in \txVotes that bound \rmax from above. Since $\alpha \geq 4\beta + 2 \gamma + 1$, we get that $\lfloor \alpha / 2 \rfloor > 2\beta + \gamma$, hence $\lfloor \alpha / 2 \rfloor - \beta > \beta + \gamma$, hence at least one of those timestamps that bound \rmax is returned by an honest replica, hence  $\rmax \in (r, \round + \delta]$.

		\item Similarly, from \Cref{lem:assign-min-max} (\rmin is the timestamp at index $\lfloor \alpha / 2 \rfloor - \beta$ of \txVotes), there are at least $\lfloor \alpha / 2 \rfloor - \beta + 1$ timestamps in \txVotes that bound \rmin from below, and, since $\alpha \geq 4\beta + 2 \gamma + 1$, we get $\lfloor \alpha / 2 \rfloor - \beta + 1 > \beta + \gamma$.
		Hence, \rmin is a timestamp returned by an honest replica, hence $\rmin \in (r, \round + \delta]$ and $\rmax - \rmin < \parTemp$.
	\end{enumerate}
	The proofs hold except with negligible probability, as the adversary can forge a signature under the public key of an honest replica with a negligible probability.
\end{proof}

\section{Security of \bidsettm-core}\label{app:bidsetsec}



In this section, we recall and prove \Cref{thm:bidset-construction-secure}.

\begin{customthm}{\ref{thm:bidset-construction-secure}}[Bidset security]
    Assuming a synchronous network where $\delta \leq \Delta$,
    protocol \bidsettm-core (Construction~\ref{const:bidset}) instantiated with a digital signature and a secure \podtm protocol that satisfies the \emph{past-perfection within $\parPerf = \delta$}, \emph{confirmation within $\parConf=2\delta$} and \emph{$\delta$-timeliness} properties, is a secure \bidsettm protocol satisfying \emph{termination within $\parAuctionTerm = 3\Delta + \delta$}.
    It satisfies accountable safety with an \identifySequencer{} function that identifies a malicious sequencer.
\end{customthm}
\begin{proof}
	In Lemmas \ref{lem:bidset-termination}--\ref{lem:bidset-accountability}. The function for identifying a malicious sequencer is shown in \Cref{alg:identify-bidset}.
\end{proof}

\begin{lemma}[Termination within \parAuctionTerm]\label{lem:bidset-termination}
    Under the assumptions of \Cref{thm:bidset-construction-secure}, \Cref{const:bidset} satisfies \emph{termination within $\parAuctionTerm = \tzero + 3\Delta + \delta$}.
\end{lemma}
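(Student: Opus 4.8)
The plan is to read off, directly from the consumer code in \Cref{alg:bid-set-consumer}, that an honest consumer emits an \opResult{} event as soon as one of two conditions holds in its \podtm view: (i) a \emph{confirmed} result transaction of the form $\msgbidresult{(\bag,\bidcert,\sigma)}$ — call it $T$ — with $T.\rconf \le \tzero + 3\Delta$ appears in \txSet (\cref{line:bidset-find-confirmed-tx}); or (ii) the past-perfect round of the consumer's view satisfies $\rperf > \tzero + 3\Delta$ (\cref{line:bidset-read-result-pp}). It therefore suffices to show that, in every execution, one of (i), (ii) holds in the consumer's view by round $\tzero + 3\Delta + \delta$. I would split the argument into the honest-sequencer case, where (i) fires quickly, and the arbitrary-sequencer case, where (ii) always eventually fires, and then take the maximum of the two round bounds.

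For an honest sequencer, the first step is to bound the round $r_w$ at which it calls \opwrite{} on the result transaction $T$ (\cref{line:bidset-write-bag}). The sequencer runs an honest \podtm client, so by \emph{past-perfection within $\parPerf = \delta$} its view has $\rperf \ge r - \delta$ at every round $r \ge \delta$; hence the \textbf{while}-guard $\rperf \le \tzero + \Delta$ on \cref{line:bidset-wait-pp} fails by round $\tzero + \Delta + \delta$ (up to a one-round slack between the strict guard and the non-strict liveness bound), so $r_w \le \tzero + \Delta + \delta$. By \emph{confirmation within $\parConf = 2\delta$}, $T$ is observed as confirmed in every honest consumer's view by round $r_w + 2\delta \le \tzero + \Delta + 3\delta$, and by \emph{$\delta$-timeliness} (\Cref{def:pod-timely}) its confirmed round satisfies $T.\rconf \le r_w + \delta \le \tzero + \Delta + 2\delta \le \tzero + 3\Delta$, where the last inequality uses the synchrony hypothesis $\delta \le \Delta$. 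Thus condition (i) holds by round $\tzero + \Delta + 3\delta$, and $\Delta + 3\delta \le 3\Delta + \delta$ (again using $\delta \le \Delta$), so the consumer emits an event by round $\tzero + 3\Delta + \delta$; in fact this branch already terminates by round $\tzero + \Delta + 3\delta$, matching the remark after the theorem.

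For an arbitrary, possibly malicious, sequencer: if (i) ever fires we are done exactly as above, so suppose it never does. The consumer still runs an honest \podtm client and never stops looping, so \emph{past-perfection within $\parPerf = \delta$} again gives $\rperf \ge r - \delta$ at round $r$ in its view; hence by round $\tzero + 3\Delta + \delta$ (up to the same one-round slack) we have $\rperf > \tzero + 3\Delta$, condition (ii) fires, and the consumer emits \opResult{$\emptyset, \ppcert$}. Combining the two cases, an honest consumer always emits an \opResult{} event by round $\tzero + \parAuctionTerm$ with $\parAuctionTerm = 3\Delta + \delta$, which is the claim.

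The argument is essentially a timing chain and I do not expect a conceptual obstacle; the points that need care are bookkeeping. First, the off-by-one slack between the \emph{strict} comparisons guarding the \textbf{while} loop and the \textbf{elif} branch in the pseudocode and the \emph{non-strict} bound $\rperf \ge r - \parPerf$ from the liveness property — this is absorbed into an additive constant and can be made exact by shifting the thresholds ($\tzero + \Delta$ and $\tzero + 3\Delta$) by one round. Second, one must check that the honest-sequencer chain — $\Delta + \delta$ rounds to leave the \textbf{while} loop, then $2\delta$ more for confirmation, with the timeliness bound $T.\rconf \le r_w + \delta$ — actually fits under the budget $3\Delta + \delta$, and this is precisely the step that consumes the hypothesis $\delta \le \Delta$. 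Everything else — that the consumer indeed observes the sequencer's transaction, and that it keeps receiving fresh past-perfect rounds — is immediate from the stated \podtm liveness properties.
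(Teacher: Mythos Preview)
Your proposal is correct and the core argument matches the paper's: termination follows because the consumer's \textbf{elif} branch (\cref{line:bidset-read-result-pp}) fires once $\rperf > \tzero + 3\Delta$, which \emph{past-perfection within $\delta$} guarantees by round $\tzero + 3\Delta + \delta$. The paper's proof stops there and does not case-split on the sequencer at all; your honest-sequencer analysis is extra work that is not needed for the bare termination claim, though it does recover the tighter $\tzero + \Delta + 3\delta$ bound that the paper states separately as a remark after \Cref{thm:bidset-construction-secure} and reuses in the consistency lemma.
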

\begin{proof}
    The \opResult{} event is generated by an honest consumer when its exits the loop of lines~\ref{line:bidset-wait-pp-consumer}--\ref{line:bidset-consumer-loop-end} in \Cref{alg:bid-set-consumer}. At the latest, this happens when round $\tzero + 3\Delta$ becomes past-perfect (\cref{line:bidset-read-result-pp} in \Cref{alg:bid-set-consumer}), which, from the \emph{past-perfection within $\delta$} property of \podtm, happens at round at most $\tzero + 3\Delta + \delta$, hence $\parAuctionTerm = \tzero + 3\Delta + \delta$.
    We remark that a sequencer (\Cref{alg:bid-set-sequencer}) also terminates, because from the \emph{past-perfection within $\delta$} property of \podtm, the condition of line~\ref{line:bidset-wait-pp} becomes true by round $\tzero + \Delta + \delta$.
\end{proof}

\begin{lemma}[Censorship resistance]\label{lem:bidset-censorship}
    Under the assumptions of \Cref{thm:bidset-construction-secure}, \Cref{const:bidset} satisfies the \emph{censorship resistance} property.
\end{lemma}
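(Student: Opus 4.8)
The plan is to reduce \emph{censorship resistance} of \bidsettm-core to the \emph{past-perfection} safety property of the underlying \podtm, so that an honest bidder's transaction is forced into whatever non-empty bid-set an honest consumer accepts.

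First I would pin down the confirmation behaviour of the honest bid. Since the honest bidder calls \opSubmitBid{\bid} at round \tzero, it invokes $\podInst.\opwrite{\bid}$ at \tzero; by \emph{confirmation within $\parConf = 2\delta$} every honest \podtm client observes \bid as confirmed by round $\tzero + 2\delta$, and by \emph{$\delta$-timeliness} its confirmed round lies in $(\tzero,\tzero+\delta]$, hence $\bid.\rconf \leq \tzero + \delta \leq \tzero + \Delta$ in every honest view (using $\delta \leq \Delta$). Next I would trace how an honest consumer fires an \opResult{} event in \Cref{alg:bid-set-consumer}: either (i) a confirmed transaction $\msgbidresult{(\bag, \bidcert, \sigma)}$ with confirmed round at most $\tzero + 3\Delta$ lies in the consumer's \txSet, where the consumer additionally checks that $\sigma$ is the sequencer's signature on $(\bag,\bidcert)$ and that $(\bag,\bidcert)$ validates as a \podtm view whose past-perfect round exceeds $\tzero + \Delta$; or (ii) the consumer's own past-perfect round already exceeds $\tzero + 3\Delta$ with no such transaction, so it outputs $\bag = \emptyset$. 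Branch (ii) with an honest bidder present is the degenerate empty bid-set case, which the censorship-resistance statement is understood to exclude, consistently with \emph{weak consistency}'s explicit carve-out of $\bag=\emptyset$; moreover it cannot occur when the sequencer is honest, since such a sequencer exits its wait loop by round $\tzero+\Delta+\delta$ (by \emph{past-perfection within $\delta$}), writes $\msgbidresult{}$ then, and by \emph{confirmation within $2\delta$} with \emph{$\delta$-timeliness} that transaction is confirmed with round at most $\tzero+\Delta+2\delta \leq \tzero+3\Delta$ in every honest view, so \cref{line:bidset-find-confirmed-tx} triggers before \cref{line:bidset-read-result-pp}.

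The substantive step is branch (i). A $\msgbidresult{(\bag,\bidcert,\sigma)}$ accepted by an honest consumer certifies, by unforgeability of $\sigma$, that the sequencer committed to a valid \podtm view $\Pod_1$ with $\Pod_1.\txSet = \bag$ (in trace form) and $\Pod_1.\rperf > \tzero + \Delta$ --- precisely the exit condition of \cref{line:bidset-wait-pp} in \Cref{alg:bid-set-sequencer}. I then instantiate the \emph{past-perfection} property of \Cref{def:podsec} with this $\Pod_1$ and with $\Pod_2$ the consumer's own valid view (valid by \emph{completeness}), in which \bid appears confirmed with $\bid.\rconf \leq \tzero + \Delta < \Pod_1.\rperf$: the property forbids \bid from being absent in $\Pod_1.\txSet$, hence $\bid \in \bag$. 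For an honest sequencer, $\bag$ is literally the transaction set of its view at \cref{line:bidset-create-bag}, so nothing further is required.

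The main obstacle is making this argument still bite against a \emph{malicious} sequencer, which might try to publish a $\msgbidresult{}$ carrying a genuine past-perfection certificate for some $\rperf_s > \tzero+\Delta$ but a \bag that simply omits \bid. The point I would develop is that such a pair cannot pass the consumer's validation: \podtm-core chains each replica's votes by sequence numbers and the validator rejects sequence-number gaps, so a certificate pinning the past-perfect round above $\tzero+\Delta$ provably contains, from some honest replica, a contiguous prefix of votes whose timestamps reach past $\tzero+\Delta$; since $\alpha \geq 4\beta+2\gamma+1$ forces an honest replica into both the quorum that confirmed \bid and the set of replicas whose most-recent timestamp in the certificate exceeds $\tzero+\Delta$, that replica's vote on \bid (timestamp $\leq \tzero+\delta$) lies inside the certified prefix, so a validator recomputing $\opComputeTxSet$ from the certificate's votes must place \bid in the reconstructed trace set, contradicting $\bid\notin\bag$. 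This is the same reasoning as in \Cref{lem:pod-pp-safety}, which I would invoke directly; and the only remaining escape --- a malicious sequencer whose invalid output is rejected, pushing an honest consumer into the empty-bag branch (ii) --- is exactly the situation covered by accountability in \Cref{lem:bidset-accountability}, where \identifySequencer{} identifies the misbehaving sequencer from the honest bid's transaction certificate together with the sequencer's signed output.
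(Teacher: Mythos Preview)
Your core argument---bound the honest bid's confirmed round by $\tzero+\Delta$ via $\delta$-timeliness, then invoke the \emph{past-perfection} safety property against a view whose $\rperf$ exceeds $\tzero+\Delta$---matches the paper. But you over-engineer the proof in two ways that create a real gap.

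First, the paper proves this lemma \emph{assuming the sequencer is honest} (its opening sentence is literally ``Assume the sequencer is honest''), leaving censorship by a malicious sequencer to the accountability lemma. Under that assumption the argument is three lines: the honest sequencer's exit condition on \cref{line:bidset-wait-pp} gives $\rperf>\tzero+\Delta$ in its (valid, by completeness) view; $\delta$-timeliness gives $\rconf\leq\tzero+\delta\leq\tzero+\Delta$ for the honest bid in some honest view; past-perfection forces the bid into the sequencer's $\txSet$, which is exactly \bag. There is no need to branch on how the consumer fires \opResult{}.

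Second, and more seriously, you attribute checks to the consumer that are not in \Cref{alg:bid-set-consumer}: the consumer does \emph{not} verify $\sigma$, does \emph{not} run \opValid{} on $(\bag,\bidcert)$, and does \emph{not} check that the embedded $\rperf$ exceeds $\tzero+\Delta$; it merely looks for a confirmed \msgbidresult{} transaction with $\rconf\leq\tzero+3\Delta$. Your branch-(i) argument (``by unforgeability of $\sigma$, the sequencer committed to a valid \podtm view\ldots'') therefore rests on non-existent protocol steps, and your treatment of a malicious sequencer cannot go through as written. If you drop the malicious-sequencer discussion and simply take $\Pod_1$ to be the honest sequencer's view at the moment it builds \bag, your past-perfection invocation becomes correct and coincides with the paper's proof.
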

\begin{proof}
    Assume the sequencer is honest, and an honest bidder calls \opSubmitBid{\bid} at time \tzero. We will show that $\bid \in \bag$.
    First, the \podtm view $\podView{\auc}{\round}$ of the sequencer \auc on the round \round when it constructs \bag satisfies $\podView{\auc}{\round}.\rperf > \ttwo$.
    Second, from the \emph{confirmation within \parConf} property of \podtm, the transaction containing \bid becomes confirmed, and from the \emph{\parTemp-timeliness} property of \podtm, it gets a confirmation round $\rconf \leq \tzero + \parTemp$. For $\parTemp = \delta$, and since $\delta \leq \Delta$, we get that $\rconf \leq \ttwo$.
    Hence, from the past-perfection safety property of \podtm we get that $\bid \in \podView{\auc}{\round}$, and, since the sequencer is honest, $\bid \in \bag$.
\end{proof}

\begin{lemma}[Consistency]\label{lem:bidset-consistencys}
    Under the assumptions of \Cref{thm:bidset-construction-secure}, \Cref{const:bidset} satisfies the \emph{consistency} property.
\end{lemma}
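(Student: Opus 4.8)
The plan is to trace every non-empty consumer output back to a single, sequencer-signed, \emph{confirmed} \msgbidresult transaction, and then argue that all such transactions carry the same bid-set. Inspecting \Cref{alg:bid-set-consumer}, an honest consumer emits $\opResult{\bag, \bidcert}$ with $\bag \neq \emptyset$ only through \cref{line:bidset-find-confirmed-tx}. Hence, at the round when such a consumer leaves its loop, its \podtm view --- valid by the \emph{completeness} property of the underlying \podtm --- contains a transaction $\tx = \msgbidresult{(\bag, \bidcert, \sigma)}$ with $\rconf \leq \tzero + 3\Delta$ and with $\sigma$ a valid signature of the sequencer on $(\bag, \bidcert)$. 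By EUF-CMA security of the signature scheme, except with negligible probability this message was actually produced on \cref{line:bidset-write-bag} of \Cref{alg:bid-set-sequencer}; in particular, no other (possibly malicious) client can fabricate one.

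If the sequencer is honest, it executes \opRunAuction{} exactly once: the loop on \cref{line:bidset-wait-pp} terminates (by the \emph{past-perfection within $\parPerf = \delta$} property, exactly as in the proof of \Cref{lem:bidset-termination}), after which the sequencer writes a \emph{single} transaction $\msgbidresult{(\bag^{\star}, \bidcert^{\star}, \sigma^{\star})}$. Combined with the previous paragraph, the transaction underlying any consumer's non-empty output must be this one, so $\bag_1 = \bag^{\star} = \bag_2$ and we are done.

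The remaining case --- a malicious sequencer --- is where the real work lies, since it may write several syntactically distinct \msgbidresult transactions. I would close it by exploiting that a consumer only accepts $\msgbidresult{(\bag,\bidcert,\sigma)}$ together with a well-formed certificate $\bidcert$ (the past-perfection certificate the sequencer outputs): $\bidcert$ must be a valid \podtm past-perfection certificate attesting a past-perfect round strictly above $\tzero + \Delta$, with $\bag$ tied to the associated valid pod. Assuming $\bag_1 \neq \bag_2$, take $\tx^{*}$ in the symmetric difference, say $\tx^{*} \in \bag_1 \setminus \bag_2$; the two certificates then yield valid pods $\Pod_1, \Pod_2$ with $\tx^{*} \in \Pod_1.\txSet$, $\tx^{*} \notin \Pod_2.\txSet$ and $\Pod_2.\rperf > \tzero + \Delta$, which --- together with the \emph{confirmation bounds} and \emph{past-perfection} safety properties of the underlying \podtm --- should force a contradiction whenever $\tx^{*}$ is, or can ever become, confirmed with a confirmed round below $\Pod_2.\rperf$; handling a $\tx^{*}$ that can never be confirmed requires additionally constraining which transactions the consumer accepts into $\bag$. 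As a complement, \identifySequencer{} (\Cref{alg:identify-bidset}) shows that a malicious sequencer that gets conflicting \msgbidresult transactions confirmed is provably identified.

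The main obstacle is precisely this malicious-sequencer case: one must pin down exactly which validity check a consumer runs on the pair $(\bag,\bidcert)$ --- enough to tie $\bag$ both to the certificate and to the window $[\tzero, \tzero + 3\Delta]$ --- and then reduce consistency of the bidset layer cleanly to the \podtm's past-perfection and confirmation-bounds safety, rather than re-deriving it from scratch. The honest-sequencer case, by contrast, is essentially immediate once Step~1 is in place.
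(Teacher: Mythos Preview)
The paper's proof opens with ``Assume the sequencer is honest'' and never leaves that assumption; weak consistency against a malicious sequencer is \emph{not} claimed, and part~2 of \Cref{lem:bidset-accountability} explicitly says ``the consistency property can be violated if the sequencer writes two transactions'' and treats that case purely through accountability. So your entire third paragraph is pursuing a property that does not hold for this protocol: a malicious sequencer can get two distinct \msgbidresult{} transactions confirmed with $\rconf \leq \tzero + 3\Delta$, and nothing in \Cref{alg:bid-set-consumer} prevents two honest consumers from picking different ones. Your proposed reduction to \podtm past-perfection cannot close this, because the consumer never checks that $\bag$ is consistent with the pod induced by $\bidcert$; that missing validity check is exactly the obstacle you flag, and the paper does not add it either---it simply does not promise consistency in that regime.

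For the honest-sequencer case, the paper takes a different, forward-looking route rather than your signature trace-back. It argues: (i) by past-perfection within~$\delta$, the sequencer exits its loop and writes its unique \msgbidresult{} by round $\tzero + \Delta + \delta$; (ii) by $\delta$-timeliness and $\delta \leq \Delta$, that transaction receives $\rconf \leq \tzero + \Delta + 2\delta \leq \tzero + 3\Delta$; (iii) by past-perfection safety of \podtm, every honest consumer sees it as confirmed before its own $\rperf$ exceeds $\tzero + 3\Delta$, hence triggers \cref{line:bidset-find-confirmed-tx} before \cref{line:bidset-read-result-pp} and outputs the sequencer's $\bag$. Your EUF-CMA argument is a clean alternative for this case, but note it relies on the consumer verifying $\sigma$, which \Cref{alg:bid-set-consumer} as written does not do; the paper's timing argument never touches the signature at all.
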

\begin{proof}
    Assume the sequencer is honest, and two honest consumers generate events \opResult{$\bag_1, \cdot$} and \opResult{$\bag_2, \cdot$}.
    The condition in \cref{line:bidset-wait-pp} of \Cref{alg:bid-set-sequencer} becomes true in the view of sequencer by round $\tzero + \Delta + \delta$ (from the \emph{past-perfection within $\parPerf=\delta$} property of \podtm-core),
    hence the sequencer writes transaction \msgbidresult{(\bag, \bidcert, \sigma)} to \podtm by round $\tzero + \Delta + \delta$.
    This transaction gets assigned a confirmed round $\rconf \leq \tzero + \Delta + 2\delta$ (from the \emph{$\delta$-timeliness} property of \podtm) and,
    by assumption of a synchronous network, $\rconf \leq \tzero + 3 \Delta$.
    The condition in \cref{line:bidset-find-confirmed-tx} of \Cref{alg:bid-set-consumer} requires that a round $\round' > \tzero + 3 \Delta$ becomes past perfect.
    As $\round' > \rconf$, and by \emph{past-perfection safety} of \podtm, the consumer observes the transaction as confirmed before $\round'$ becomes past-perfect, hence the condition in \cref{line:bidset-find-confirmed-tx} becomes true before the condition in \cref{line:bidset-read-result-pp} and an honest consumer outputs $\opResult{\bag, \bidcert}$.
\end{proof}

\begin{lemma}[Accountable safety]\label{lem:bidset-accountability}
    Under the assumptions of \Cref{thm:bidset-construction-secure}, and assuming that \podInst is an instance of \podtm-core, \Cref{const:bidset} achieves \emph{accountable safety}, using the \identifySequencer{} function (\Cref{alg:identify-bidset}) to identify a malicious sequencer.
\end{lemma}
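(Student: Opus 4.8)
The plan is to establish the two conditions of \Cref{def:accountable-safety} for the function \identifySequencer{} of \Cref{alg:identify-bidset}: \emph{correctness} (a safety violation yields a partial transcript on which \identifySequencer{} outputs at least the sequencer) and \emph{no-framing} (on any honestly-produced transcript \identifySequencer{} outputs nothing). The key preliminary observation is that Lemmas~\ref{lem:bidset-censorship} and~\ref{lem:bidset-consistencys} already show that neither \emph{censorship resistance} nor \emph{weak consistency} can fail when the sequencer is honest (using $\delta\le\Delta$ and the three assumed pod properties). Hence every \bidsettm-core safety violation happens in an execution with a malicious sequencer, and it remains to pin the violation on that sequencer.

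First I would treat a \emph{weak-consistency} violation: two honest consumers emit \opResult{$\bag_1,\cdot$} and \opResult{$\bag_2,\cdot$} with $\bag_1,\bag_2\neq\emptyset$ and $\bag_1\neq\bag_2$. By inspection of \Cref{alg:bid-set-consumer}, a non-empty bid-set is emitted only via \cref{line:bidset-find-confirmed-tx}, after observing a \emph{confirmed} transaction $\msgbidresult{(\bag_i,\bidcert_i,\sigma_i)}$ carrying a signature $\sigma_i$ of the sequencer on $(\bag_i,\bidcert_i)$; as $\bag_1\neq\bag_2$, these are two distinct messages both validly signed by the sequencer. Collecting, from the two \opreadall{} outputs, the replica votes and certificates that witness these two transactions into a partial transcript \tran, \identifySequencer{\tran} detects the double-signing and returns the sequencer. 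By EUF-CMA security no other party could have produced $\sigma_1$ or $\sigma_2$ except with negligible probability, so the identified party is the malicious sequencer.

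Next I would treat a \emph{censorship-resistance} violation: an honest bidder submits \bid at \tzero and an honest consumer emits \opResult{\bag,\bidcert} with $\bid\notin\bag$. By the pod's \emph{$\delta$-timeliness} and \emph{confirmation within $2\delta$}, \bid becomes confirmed with $\rconf_\bid\le\tzero+\delta\le\tzero+\Delta$, and an honest reader can exhibit a valid transaction certificate $\txcert_\bid$ witnessing this. When $\bag$ comes from a confirmed, validly signed $\msgbidresult{(\bag,\bidcert,\sigma)}$, I would split on \bidcert: if \bidcert does not validate (via the pod's \opValid{} logic) as a past-perfection certificate attesting a past-perfect round strictly above $\tzero+\Delta$ for the claimed bid-set, then the sequencer has signed a malformed output and \identifySequencer{} outputs it; otherwise \bidcert is a valid pod certificate with past-perfect round $>\tzero+\Delta>\rconf_\bid$, and combining it with $\txcert_\bid$ produces two valid pod data structures --- one omitting \bid while having past-perfect round exceeding the confirmed round with which \bid is confirmed in the other --- which directly contradicts the pod's \emph{past-perfection} safety (\Cref{lem:pod-pp-safety}) and hence cannot occur except with negligible probability. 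The remaining possibility, $\bag=\emptyset$ obtained through \cref{line:bidset-read-result-pp}, corresponds to the sequencer never producing (or producing too late) a valid bid-result; since the $\emptyset$ output asserts nothing false about the bid-set, I would classify this as a liveness failure of the sequencer rather than an attributable safety violation, while noting that the consumer can still attach its own past-perfection certificate together with $\txcert_\bid$ as public evidence that a timely bid existed.

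Finally, for \emph{no-framing}, \identifySequencer{} blames the sequencer only when handed either two of its valid signatures on distinct bid-results or one valid signature on a bid-result whose certificate fails pod \opValid{}; an honest sequencer signs exactly one bid-result (\cref{line:bidset-sequencer-sign} of \Cref{alg:bid-set-sequencer}) whose certificate is a genuine pod past-perfection certificate of its own valid view, so neither trigger fires on an honest transcript unless the adversary forges the sequencer's signature, which is negligible. I expect the main obstacle to be the censorship case: one must ensure that whatever the consumer accepts as a bid-result carries enough verifiable data for the reconstructed pods to be \opValid{} in the exact sense required by \Cref{lem:pod-pp-safety}, and must carefully use the sequence-number chaining of replica votes so that a valid past-perfection certificate provably ``covers'' the censored bid.
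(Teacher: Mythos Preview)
Your weak-consistency case matches the paper's: two distinct signed \msgbidresult{} messages incriminate the sequencer directly.

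The censorship case, however, has a real gap. You argue that if \bidcert validates with past-perfect round $>\tzero+\Delta$, then pairing it with $\txcert_{\bid}$ yields two valid pod data structures violating \Cref{lem:pod-pp-safety}, so this branch ``cannot occur''. This fails for two reasons. First, $\bidcert=\ppcert$ contains only the most-recent vote from each replica, not the full \alltxcert; you cannot reconstruct a \opValid-valid pod $\Pod_1$ from it, let alone one whose \txSet omits \bid. Second, and more fundamentally, a malicious sequencer can read the pod \emph{honestly}, obtain \bid in its own \txSet together with a perfectly valid \ppcert, and then simply publish $\bag=\txSet\setminus\{\bid\}$. No pod safety property is violated here: \bag is not a pod data structure, it is just a set the sequencer claims. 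So your ``impossible'' branch is exactly the expected attack, and \identifySequencer{} must catch it there --- which it does, via the direct test ``$\rconf^*\le\tzero+\Delta$ and $\tx^*\notin\bag$'' (Check~3 in the paper's proof). Correctness for censorship is therefore established by walking through the checks of \Cref{alg:identify-bidset}, not by reducing to \Cref{lem:pod-pp-safety}.

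Your no-framing argument inherits the same omission: you say \identifySequencer{} fires only on double-signing or on a \bidcert that fails validation, but it also fires via Check~3. You must rule out an adversary framing an honest sequencer by presenting a $\txcert$ for some $\tx^*$ with median $\rconf^*\le\tzero+\Delta$ that the honest sequencer legitimately did not put in \bag. The paper does this by a counting argument (essentially reusing the mechanics behind \Cref{lem:pod-pp-safety}): among the $\ge\lfloor\alpha/2\rfloor+1$ votes in \txcert with timestamp $\le\tzero+\Delta$, at least $\lfloor\alpha/2\rfloor+1-\beta$ are from honest replicas; among the $\ge n-\lfloor\alpha/2\rfloor+\beta$ replicas whose most-recent timestamp exceeds $\tzero+\Delta$ in the honest sequencer's \ppcert, at least one is such an honest replica, which therefore already sent its vote on $\tx^*$ to the sequencer, so $\tx^*\in\bag$. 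That quorum-intersection step is the crux you are missing.
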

\begin{proof}
    Following \Cref{sec:accountable-safety}, we show an \identifySequencer{\tran} function (\Cref{alg:identify-bidset}), that, on input a partial transcript \tran outputs $\true$ when safety is violated due to misbehavior of the sequencer (i.e., it identifies the sequencer as malicious), and $\false$ if the sequencer is honest.
    We prove the theorem in three parts.
    \dotparagraph{1} For violations of censorship-resistance:

    \noindent
    Assume an honest bidder calls \opSubmitBid{\bid} at time \tzero and the network is synchronous.
    The transaction \tx containing \bid becomes confirmed, and any honest party can observe $(\tx, \rconf, \cdot, \cdot)$ and the corresponding transaction certificate \txcert in their view of the \podtm, as returned by \podtm-core.
    Assume \bid is censored, i.e., an event \opResult{\bag, \bidcert} is output by an honest consumer, such that $\bid \not \in \bag$.
    Let \sig be the signature of the sequencer in the corresponding \msgbidresult{(\bag, \bidcert, \sigma)} message written on \podInst.
    We will show how the sequencer can be made accountable, using $(\txcert, \bag, \bidcert, \sigma)$ as evidence \tran.
    In order for $(\txcert, \bag, \bidcert, \sigma)$ to be valid evidence, the following must hold:

    \emph{Requirement 1}: The signature \sig must be a valid signature, produced by the sequencer on message $(\bag, \bidcert)$, as per \cref{line:bidset-sequencer-sign} of the sequencer code (checked on \cref{line:bidset-identify-check-sig} of \Cref{alg:identify-bidset} -- we remind that notation `\textbf{require} $P$' returns $\false$ if $P$ evaluates to $\false$).


    \emph{Requirement 2}: \txcert must contain at least \alpha votes (checked on \cref{line:bidset-identify-check-large-txcert} of \Cref{alg:identify-bidset}), on the same transaction $\tx^*$ (checked on \cref{line:bidset-identify-check-same-tx}), signed by a \podInst replica (checked on \cref{line:bidset-identify-check-votes-txcert}).

    If any of these requirements are not met, \tran does not constitute valid evidence and the function exits. Otherwise, let $\rconf^*$ be the median of all votes in \txcert. The function makes the following checks, and if any of them fails, then the sequencer is accountable.

    \emph{Check 1}: Verify whether the votes that the sequencer has included in \bidcert are valid, obtained from the replicas that run \podInst (lines~\ref{line:bidset-identify-votes-begin}-\ref{line:bidset-identify-votes-end}). If this is not the case, the sequencer has misbehaved.

    \emph{Check 2}: Compute the \rperf from the timestamps found in the votes in \bidcert (lines~\ref{line:bidset-identify-rperf-begin}-\ref{line:bidset-identify-rperf-end}). This \rperf must be larger than $\tzero + \Delta$, as per \cref{line:bidset-wait-pp} of \Cref{alg:bid-set-sequencer}.

    \emph{Check 3}: If $\rconf^* \leq \tzero+\Delta$ but $\tx^*$ is not in the bag, the sequencer has misbehaved.

    \dotparagraph{2} For violations of consistency:

    \noindent
    The consistency property can be violated if the sequencer writes two transactions $\msgbidresult{(\bag_1, \cdot, \cdot)}$ and $\msgbidresult{(\bag_2, \cdot,\cdot)}$ to \podInst, such that $\bag_1 \neq \bag_2$, in which case $\bag_1$ and $\bag_2$ identify the sequencer. As this is a simpler case, we do not show it in \Cref{alg:identify-bidset}.

    \dotparagraph{3} A honest sequencer cannot be framed:

    \noindent
    Finally, we show that an honest sequencer cannot be framed.
    If the sequencer has followed \Cref{alg:bid-set-sequencer}, then \bidcert will contain valid votes, hence \emph{Check 1} will pass.
    Moreover, an honest sequencer waits until the past-perfect round returned by the \podtm is larger than $\tzero + \Delta$, hence \emph{Check 2} will pass.
    Regarding \emph{Check 3}, observe that
    for \identifySequencer{} to compute $\rconf^* \leq \tzero+\Delta$, \txcert must contain at least $\lfloor \alpha/2 \rfloor + 1$ votes on $\tx^*$ with a timestamp smaller or equal than $\tzero+\Delta$, and at least $\lfloor \alpha/2 \rfloor + 1 - \beta$ of them must be from honest replicas. Call this set $\allreps'$.
    The honest sequencer, in order to output a past-perfect round greater than $\tzero+\Delta$, must have received timestamps greater than $\tzero+\Delta$ from at least $n - \lfloor a/2 \rfloor + \beta$ replicas (from \Cref{lem:assign-min-max}).
    By counting arguments, at least one of these timestamps must be from one of the honest replicas in $\allreps'$, and, since honest replicas do not omit transactions, that replica will have sent a vote on $\tx^*$ to the sequencer. Hence, the honest sequencer will include $\tx^*$ in \bag.
\end{proof}

\begin{algorithm}
    \caption{The \identifySequencer{} function for \Cref{const:bidset}, instantiated with an instance of \podtm-core (\Cref{const:pod-core}) as \podInst, run by a set of replicas $\allreps = \{\rep_1, \cdots, \rep_n\}$ with public keys $\{\pk1, \cdots, \pk_n \}$, and using the \opMed{} operation defined by \Cref{const:pod-core}. It identifies a malicious sequencer, whose public key is $\pk_a$, by returning $\true$.}
    \label{alg:identify-bidset}
    \begin{algorithmic}[1]
        \Function{\identify{\tran}}{}
            \Let{(\txcert, \bag, \bidcert, \sigma)}{\tran}
            \State{\textbf{require} \opVerify{$\pk_a, (\bag, \bidcert), \sigma$}} \label{line:bidset-identify-check-sig}
            \Let{\txVotesPerf}{[ \, ]}
            \For{$\vote \in \bidcert$} \label{line:bidset-identify-votes-begin}
                \Let{(\tx, \tsp, \sn, \sig, \rep_j)}{\vote}
                \If{\opVerify{$\pk_j, (\tx, \tsp, \sn), \sig$ = 0}}
                    \State{\textbf{return true}} \label{line:bidset-identify-check-votes-ppcert}
                \EndIf
                \Let{\txVotesPerf}{\txVotesPerf \concat \tsp}
            \EndFor{}\label{line:bidset-identify-votes-end}

            \State{\text{sort \txVotesPerf in increasing order}} \label{line:bidset-identify-rperf-begin}
            \Let{\txVotesPerf}{ [ 0, \stackrel{\beta \text{ times}}{\ldots}, 0 ] \concat \txVotesPerf}
            \Let{\rperf}{\opMed{$\txVotesPerf[:\alpha]$}}
            \If{$\rperf \leq \tzero + \Delta$}
                \State{\textbf{return} $\true$}
            \EndIf \label{line:bidset-identify-rperf-end}

            \Statex{}
            \State{\textbf{require} $|\txcert| \geq \alpha$} \label{line:bidset-identify-check-large-txcert}
            \Let{\tx^*}{\txcert[0].\tx}
            \Let{\txVotes}{[ \, ]}
            \For{$\vote \in \txcert$}
                \Let{(\tx, \tsp, \sn, \sig, \rep_j)}{\vote}
                \State{\textbf{require} $\tx = \tx^*$} \label{line:bidset-identify-check-same-tx}
                \State{\textbf{require} \opVerify{$\pk_j, (\tx, \tsp, \sn), \sig$}} \label{line:bidset-identify-check-votes-txcert}
                \Let{\txVotesPerf}{\txVotesPerf \concat \tsp}
            \EndFor{}
            \Let{\rconf^*}{\opMed{\txVotes}}
            \If{$\rconf^* \leq \tzero+ \Delta \textbf{ and } \tx^* \not \in \bag$}
                \State{\textbf{return} $\true$}
            \EndIf \label{line:bidset-identify-inclusion-end}
        \EndFunction
\end{algorithmic}
\end{algorithm}

\end{document}